\newtheorem{algorithm}{Algorithm}
\def\qed{\hfill\BlackBox}
\newcommand{\ba}{\begin{array}}
	\newcommand{\ea}{\end{array}}
\newcommand{\ee}{\end{eqnarray}}
\newcommand{\es}{\end{split}\end{align}}
\newcommand{\F}{\mathcal{F}}
\newcommand{\X}{\mathcal{X}}
\newcommand{\Y}{\mathcal{Y}}
\newcommand{\Ep}{{\mathrm{E}}}
\renewcommand{\Pr}{{\mathrm{P}}}
\def\RR{ {\Bbb{R}}}
\newcommand{\nOn}[1]{{#1}^{-}}
\definecolor{DSgray}{cmyk}{0,1,0,0}
\begin{document}
	
	\title{Shape-Enforcing Operators for Generic Point and Interval  Estimators of Functions}
	
	\author{\name Xi Chen \email xc13@stern.nyu.edu\\
		\addr Stern School of Business\\
		New York University, New York, NY 10012, USA
		\AND
		\name Victor Chernozhukov \email vchern@mit.edu\\
		\addr Department of Economics + Center for Statistics and Data Science
		\\
		Massachusetts Institute of Technology, Cambridge, MA 02142, USA
		\AND
		\name Iv\'an Fern\'andez-Val \email ivanf@bu.edu\\
		\addr Department of Economics\\
		Boston University, Boston, MA 02215-1403, USA
		\AND
		\name Scott Kostyshak \email skostyshak@ufl.edu\\
		\addr Department of Economics\\
		University of Florida, Gainesville, Florida 32611-7140, USA
		\AND 
		\name Ye Luo \email kurtluo@hku.hk\\
		\addr HKU Business School\\
		The University of Hong Kong, Pok Fu Lam, Hong Kong}
	
	\editor{}
	
	\maketitle
	
	\begin{abstract}
		
		A common problem in econometrics, statistics, and machine learning is to estimate and make inference on functions that satisfy shape restrictions. For example, distribution functions are nondecreasing and range between zero and one, height growth charts  are nondecreasing in age, and production functions are nondecreasing and quasi-concave in input quantities. We propose a method to enforce these restrictions \textit{ex post} on generic unconstrained point and interval estimates  of the target function by applying functional operators. The interval estimates could be either frequentist confidence bands or Bayesian credible regions. If an operator has reshaping,  invariance, order-preserving, and distance-reducing  properties,  the shape-enforced point estimates are closer to the target function than the original point estimates and the shape-enforced interval estimates have greater coverage and shorter length than the original interval estimates. We show that these properties hold for six different operators that cover commonly used shape restrictions in practice: range, convexity, monotonicity, monotone convexity, quasi-convexity, and monotone quasi-convexity, with the latter two restrictions being of paramount importance.  The main attractive property of the post-processing approach is that it works in conjunction with \textit{any generic} initial point or interval estimate, obtained using  any of parametric, semi-parametric or nonparametric learning methods, including recent methods that are able to exploit either smoothness, sparsity, or other forms of structured parsimony of target functions. The post-processed point and interval estimates automatically inherit and provably improve these properties in finite samples, while also enforcing qualitative shape restrictions brought by scientific reasoning.  We illustrate the results with two empirical applications to the estimation of a height growth chart for infants in India and a production function for chemical firms in China.
	\end{abstract}
	
	\begin{keywords}
		 Shape Operator, Range, Monotonicity, Convexity, Quasi-Convexity, Rearrangement, Legendre-Fenchel, Confidence Bands, Credible Regions
	\end{keywords}
	
\section{Introduction}

A common problem in econometrics, statistics, and machine learning is to estimate and make inference on functions that satisfy shape restrictions. These restrictions might arise either from the nature of the function and variables involved or from theoretical reasons. Examples of the first case include distribution functions, which are nondecreasing and range between zero and one, and height growth charts, which are nondecreasing in age. Examples of the second case include demand functions of utility-maximizing individuals, which are nonincreasing in price according to consumer demand theory; production functions of  profit-maximizing firms, which are nondecreasing \textit{and} quasi-concave in input quantities according to production theory \textit{and} can also be concave in industries that exhibit diminishing returns to scale; bond yield curves, which are monotone and concave in time to maturity; and American and European call option prices, which are concave \textit{and} monotone in the underlying stock price and increasing in volatility, according to the arbitrage pricing theory.\footnote{Different, but similar shape restrictions apply to put prices,  with the American put price being log-concave in the stock price, for example.} 

We propose a method to enforce shape restrictions \textit{ex post} on any initial generic point and interval estimates of  functions by applying functional operators. If an operator has reshaping, invariance, order-preserving, and distance-reducing properties,  enforcing the shape restriction improves the point estimates and improves the coverage property of the interval estimates. Thus,  the shape-enforced  point estimates are closer to the target  function than the original point estimates under suitable distances, and the shape-enforced interval estimates have greater coverage and shorter length under suitable distances than the original interval estimates.  We show that these properties hold for six different operators that enforce the following restrictions: range, convexity, monotonicity, joint convexity and monotonicity, quasi-convexity, and joint quasi-convexity and monotonicity, as well as for combinations of range with all of the above.   We impose the range restriction with a natural operator that thresholds the estimates to the desired range. The double Legendre-Fenchel transform enforces convexity by transforming the estimates into their greatest convex minorants.  We focus on the monotone rearrangement to enforce monotonicity (though projection on isotone class can also be used in all composition results, as well as convex combinations of isotone projection with rearrangement). We further develop a new operator to enforce quasi-convexity---a shape that has not been well explored in the literature, although it is common in applications.  We also show that the compositions of the monotone rearrangement with the double Legendre-Fenchel and the new quasi-convexity operators yield monotone convex and monotone quasi-convex estimates, respectively.  In other words, the application of the convex and quasi-convex operators does not affect the monotonicity of the function. We further demonstrate how to modify the operators to deal with concavity, quasi-concavity, their composition with the monotonicity and range operators, and shape restrictions on transformations of the function.

Our method is generic in that it can be applied to any point or interval estimator of the target function. For example, it works in combination with parametric, semi-parametric and nonparametric approaches to model and estimate the target function. It works with modern machine and deep learning methods that are able to exploit either smoothness or structured parsimony (e.g., approximate sparsity) of target functions.  
Hence our post-processed point and interval estimates automatically inherit the rates of convergence of these estimators and provably improve these properties in finite samples, while also enforcing qualitative shape restrictions brought by scientific reasoning.  
Moreover, our method applies without modification to any type of function including reduced form statistical objects such as conditional expectation, conditional density, conditional probability and conditional quantile functions, or causal and structural objects such as dose response, production, supply and demand functions identified and estimated using instrumental variable or other methods. The only requirement to obtain consistent point estimators or valid confidence bands is that the source point estimators be consistent or the source confidence bands be valid. There are many existing methods to construct such point estimators and confidence bands under general sampling conditions, including obtained through frequentist, Bayesian or approximate Bayesian methods.\footnote{ Bayesian methods are often used to quantify the uncertainty of complicated methods where the frequentist quantification is intractable, for example, in deep learning problems. Like in the classical approach, one may impose constraints directly during the estimation, though this is often quite cumbersome and is rarely done in practice. The post-processing can be applied to the unconstrained estimates and  be justified on pragmatic grounds, ease of computation,  or desire to analyze data without restrictions and accept  a menu of restrictions ex-post only after validating them.} Under misspecification these requirements may not be satisfied, but the shape-enforcing operators will bring improvements to the point estimators and confidence bands in a sense that we will make precise. To implement our method, we develop  algorithms to compute the Legendre-Fenchel transform of multivariate functions and the new quasi-convexity enforcing operator.

We illustrate the theoretical results with two empirical applications to the estimation of a height growth chart for infants in India and a production function for chemical firms in China.  In the case of the growth chart, we impose natural monotonicity in the effect of age, together with concavity that is plausible during early childhood. In the case of the production function, we enforce that a firm's output is nondecreasing and quasi-concave in labor and capital inputs according to standard production theory. We also consider imposing concavity in the effect of the inputs. In both applications we use series least squares methods to flexibly estimate the conditional expectation functions of interest, and construct confidence bands using bootstrap.   We quantify the size of strict improvements that imposing shape restrictions bring to point and interval estimates in small samples through numerical simulations calibrated to the empirical applications.

\smallskip

\paragraph{\textbf{Literature Review.}}
Due to the wide range of applications of shape restrictions, shape-constrained estimation and inference have received a lot of attention in the statistics community. Classical examples include \cite{hildreth1954point}, \cite{ayer1955empirical}, \cite{brunk1955maximum}, \cite{vaneeden}, \cite{grenander1956theory}, \cite{groeneboom2001estimation}, and \cite{mammen1991estimating}. We refer to \cite{barlowstatistical} and \cite{robertsonorder} for classical references on isotonic regression for monotonicity restrictions, and to \cite{Koenker:10} for the work on log-concave density estimation.
In terms of  risk bounds for estimation, please refer to \cite{Zhang02}, \cite{GuntuAnnIso}, \cite{han2017isotonic}, and references therein for recent developments in isotonic regression; and \cite{kuosmanen2008representation}, \cite{seijo2011nonparametric}, \cite{Guntuboyina:15}, and \cite{Han:16:convex} for convex regression.   \cite{bellec2018sharp} established sharp oracle inequalities for least squares estimators, when only shape restrictions are known to hold. Moreover, \cite{hengartner1995finite}, \cite{dumbgen2003}, and \cite{anevski2006} considered the construction of confidence bands for univariate functions under monotonicity or convexity restrictions.  Please refer to the book \cite{groeneboom2014nonparametric} and the survey paper \cite{guntuboyina2017nonparametric} for more comprehensive reviews on estimation and inference under shape constraints.

Most existing works developed constrained methods via  maximum likelihood methods for regression or density estimation that impose only  shape restrictions and  produce constrained estimates without further restrictions.  We remark here that these direct approaches deliver advantages over our approach when such target functions are known to satisfy only the qualitative shape constraints. By contrast, our post-processing approach delivers advantages when \textit{any generic} target function, in addition to satisfying qualitative constraints, satisfies smoothness or other structured parsimony restrictions (e.g., sparsity).  Indeed, our method applies to generic problems, and is not tied to statistical parameters such as regression or density estimation.  To explain where the advantages arise, we note that the direct isotone  univariate regression converges to the true regression function at the $n^{-1/3}$ rate, which
is minimax optimal for the parameter space of monotone functions. If the target function is known to lie in  the space of smooth functions (H\"older or Sobolev with smoothness $s>1$), the better and optimal rate  \textcolor{black}{$n^{-s/(2s+1)}$} can be achieved by an unconstrained estimator  (e.g., \cite{Stone:80}), making the pure isotonic regression suboptimal in this case.  To fix the direct isotonic regression in this case, we would need to impose the smoothness constraints in the estimation directly, which ordinarily is not done in practice, let alone theoretically analyzed. (One exception here is \cite{Chernozhukov:15:constrained} that considered testing shape restrictions in Banach spaces, with the target
function being (possibly partially) identified by general conditional moment condition problems, where shape restrictions induce a lattice structure on the space). Smooth cases and other problems, where unconstrained estimators achieve optimal rates, provide the chief motivation for our approach: in such cases, our method automatically inherits the optimal rate and improves the finite sample properties of the estimator through the distance-reducing properties.  \textcolor{black}{On the other hand, unlike constrained estimators, unconstrained estimators require delicate choices of tuning parameters to achieve the optimal rate, although adaptive estimation and inference of smooth functions is possible using the method of  \cite{l91} \citep[e.g.,][]{ls97,gn10b,gn10,cck14}.} It is worthwhile noting that \cite{cl2018} investigated asymptotic properties of smoothed isotonic estimators, and 
\cite{jw2009}  studied the method of rearrangements for obtaining discrete monotone distributions. However, these works only consider very specific classes of shape-constrained estimators, i.e., isotonic and/or discrete estimators.

Another recurrent problem with imposing shape restrictions in estimation is that the derivation of the statistical properties of the constrained estimators is involved and specific to the estimator and shape restriction. As a consequence, there exist very few distributional results, mainly for univariate functions.  The results available for the Grenander and isotonic regression estimators show that these estimators exhibit non-standard asymptotics (including relatively slow rates, since smoothness conditions are not exploited); see \cite{guntuboyina2017nonparametric} for a recent review. 
\textcolor{black}{Moreover, \cite{Horowitz_Lee_2017} and \cite{Freyberger2017InferenceUS} have recently pointed out the difficulties of developing inference methods from shape-constrained estimators with good uniformity properties with respect to the data generating process. They showed that for shape restrictions defined by inequalities, the distribution of the constrained estimator depends on where the inequalities are binding, which is unknown a priori. Inference based on this distribution  therefore becomes sensitive to how close the inequalities are to binding relative to the sample size.}  We avoid all of these complications arising from the constrained estimators by enforcing the restrictions \textit{ex post} and therefore relying on the distribution of the unconstrained estimators (whenever it is available) to construct the confidence bands.  Our confidence interval method can also be applied on top of a different constrained estimator to provide potential improvements when the end-point functions of the generated confidence band do not themselves satisfy the restriction. It is worthwhile noting that the idea of \textit{ex post} confidence bands was mentioned in Section 4.2 in \cite{dumbgen2003}, which only discussed two cases on the univariate monotone function and univariate convex function. Moreover, the construction of confidence bands for the convex case in \cite{dumbgen2003} is quite different from ours (e.g., their lower bound is not necessarily a convex function).

Our paper generally follows the approach introduced in  \cite{CFG09}, which focused on producing improved generic point and interval estimates of monotone functions using the monotone rearrangement. The class of shape enforcing operators covered by our paper is much bigger and much more useful, with analysis being much more challenging, and we view both aspects as a substantial contribution of our paper.  Some of the operators that we consider have been analyzed previously in the literature. \cite{Dette:08} apply a smoothed rearranged operator to kernel estimators for monotonization purposes and derive pointwise limit theory. \cite{CFG10} applied the monotone rearrangement to deal with the quantile crossing problem and \cite{BCCF} to impose monotonicity in conditional quantile functions estimated using series quantile regression methods and construct monotonized uniform confidence bands.  \cite{beare2017weak} used the double Legendre-Fenchel transform to construct point and interval estimates of univariate concave functions on the non-negative half-line.  Other applications of the double Legendre-Fenchel transform include \cite{delgado2012distribution}, \cite{beare2015nonparametric}, \cite{beare2016empirical}. \cite{CFG10} and \cite{beare2017weak}  used an alternative approach to make inference on shape-constrained functions. Instead of applying the shape-enforcing operator to a confidence band constructed from an unconstrained estimator, they constructed confidence bands from the estimator after applying the shape-enforcing operator. To do so, they characterized the distribution of the constrained estimator from the distribution of the unconstrained via the delta method, after showing that the shape-enforcing operator is Hadamard or  Hadamard directional differentiable. This approach usually yields narrower confidence bands than ours,  but it is computationally more involved and requires additional assumptions and non-standard methods. For example, \cite{beare2017weak} showed that the bootstrap is inconsistent for the distribution of constrained estimators after applying the double Legendre-Fenchel transform when the target function is not strictly concave.  Finally, we refer to \cite{matzkin1994restrictions}, and \cite{chetverikov2017econometrics} for excellent, insightful up-to-date surveys on the use of shape restrictions in econometrics.

\smallskip

Relative to the literature, we summarize the major contributions of this paper as follows.  (1) We introduce an operator to enforce quasi-convexity and deliver  improved point and interval estimates of general multivariate quasi-convex functions. Quasi-convexity extends the notion of unimodality to multiple dimensions and generalizes convexity constraints. Despite its importance, the shape restriction of quasi-convexity has not been well studied in the literature and \cite{guntuboyina2017nonparametric} listed quasi-convexity as an open area in shape-constrained estimation. (2) We extend the use of the Legendre-Fenchel transform to construct improved point and interval estimates of general multivariate convex functions. (3) We show that the composition of the monotone rearrangement with the Legendre-Fenchel transform can be used to construct improved point and interval estimates of monotone convex functions. (4) We show that the composition of the monotone rearrangement with our quasi-convex operator can be used to construct improved point and interval estimates of monotone quasi-convex functions.  (The third and fourth contributions proved to be the most challenging and important steps, where the importance stems from shape restrictions  often being a composition of monotonicity with convexity or quasi-concavity). (5) We provide a new algorithm to compute the Legendre-Fenchel transform of multivariate functions. (6) We develop an algorithm to compute our quasi-convex operator. The main advantage of our approach is that it works in conjunction
with any generic point estimate (e.g., including recent machine and deep learning methods),  or any generic interval estimate (that can be a frequentist confidence band or a Bayesian credible region).  Because of genericity, it is able to exploit smoothness or other forms of structured parsimony through the use of the appropriate initial estimator.  It inherits the rate properties of the initial estimator, while delivering better finite sample properties through distance-reducing inequalities.\\

\paragraph{\textbf{Outline.}} The rest of the paper is organized as follows. Section~\ref{sec:op} introduces the functional shape-enforcing operators and their properties, together with examples of operators that enforce the shape restrictions of interest. Section~\ref{sec:inf} discusses the use of shape-enforcing operators to obtain improved point and interval estimates of functions that satisfy shape restrictions. Section~\ref{sec:alg} provides  algorithms to compute the shape-enforcing estimators. Section~\ref{sec:num} reports the results of two empirical applications and numerical simulations calibrated to the applications. Section~\ref{sec:con} concludes the paper. The proofs of the main results are gathered in the Appendix.

\smallskip

\paragraph{\textbf{Notation. }} For any measurable function $f: \X \to \RR$ and $p \geq 1$,  let $\|f \|_{p} := \left\{\int_{\X} |f(x)|^p dx\right\}^{1/p}$, the  $L^p$-norm of $f$, with $\|f \|_{\infty} := \sup_{x \in \X} |f(x)|$, the $L^{\infty}$-norm or sup-norm of $f$.  We drop the subscript $p$ for the Euclidean norm, i.e., $\|x\| := \|x\|_2$.  For $p\geq 1$, let $\ell^p(\mathcal{X}) := \{f: \X \to \RR : \|f\|_p < \infty  \}$, the class of all measurable functions defined on $\mathcal{X}$ such that the $L^p$-norms of these functions is finite. For $x,x'\in \mathbb{R}^k$, we say $x\geq x'$ if every entry of $x$ is no smaller than the corresponding entry in $x'$.  For two functions $f$ and $g$ that map $\X \to \RR$
we say that $f \leq g$ if $f(x) \leq g(x)$ for all $x \in \X$.
We also use $a\vee b:=\max(a,b)$ and $a\wedge b:=\min(a,b)$ for any $a,b \in \RR$. For two scalar sequences $a_n$ and $b_n$, the notation $a_n \sim b_n$ means that $a_n/b_n \to 1$ as $n \to \infty$. For an operator $\mathbf{O}: \ell^{\infty}_0(\X) \to \ell^{\infty}_0(\X)$, we use $\mathbf{\nOn{O}}f:=-\mathbf{O}(-f)$. For two operators $\mathbf{O_1}: \ell^{\infty}_0(\X) \to \ell^{\infty}_0(\X)$ and $\mathbf{O_2}: \ell^{\infty}_0(\X) \to \ell^{\infty}_0(\X)$, we define $\mathbf{O_{1}O_{2}}$ to be the composition $\mathbf{O_{1}} \circ \mathbf{O_{2}}$.

\section{Functional Shape-Enforcing Operators}\label{sec:op}

\subsection{Properties of Shape-Enforcing Operators}

Assume that the function of interest, $f$, is real-valued with domain $\mathcal{X} \subset \RR^k$, for some positive integer $k$.  Let $\ell^{\infty}(\X)$ be the set of bounded \textit{measurable} functions.
Let $\ell^{\infty}_0(\X)$ and  $\ell^{\infty}_1(\X)$ be two subspaces of $\ell^{\infty}(\X)$,  such that $\ell^{\infty}_1(\X) \subset \ell^{\infty}_0(\X)$; and let $\mathbf{O}: \ell^{\infty}_0(\X) \to \ell^{\infty}_0(\X)$ be a functional operator.  In our case, $\ell^{\infty}_0(\X)$ will be the class of unconstrained functions and $\ell^{\infty}_1(\X)$ will be the subclass of functions that satisfy some shape restriction. We first introduce three properties that an operator must satisfy to be considered a shape-enforcing estimator.

\begin{definition}[Shape-Enforcing Operator]\label{def:SE_operators} We say that an operator $\mathbf{O}$ is  $\ell^{\infty}_1$-enforcing with respect to $\ell^{\infty}_0(\X)$ if it satisfies the following properties:
	
	\begin{enumerate}
		\item Reshaping:  the output of the operator is a function that satisfies the shape restriction:
		\begin{equation}
			\mathbf{O} f \in \ell^{\infty}_1(\X), \textrm{ for any } f \in \ell^{\infty}_0(\X).
		\end{equation}
		
		\item Invariance: the operator should do nothing when the input function has already satisfied the shape restriction:
		\begin{equation}
			\mathbf{O} f =f, \textrm{ for any } f\in \ell^{\infty}_1(\X).
		\end{equation}
		
		\item Order Preservation: the output functions preserve original order:
		\begin{equation}
			\mathbf{O}f  \leq \mathbf{O} g, \textrm{ for any } f,g \in \ell^{\infty}_0(\X) \textrm{ such that } f\leq g.
		\end{equation}
	\end{enumerate}
	
\end{definition}

In addition to these properties, we consider the following
``distance contraction" property.

\begin{definition}[Distance-Reducing Operator]\label{def:dc} Let $\rho$ be a distance or semi-metric function on $\ell^\infty(\X)$. We say that an operator $\mathbf{O}$ is a $\rho$-distance contraction if the output functions are weakly closer than input functions
	under the $\rho$:
	\begin{equation}
		\rho(\mathbf{O} f,\mathbf{O} g)\leq \rho(f,g) \ \textrm{ for any } f,g \in \ell^{\infty}_0(\X).
	\end{equation}
\end{definition}

Some particularly interesting cases of constrained classes $\ell^{\infty}_1(\X)$ are  subsets of functions that satisfy shape restrictions. In this paper, we focus on seven types of shape restrictions:  (1) range, (2) convexity,  (3) monotonicity, (4) monotone convexity, (5) quasi-convexity, (6) monotone quasi-convexity, and (7) compositions of range with all of the above. Our methods also apply to the restrictions of concavity and quasi-concavity by noting that if $f$ is concave (quasi-concave), then $-f$ is convex (quasi-convex). In the case of monotonicity we focus on the case of monotonically nondecreasing functions. The  methods  also apply to monotonically  nonincreasing functions noting that if $f$ is nondecreasing then $-f$ is nonincreasing.

\textcolor{black}{
	\begin{remark}[Counterexample] There are operators to impose shape restrictions that do not satisfy the conditions of Definitions  \ref{def:SE_operators} and \ref{def:dc}. For example, $\mathbf{O} f = f/\|f\|_1$ is a natural operator to enforce that a non-negative function integrates to one in density estimation. This operator satisfies reshaping and invariance, but does not satisfy order preservation nor distance contraction for any $L^p$-norm.\end{remark}}

\subsection{Range Restrictions} We first consider the subset of range-constrained functions $\ell^{\infty}_{R}(\mathcal{X}):=\{f \in \ell^{\infty}(\mathcal{X}): \underline{f} \leq f(x)\leq \overline f \textrm{ for all } x \in \X\}$ for some constants $\underline{f} \leq \overline{f}$.  A natural range-enforcing operator is as follows.

\begin{definition}[$\mathbf{R}$-Operator] For any set $\X \subset \mathbb{R}^k$, the range operator
	$\mathbf{R} : \ell^{\infty}(\mathcal{X}) \to \ell^{\infty}(\mathcal{X})$ is defined by
	thresholding the values of the function $f$ to $[\underline{f}, \overline{f}]$.
	\begin{equation}
		\mathbf{R}f(x) := [\underline{f} \vee f(x)] \wedge \overline{f}, \ \textrm{for any } x \in \X.
	\end{equation}
\end{definition}

Let $d_p$ be the distance measure induced by the $L^p$-norm, i.e., $d_p(f,g) = \| f - g\|_p$ for any $f,g \in \ell^p(\X)$ and $p\geq1$. The following theorem shows that $\mathbf{R}$ is indeed range-enforcing and distance-reducing with respect to $d_p$.

\begin{theorem}[Range-Enforcing Operator]\label{RR}
	The operator $\mathbf{R}$ is $\ell^{\infty}_{R}$-enforcing with respect to $\ell^{\infty}(\X)$ and a  $d_p$-distance contraction  for any $p\geq1$.
\end{theorem}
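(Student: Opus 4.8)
The plan is to reduce everything to elementary properties of the scalar ``clamp'' map and then lift them pointwise, since $\mathbf{R}$ acts on $f$ separately at each $x$ through a single real function. Concretely, define $\phi:\RR\to\RR$ by $\phi(t):=(\underline{f}\vee t)\wedge\overline{f}$, so that $\mathbf{R}f(x)=\phi(f(x))$ for every $x\in\X$. Because $\phi$ is continuous and piecewise linear, $\mathbf{R}f$ is measurable whenever $f$ is, and it is bounded by construction, so $\mathbf{R}$ maps $\ell^{\infty}(\X)$ into itself. I would first record four facts about $\phi$: (i) $\underline{f}\le\phi(t)\le\overline{f}$ for all $t$; (ii) $\phi(t)=t$ whenever $\underline{f}\le t\le\overline{f}$; (iii) $\phi$ is nondecreasing; and (iv) $\phi$ is $1$-Lipschitz, that is, $|\phi(s)-\phi(t)|\le|s-t|$ for all $s,t\in\RR$.

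The three shape-enforcing conditions of Definition~\ref{def:SE_operators} then follow by applying (i)--(iii) pointwise. For \emph{reshaping}, property (i) gives $\underline{f}\le\mathbf{R}f(x)\le\overline{f}$ at every $x$, hence $\mathbf{R}f\in\ell^{\infty}_{R}(\X)$; the only care needed is to verify $\phi(t)\ge\underline{f}$, which uses $\underline{f}\le\overline{f}$ so that the outer $\wedge\,\overline{f}$ cannot pull the value below $\underline{f}$. For \emph{invariance}, if $f\in\ell^{\infty}_{R}(\X)$ then $\underline{f}\le f(x)\le\overline{f}$ for all $x$, and (ii) yields $\mathbf{R}f(x)=f(x)$. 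For \emph{order preservation}, $f\le g$ means $f(x)\le g(x)$ for each $x$, and monotonicity (iii) gives $\phi(f(x))\le\phi(g(x))$, i.e. $\mathbf{R}f\le\mathbf{R}g$.

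For the $d_p$-distance contraction I would invoke (iv) to obtain the pointwise bound $|\mathbf{R}f(x)-\mathbf{R}g(x)|=|\phi(f(x))-\phi(g(x))|\le|f(x)-g(x)|$ for every $x\in\X$, holding for all $f,g\in\ell^{\infty}(\X)$. Raising to the $p$-th power and integrating (by monotonicity of the integral, and taking the supremum in the $p=\infty$ analogue) gives $\|\mathbf{R}f-\mathbf{R}g\|_p^p\le\|f-g\|_p^p$; taking $p$-th roots yields $d_p(\mathbf{R}f,\mathbf{R}g)\le d_p(f,g)$ for any $p\ge1$ and any $f,g\in\ell^p(\X)$. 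The verification of (iv) is the only step with genuine content, and even it is immediate: $t\mapsto\underline{f}\vee t$ and $s\mapsto s\wedge\overline{f}$ are each $1$-Lipschitz, and a composition of $1$-Lipschitz maps is $1$-Lipschitz. Everything else is a routine transfer of a scalar identity to the function level.
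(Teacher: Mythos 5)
Your proof is correct. It shares the paper's basic strategy---everything is reduced pointwise to properties of the scalar clamp $t\mapsto(\underline{f}\vee t)\wedge\overline{f}$---but the way you verify the scalar facts is genuinely different and cleaner. The paper proves order preservation by a three-way case analysis on where $f(x)$ sits relative to $\underline{f}$ and $\overline{f}$, and proves the pointwise inequality $|\mathbf{R}f(x)-\mathbf{R}g(x)|\le|f(x)-g(x)|$ by an exhaustive five-case analysis on the relative positions of $f(x)$, $g(x)$, $\underline{f}$, $\overline{f}$. You replace both case analyses with structural observations: $\phi$ is nondecreasing as a composition of nondecreasing maps, and $\phi$ is $1$-Lipschitz because $t\mapsto\underline{f}\vee t$ and $s\mapsto s\wedge\overline{f}$ are each $1$-Lipschitz and Lipschitz constants multiply under composition. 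This buys brevity and robustness (the same argument handles any clamp-type operator without re-enumerating cases), at the cost of invoking the small lemma about Lipschitz compositions. You are also more careful than the paper on two points it leaves implicit: measurability of $\mathbf{R}f$ (via continuity of $\phi$), and the passage from the pointwise bound to the $d_p$ bound for general $p\ge 1$ (raising to the $p$-th power, integrating, taking $p$-th roots, with the supremum argument for $p=\infty$), which the paper never spells out after establishing the pointwise inequality.
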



\subsection{Convexity} Let $\mathcal{X}$ be a convex subset of $\Bbb{R}^k$. Let $\ell^{\infty}_{S}(\X) := \{ f \in \ell^{\infty}(\mathcal{X}) : \liminf_{x' \to x}  f(x') \geq f(x) \textrm{ for all }  x \in \X  \}$, the set of bounded lower semi-continuous functions on $\X$, and  $\ell^{\infty}_C(\X) := \{f \in \ell_S^{\infty}(\mathcal{X}): f(\alpha x + (1-\alpha) x') \leq \alpha f(x) + (1-\alpha) f(x') \textrm{ for all } x,x' \in \X, \alpha \in [0,1] \}$, the subset
of convex functions on $\X$. We consider the Double Legendre-Fenchel (DLF)  transform as a convexity-enforcing operator. To define this operator, we first recall the definition of the  Legendre-Fenchel transform (see, e.g.,  \cite{ConvAna:01}).

\begin{definition}[Legendre-Fenchel transform]\label{def:FL}
	For any convex set $\X \subset \mathbb{R}^k$ and $f \in \ell^{\infty}(\mathcal{X})$, let $f^*(\X) := \{\xi \in \RR^k: \sup_{x \in \X}\{\xi'x  - f(x) \} < \infty \}$. The Legendre-Fenchel transform $\mathbf{L}_{\X} : \ell^{\infty}(\mathcal{X}) \to \ell^{\infty}(f^*(\mathcal{X}))$  is defined by
	$$f^*(\xi) := \mathbf{L}_{\X} f (\xi):=\sup_{x \in \X}\{\xi'x-f(x)\}, \; \textrm{for any } \xi \in f^*(\mathcal{X}).$$
\end{definition}


The function $\xi \mapsto f^*(\xi)$ is a closed convex function (see Lemma~\ref{lemma:LF} in the Appendix) which is also called the convex conjugate of $f$, and the Legendre-Fenchel transform $\mathbf{L}_{\X}$ is also called the conjugate operator.
The Legendre-Fenchel transform is a
functional operator that maps any function $f$ to a function of its
family of tangent planes, which is often referred to as the dual function of $f$.

\begin{definition}[$\mathbf{C}$-Operator]\label{def:DFL} For any convex set $\X \subset \mathbb{R}^k$, the double Legendre-Fenchel operator $\mathbf{C} : \ell^{\infty}_{S}(\X) \to \ell^{\infty}_{S}(\X)$ is defined by the repeated application of the Legendre-Fenchel transform twice:
	$$
	\mathbf{C} f := \mathbf{L}_{f^*(\X)} \circ \mathbf{L}_{\X} f.
	$$
\end{definition}


Lemma \ref{lemma:DLF-basic} in the Appendix shows that the double Legendre-Fenchel operator maps any lower semi-continuous function $f$ to its greatest convex minorant, i.e., the largest  function $g \in \ell^{\infty}_C(\X)$ such that $g \leq f$.
The left panel of Figure~\ref{figCM} illustrates this property with a graphical example. We apply the  operator $\mathbf{C}$  to the function $f(x) = [10 \exp(3x/2) - \lfloor 10 x \rfloor - 10]/25$ on $\X = [0,1]$, where $\lfloor \cdot \rfloor$ is the floor function. The convexity-enforced function $\mathbf{C} f$ is the greatest convex minorant of $f$, $\mathbf{C} f(x) = 10[f(\overline{x})(x - \underline{x}) + f(\underline{x})( \overline{x} - x)]$ if $\overline{x} \neq \underline{x}$ and $\mathbf{C} f(x) = f(x)$ otherwise, where $\overline{x} = \lceil 10 x \rceil/10$, $\underline{x} = \lfloor 10 x \rfloor/10$, and $\lceil \cdot \rceil$ is the ceiling function.


\begin{figure}[!t]
	\centering
	\includegraphics[width=.49\textwidth]{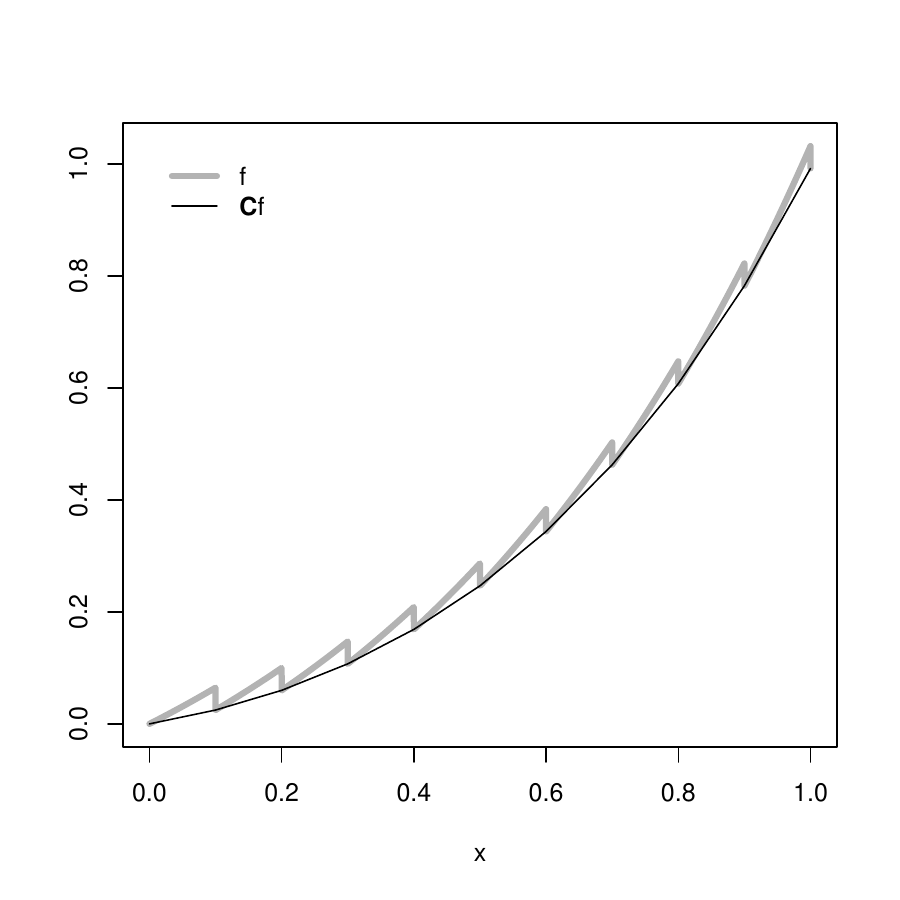}
	\includegraphics[width=.49\textwidth]{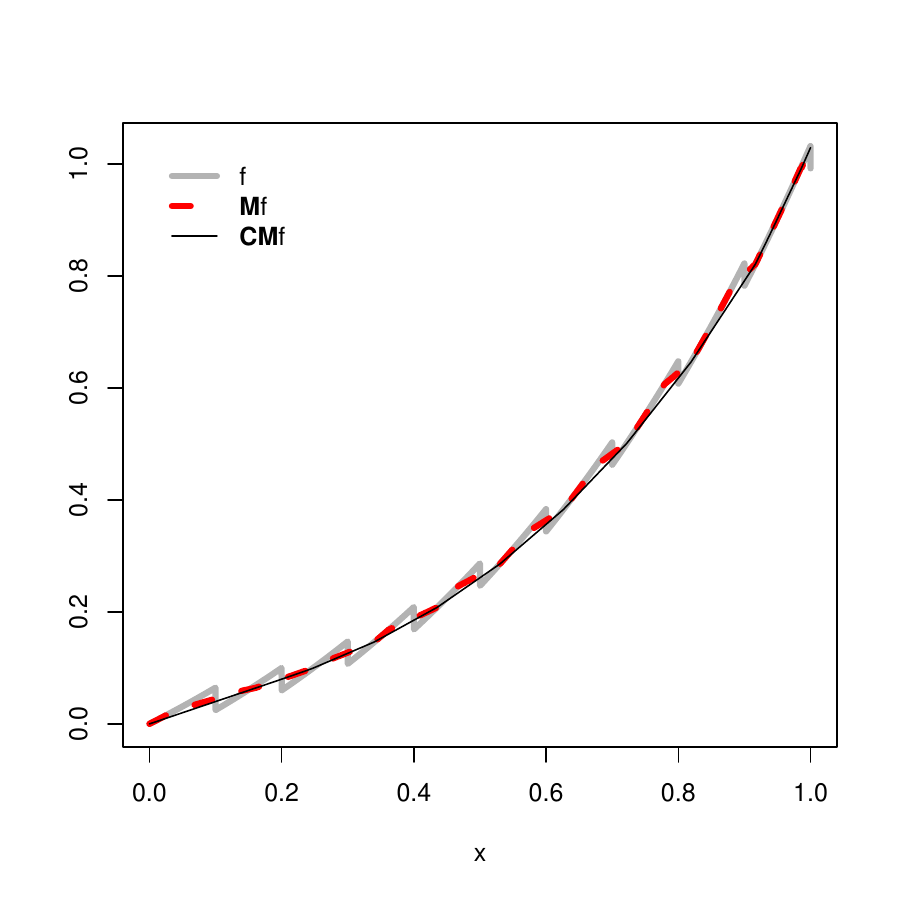}
	\caption{Left: unconstrained and convexity-enforced functions. Right: unconstrained, monotonicity-enforced, and (convexity and monotonicity)-enforced functions. The original function is $f(x) = [10 \exp(3x/2) - \lfloor 10 x \rfloor - 10]/25$.} \label{figCM}
\end{figure}

The following theorem is an immediate consequence of applying known results from convex analysis.

\begin{theorem}[Convexity-Enforcing Operator]\label{DFL}
	For any convex set $\X$, the operator $\mathbf{C}$ is $\ell^{\infty}_C$-enforcing with respect to $\ell^{\infty}_{S}(\X)$ and a $d_{\infty}$-distance  contraction.
\end{theorem}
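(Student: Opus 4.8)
The plan is to reduce every claim to the single structural fact recorded in Lemma~\ref{lemma:DLF-basic}, namely that $\mathbf{C}f$ is the \emph{greatest convex minorant} of $f$: the pointwise-largest function $h \in \ell^{\infty}_C(\X)$ with $h \leq f$. Once this characterization is in hand, the three shape-enforcing properties and the contraction property each follow from short order-theoretic arguments, the only genuinely new ingredient being the translation equivariance $\mathbf{C}(f+c) = \mathbf{C}f + c$ for constants $c$.

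First I would dispatch reshaping. By Lemma~\ref{lemma:LF} the map $\mathbf{L}_{f^*(\X)}\circ\mathbf{L}_{\X}f$ is a closed (hence lower semi-continuous) convex function, so $\mathbf{C}f$ is convex and lies in $\ell^{\infty}_S(\X)$; boundedness is inherited because if $\underline{m}\leq f \leq \overline{m}$ then the constant $\underline{m}$ is a convex minorant of $f$ while $\mathbf{C}f \leq f$, giving $\underline{m}\leq \mathbf{C}f \leq \overline{m}$. Hence $\mathbf{C}f\in\ell^{\infty}_C(\X)$. Invariance is immediate from the characterization: if $f\in\ell^{\infty}_C(\X)$ is already convex and lower semi-continuous, then $f$ is a convex minorant of itself and dominates every other convex minorant $h\leq f$, so it coincides with its own greatest convex minorant and $\mathbf{C}f=f$.

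Order preservation also falls out directly. Suppose $f\leq g$ with $f,g\in\ell^{\infty}_S(\X)$. Then $\mathbf{C}f$ is convex with $\mathbf{C}f \leq f \leq g$, so $\mathbf{C}f$ is a convex minorant of $g$; since $\mathbf{C}g$ is the greatest such minorant, $\mathbf{C}f \leq \mathbf{C}g$. For the $d_\infty$-contraction I would first establish the translation property: because $\mathbf{L}_{\X}(f+c)=\mathbf{L}_{\X}f - c$ on the same effective domain $f^*(\X)$ and the outer transform then restores the constant, one gets $\mathbf{C}(f+c)=\mathbf{C}f+c$ (equivalently, shifting $f$ by $c$ shifts its greatest convex minorant by $c$, since adding a constant preserves convexity). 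Setting $\delta:=\|f-g\|_\infty$, we have $g\leq f+\delta$ and $f\leq g+\delta$ pointwise; applying order preservation and translation yields $\mathbf{C}g\leq\mathbf{C}(f+\delta)=\mathbf{C}f+\delta$ and, symmetrically, $\mathbf{C}f\leq\mathbf{C}g+\delta$, whence $\|\mathbf{C}f-\mathbf{C}g\|_\infty\leq\delta$.

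The argument is almost entirely bookkeeping once Lemma~\ref{lemma:DLF-basic} is granted, so I do not expect a substantive obstacle at the level of the theorem; the only point requiring care is confirming that the greatest-convex-minorant description of $\mathbf{C}$ holds on all of $\ell^{\infty}_S(\X)$, which is exactly what that lemma supplies. The delicate analytic work---showing that the double transform really produces the greatest convex minorant, including the role of lower semi-continuity and of the conjugate domain $f^*(\X)$---is quarantined in the supporting lemmas, which is why the theorem itself reads as an immediate corollary of convex analysis.
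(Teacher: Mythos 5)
Your proof is correct, but it takes a genuinely different route from the paper's. The paper treats $\mathbf{C}$ as the square of the Legendre--Fenchel transform and deduces each property of $\mathbf{C}$ from the corresponding property of $\mathbf{L}$ in Lemma~\ref{lemma:LF}: reshaping and order preservation come from applying convexity (Lemma~\ref{lemma:LF}(2)) and order reversal (Lemma~\ref{lemma:LF}(3)) twice, invariance comes from the pointwise characterization that $f(x)=\mathbf{C}f(x)$ if and only if $f$ is convex at $x$ (Lemma~\ref{lemma:DLF-basic}(3)), and the $d_\infty$-contraction comes from Marshall's lemma for a single transform (Lemma~\ref{lemma:LF}(4)) applied twice. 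You instead funnel invariance, order preservation, and the contraction through the greatest-convex-minorant characterization (Lemma~\ref{lemma:DLF-basic}(1)), obtaining the contraction from translation equivariance $\mathbf{C}(f+c)=\mathbf{C}f+c$ together with order preservation via the sandwich $g \leq f + \|f-g\|_\infty$. Both arguments are sound, and both lean on the paper's supporting lemmas (you still need Lemma~\ref{lemma:LF} for reshaping, i.e., that the double transform is a bounded lower semi-continuous convex function); the difference is where the weight falls. The paper's route is the shortest path given Lemma~\ref{lemma:LF} and emphasizes duality: every property of $\mathbf{C}$ is the composition of two applications of the corresponding property of $\mathbf{L}$. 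Your route is more portable: it uses nothing about $\mathbf{C}$ beyond ``greatest minorant in a class closed under adding constants,'' so it transfers verbatim to other operators of this type---indeed, your translation-plus-order-preservation sandwich is exactly the argument the paper itself uses to prove the $d_\infty$-contraction of the quasi-convexity operator $\mathbf{Q}$ in the proof of Theorem~\ref{QC}. In that sense your proof unifies the treatment of $\mathbf{C}$ and $\mathbf{Q}$, at the cost of invoking the deeper Lemma~\ref{lemma:DLF-basic}(1) where the paper can get by with the more elementary transform properties.
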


\textcolor{black}{
	\begin{remark}[Shifted Convexity-Enforcing Operator]\label{remark:sc} The convexity-enforced function is a minorant of the original function (see fig. \ref{figCM}). When the original function is estimated, the application of the $\mathbf{C}$-operator might introduce downward bias, especially in small samples. A way of reducing bias is by shifting the $\mathbf{C}$-operator, that is
		$$
		\mathbf{SC} f(x) := \mathbf{C} f(x) + \lambda(\X)^{-1}  \int_{\X} [f(x) - \mathbf{C} f(x)] dx,
		$$ 
		where $\lambda(\X)$ is the Lebesgue measure of $\X$. The $\mathbf{SC}$-operator is reshaping and invariant, but does not preserve order nor reduce distance. We compare the $\mathbf{C}$-operator with the $\mathbf{SC}$-operator in the numerical examples of Section \ref{sec:num}.
\end{remark}}



\subsection{Monotonicity}\label{M} Let $\ell^{\infty}_M(\X) := \{f \in \ell^{\infty}(\mathcal{X}) : f(x') \leq f(x) \textrm{ for all }  x,x' \in \X \textrm{ such that } x' \leq x \}$, the set of bounded nondecreasing, measurable functions on $\X$.   
We consider the multivariate monotone rearrangement  of \cite{CFG09} as a monotonicity-enforcing operator.

%

\begin{definition}[$\mathbf{M}$-Operator]\label{def:M} For any rectangular set  $\mathcal{X}$ that is regular (i.e., has non-empty interior in $\Bbb{R}^k$), the multivariate increasing rearrangement operator $\mathbf{M}: \ell^{\infty}(\X) \to \ell^{\infty}_M(\X)$ is defined by
	$$
	\mathbf{M} f := \frac{1}{|\Pi|} \sum_{\pi \in \Pi}  \mathbf{M}_{\pi} f,
	$$
	where $\pi = (\pi_1, \ldots, \pi_k)$ is a permutation of the integers $1, \ldots, k$, $\Pi$ is a non-empty subset of all possible permutations $\pi$, and
	$
	\mathbf{M}_{\pi} f :=  \mathbf{M}_{\pi_1} \circ \cdots \circ \mathbf{M}_{\pi_k} f,
	$
	where $\mathbf{M}_{j} f$ is the one-dimensional increasing rearrangement applied to the function $x(j) \mapsto f(x(j), x(-j))$ defined by
	$$
	\mathbf{M}_{j} f (x):= \inf\left\{y \in \RR : \int_{\X(j)} 1\{f(t, x(-j)) \leq y\} dt \geq x(j)  \right\},
	$$
	the one-dimensional increasing rearrangement applied to the function $x(j) \mapsto f(x(j), x(-j))$. Here, we use $f(x(j), x(-j))$ to denote the dependence of $f$ on $x(j)$, the $j^{th}$-component of $x$, and all other arguments, $x(-j)$, and $\X(j)$ to denote the domain of $x(j) \mapsto f(x(j), x(-j))$.
\end{definition}


Proposition 2 of \cite{CFG09} showed that the multivariate increasing rearrangement is monotonicity-enforcing and distance-reducing with respect to $d_p$ for any $p \geq 1$. We state this result as a theorem for the purpose of completeness.

\begin{theorem}[Monotonicity Operator]\label{RA}
	For any regular rectangular set  $\mathcal{X}$, the operator $\mathbf{M}$ is $\ell^{\infty}_M$-enforcing with respect to $\ell^{\infty}(\X)$ and is a $d_{p}$-distance  contraction for any $p \geq 1$.
\end{theorem}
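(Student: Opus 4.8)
The plan is to reduce all four claimed properties to the corresponding one-dimensional facts about the scalar increasing rearrangement $\mathbf{M}_j$, and then lift them first to the composition $\mathbf{M}_\pi = \mathbf{M}_{\pi_1}\circ\cdots\circ\mathbf{M}_{\pi_k}$ and finally to the average $\mathbf{M} = |\Pi|^{-1}\sum_{\pi\in\Pi}\mathbf{M}_\pi$. The scalar building blocks I would record first, acting on a slice $x(j)\mapsto f(x(j),x(-j))$ over $\X(j)$ with Lebesgue measure, are: (i) $\mathbf{M}_j f$ is nondecreasing in $x(j)$ by construction, since $x(j)\mapsto\inf\{y:\int_{\X(j)}1\{f(t,x(-j))\le y\}\,dt\ge x(j)\}$ is a generalized quantile function; (ii) if a slice is already nondecreasing, its increasing rearrangement returns it unchanged; (iii) $\mathbf{M}_j$ is order-preserving, because $f\le g$ gives $\{f\le y\}\supseteq\{g\le y\}$, so the defining infimum for $f$ is taken over a larger set and is therefore no larger; and (iv) the scalar $L^p$ contraction $\int_{\X(j)}|\mathbf{M}_j f-\mathbf{M}_j g|^p\le\int_{\X(j)}|f-g|^p$ holds for each fixed $x(-j)$.

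Property (iv) is where the real content sits, and I expect it to be the main obstacle. It is the classical statement that the comonotone coupling minimizes the $L^p$ transport cost between two distributions: the pair $(f,g)$ on $\X(j)$ realizes one coupling of the one-dimensional laws of the two slices, while $(\mathbf{M}_j f,\mathbf{M}_j g)$ is the monotone (quantile) coupling, which is $L^p$-optimal. I would either invoke this as Proposition 2 of \citeasnoun{CFG09} or reprove it through the layer-cake/rearrangement inequality. Everything else is comparatively routine.

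For reshaping I would argue by induction over the factors of $\mathbf{M}_\pi$ that $\mathbf{M}_\pi f$ is nondecreasing in every coordinate. The innermost application $\mathbf{M}_{\pi_k}$ makes the function nondecreasing in coordinate $\pi_k$ by (i). The key lemma is that applying $\mathbf{M}_{j'}$ preserves any monotonicity already present in a coordinate $j\ne j'$: if $h$ is nondecreasing in $x(j)$, then for $x(j)\le\tilde x(j)$ the two $j'$-slices satisfy $h(\cdot,x(j))\le h(\cdot,\tilde x(j))$ pointwise, and by order preservation (iii) this inequality passes to $\mathbf{M}_{j'}h$; hence $\mathbf{M}_{j'}h$ remains nondecreasing in $x(j)$ while also becoming nondecreasing in $x(j')$. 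Iterating, $\mathbf{M}_\pi f$ is nondecreasing in all coordinates, so $\mathbf{M}_\pi f\in\ell^\infty_M(\X)$, and the average $\mathbf{M} f$ inherits monotonicity as a convex combination of monotone functions.

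Invariance and order preservation then lift immediately: for $f\in\ell^\infty_M(\X)$ every slice is nondecreasing, so by (ii) each $\mathbf{M}_j$, hence each $\mathbf{M}_\pi$, fixes $f$, giving $\mathbf{M} f=f$; and by (iii), $f\le g$ yields $\mathbf{M}_\pi f\le\mathbf{M}_\pi g$ for each $\pi$ as a composition of order-preserving maps, so $\mathbf{M} f\le\mathbf{M} g$. For the $d_p$-contraction I would combine (iv) with Fubini: since $\X$ is rectangular, $\|\mathbf{M}_j f-\mathbf{M}_j g\|_p^p=\int_{\X(-j)}\int_{\X(j)}|\mathbf{M}_j f-\mathbf{M}_j g|^p\,dx(j)\,dx(-j)\le\|f-g\|_p^p$, so each $\mathbf{M}_j$, and by composition each $\mathbf{M}_\pi$, is a $d_p$-contraction; Minkowski's inequality then gives $\|\mathbf{M} f-\mathbf{M} g\|_p\le|\Pi|^{-1}\sum_{\pi}\|\mathbf{M}_\pi f-\mathbf{M}_\pi g\|_p\le\|f-g\|_p$, which completes the argument.
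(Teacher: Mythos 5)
You should know at the outset that the paper contains no proof of this theorem to compare against: it is stated ``for the purpose of completeness,'' with the proof attributed entirely to Proposition 2 of \citeasnoun{CFG09}, and no argument appears in the appendix. Your blind reconstruction is correct, and it is in substance the same argument as in that cited reference: reduce everything to the scalar rearrangement $\mathbf{M}_j$ acting on slices, record the four one-dimensional facts, lift them to the composition $\mathbf{M}_\pi$ by induction, and pass to the average $\mathbf{M}$ by convexity and Minkowski. Your key lemma --- that $\mathbf{M}_{j'}$ preserves monotonicity already present in a coordinate $j \neq j'$, because pointwise-ordered slices have pointwise-ordered rearrangements by order preservation --- is exactly the inductive engine that makes the multivariate reshaping work, and your Fubini step is precisely where the hypothesis that $\X$ is a regular rectangle is consumed. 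You also correctly isolate the only step with genuine mathematical content, the scalar $d_p$-contraction, which is Lorentz's inequality (equivalently, the $L^p$-optimality of the monotone/quantile coupling); invoking it or reproving it via the layer-cake representation is legitimate either way. One caveat worth flagging: your fact (ii), exact invariance $\mathbf{M}_j f = f$ on monotone slices, holds only up to a jump-point convention --- the definition via $\inf\{y : \int 1\{f \leq y\}\,dt \geq x(j)\}$ returns the left-continuous version of a monotone slice, so equality can fail at discontinuity points (e.g., for a right-continuous step function). This is a measure-zero technicality that the paper itself inherits by delegating to \citeasnoun{CFG09}; it does not undermine your argument, but a careful write-up would either impose a continuity convention on $\ell^{\infty}_M(\X)$ or state invariance almost everywhere.
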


\begin{remark}[Isotonization Operators] Isotonization operators, i.e., projections on the set of weakly increasing functions, can also be considered in place of rearrangement and the results below apply to them. Here we 
	focus on the rearrangement for conciseness. \cite{CFG09} showed, for the one-dimensional case, that  isotonization and convex linear combinations of monotone rearrangement and isotonic regression are also  $\ell^{\infty}_M$-enforcing operators with respect to $\ell^{\infty}(\X)$ and $d_{p}$-distance  contractions for any $p \geq 1$. Extension to the multivariate case follows analogously  to \cite{CFG09} by an induction argument.

\end{remark}

\textcolor{black}{
	\begin{remark}[Multivariate Distributions] Multivariate distribution functions satisfy stronger shape restrictions than monotonicity. For example, in the bivariate case they are   2-increasing (supermodular) and grounded  \cite{nelsen07}. The grounded restriction can be enforced using a simple variation of the range operator. We are not aware of any operator that enforces supermodularity. 
	\end{remark}
}

\subsection{Convexity and Monotonicity}
Let  $\ell^{\infty}_{CM}(\X) := \ell^{\infty}_C(\X) \cap \ell^{\infty}_M(\X)$ be the set of bounded convex and nondecreasing functions on $\X$. We consider the composition of the $\mathbf{C}$ and $\mathbf{M}$ operators to enforce both convexity and monotonicity.

\begin{definition}[$\mathbf{CM}$-Operator] For any regular rectangular set  $\mathcal{X}$, the convex rearrangement operator $\mathbf{CM}: \ell^{\infty}_{S}(\X) \to \ell^{\infty}_{S}(\X)$ is defined by
	$$
	\mathbf{CM} f := \mathbf{C} \circ \mathbf{M} f.
	$$
\end{definition}

%

\begin{remark}[Rectangular Domain]\label{RM} 
	The $\mathbf{C}$-operator does not preserve monotonicity in general.\footnote{Let $\X \subset \mathbb{R}^2$ be a triangular set with vertices at $(-1,3)$, $(0,0)$ and $(3,-1)$. Then, the function $f(x) = 1+3(x_1+x_2)/2-|x_1-x_2|$ is increasing on $\X$, but its greatest convex minorant $\mathbf{C}f(x) = 1 - (x_1+x_2)/2$ is decreasing on $\X$.}  
	When $\X$ is a regular rectangle,   the $\mathbf{C}$-operator can be obtained by separate application to each face of the rectangle and does not affect the monotonicity of the function;  see Lemma~\ref{remark:Drectagle}  in the Appendix. 
	From a practical point of view, we do not find this assumption very restrictive because the domains usually have the product form $\X = [a_1,b_1] \times \cdots \times [a_k,b_k]$ in applications. If the domain of the target function is not rectangular, we can either restrict the analysis to a rectangular subset of the domain or  extend the function to a rectangular set that contains the domain.
\end{remark}

In the right panel of Figure~\ref{figCM}, we apply the  operators  $\mathbf{M}$ and $\mathbf{CM}$  to the function $f(x) = [10 \exp(3x/2) - \lfloor 10 x \rfloor - 10]/25$ on $\X = [0,1]$. The monotonicity-enforced function $\mathbf{M} f$ in dashed line is not convex, whereas the (convexity and monotonicity)-enforced function $\mathbf{CM} f$ is both monotone and convex. Indeed, $\mathbf{CM} f$ is the greatest convex minorant of $\mathbf{M} f$. 

\begin{theorem}[Convexity and Monotonicity Operator]\label{composition}
	For any regular rectangular set $\X$, the operator $\mathbf{CM}$ is  $\ell^{\infty}_{CM}$-enforcing with respect to $\ell^{\infty}_{S}(\X)$ and a $d_{\infty}$-distance  contraction.
\end{theorem}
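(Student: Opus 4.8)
The plan is to verify the four defining requirements of an $\ell^{\infty}_{CM}$-enforcing $d_{\infty}$-contraction—reshaping, invariance and order preservation (Definition~\ref{def:SE_operators}), together with distance reduction (Definition~\ref{def:dc})—by chaining the properties of $\mathbf{C}$ already established in Theorem~\ref{DFL} with those of $\mathbf{M}$ established in Theorem~\ref{RA}. The only genuinely new ingredient is that, on a regular rectangle, $\mathbf{C}$ does not destroy monotonicity; this is precisely the content of Lemma~\ref{remark:Drectagle}, and it is what forces the rectangularity hypothesis. Everything else is a routine, order-respecting composition of two contractions.

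First I would settle reshaping. For $f \in \ell^{\infty}_{S}(\X)$, Theorem~\ref{RA} gives $\mathbf{M} f \in \ell^{\infty}_{M}(\X)$; one must also check that $\mathbf{M} f \in \ell^{\infty}_{S}(\X)$, so that $\mathbf{C}$ is legitimately applicable to it, and this follows from the construction of the rearrangement as an infimum of (left-continuous, hence lower semi-continuous) one-dimensional quantile maps. Theorem~\ref{DFL} then yields $\mathbf{C}\mathbf{M} f \in \ell^{\infty}_{C}(\X)$, while Lemma~\ref{remark:Drectagle} guarantees that applying $\mathbf{C}$ to the monotone function $\mathbf{M} f$ preserves monotonicity, so $\mathbf{C}\mathbf{M} f \in \ell^{\infty}_{M}(\X)$ as well; hence $\mathbf{CM} f \in \ell^{\infty}_{C}(\X) \cap \ell^{\infty}_{M}(\X) = \ell^{\infty}_{CM}(\X)$. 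For invariance, take $f \in \ell^{\infty}_{CM}(\X)$: since $f \in \ell^{\infty}_{M}(\X)$ the invariance of $\mathbf{M}$ gives $\mathbf{M} f = f$, and since $f \in \ell^{\infty}_{C}(\X) \subset \ell^{\infty}_{S}(\X)$ the invariance of $\mathbf{C}$ gives $\mathbf{C} f = f$, so $\mathbf{CM} f = \mathbf{C}(\mathbf{M} f) = \mathbf{C} f = f$.

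Order preservation and the contraction property then both follow by composing the corresponding properties of the two factors. If $f \leq g$ with $f,g \in \ell^{\infty}_{S}(\X)$, order preservation of $\mathbf{M}$ gives $\mathbf{M} f \leq \mathbf{M} g$, and order preservation of $\mathbf{C}$ upgrades this to $\mathbf{C}\mathbf{M} f \leq \mathbf{C}\mathbf{M} g$, i.e.\ $\mathbf{CM} f \leq \mathbf{CM} g$. For distance reduction I would use that $\mathbf{M}$ is a $d_{\infty}$-contraction: Theorem~\ref{RA} states this for every finite $p \geq 1$, and on the bounded rectangle $\X$ one has $\|h\|_{\infty} = \lim_{p \to \infty}\|h\|_{p}$ for bounded $h$, so passing to the limit yields $d_{\infty}(\mathbf{M} f, \mathbf{M} g) \leq d_{\infty}(f,g)$. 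Combining this with the $d_{\infty}$-contraction of $\mathbf{C}$ from Theorem~\ref{DFL},
\begin{equation}
d_{\infty}(\mathbf{CM} f, \mathbf{CM} g) = d_{\infty}(\mathbf{C}\mathbf{M} f, \mathbf{C}\mathbf{M} g) \leq d_{\infty}(\mathbf{M} f, \mathbf{M} g) \leq d_{\infty}(f,g).
\end{equation}

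I expect the main obstacle to be localized entirely in Lemma~\ref{remark:Drectagle}. Without monotonicity preservation of $\mathbf{C}$ the composition could leave $\ell^{\infty}_{M}(\X)$, and indeed $\mathbf{C}$ need \emph{not} preserve monotonicity on a non-rectangular domain: the footnote to Remark~\ref{RM} exhibits an increasing function on a triangle whose greatest convex minorant is decreasing. The secondary technical point worth making precise is the lower semi-continuity of $\mathbf{M} f$, needed only to ensure that $\mathbf{C}$ is well defined on the intermediate function; once these two points are in hand, the proof is the generic chaining argument carried out above.
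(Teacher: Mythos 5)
Your overall architecture---reshaping, invariance, order preservation, and $d_{\infty}$-contraction obtained by chaining Theorem~\ref{DFL} with Theorem~\ref{RA}---coincides with the paper's, and your treatment of invariance, order preservation, and the contraction is correct (your explicit $p\to\infty$ limiting argument for the rearrangement is in fact more careful than the paper's bare citation of Theorem~\ref{RA}). The genuine gap is in the reshaping step. You write that Lemma~\ref{remark:Drectagle} ``guarantees that applying $\mathbf{C}$ to the monotone function $\mathbf{M}f$ preserves monotonicity.'' It does not: that lemma only establishes the face-restriction identity $\mathbf{C} f(x) = \mathbf{C}_{\X \mid \F_m} f(x)$ for $x \in \F_m$ (the wording of Remark~\ref{RM} may have suggested more, but the lemma's statement is just this identity). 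In the paper, monotonicity preservation of $\mathbf{C}$ on a regular rectangle is the heart of the proof of Theorem~\ref{composition} itself, established by induction on the dimension $k$: the base case $k=1$ argues by contradiction at the left endpoint of the interval using the representation $\mathbf{C}f(x')=\sum_j \alpha_j f(x_j)$ with $f(x_j)=\mathbf{C}f(x_j)$ from Lemma~\ref{lemma:DLF-basic}(2); the inductive step takes a putative violation $x \leq x'$ with $\mathbf{C}f(x) > \mathbf{C}f(x')$, follows the ray from $x'$ through $x$ until it exits $\X$ at a point $\underline{l}$ on a $(k-1)$-dimensional face, projects the representing points onto that face, uses monotonicity of $f$ together with Lemma~\ref{remark:Drectagle} to transfer the comparison onto the face, and then invokes the induction hypothesis there. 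None of this is routine, and none of it is supplied by your proposal; citing the lemma as if it contained the conclusion leaves the central claim of the theorem unproven.

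A secondary gap: your justification of $\mathbf{M}f \in \ell^{\infty}_{S}(\X)$---that it is ``an infimum of lower semi-continuous one-dimensional quantile maps''---does not work as stated, because an infimum of lower semi-continuous functions need not be lower semi-continuous (suprema preserve lower semi-continuity; infima do not). Left-continuity of the quantile map gives lower semi-continuity in the coordinate $x(j)$ being rearranged, but the issue is joint lower semi-continuity in $(x(j),x(-j))$; the paper proves this in Lemma~\ref{lemma:M-S} by a contradiction argument using the reverse Fatou lemma applied to the indicator integrands. You correctly flagged both of these points as the ones needing care, but flagging them is not the same as closing them.
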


\begin{remark}[Proof of Theorem~\ref{composition} and ordering of composition] The proof of Theorem~\ref{composition} does not follow from combining Theorems \ref{DFL} and~\ref{RA}. As indicated in Remark~\ref{RM}, the argument is more subtle as we need to verify that the application of the $\mathbf{C}$-operator preserves monotonicity. Moreover, the 
	order of the composition of the operators matters. Thus, $\mathbf{MC}:=\mathbf{M}\circ\mathbf{C}$ is not $\ell^{\infty}_{CM}$-enforcing  because the operator $\mathbf{M}$ does not preserve convexity in general.
\end{remark}

\begin{remark}[Concavity and  Monotonicity]\label{remark:concavity} Using the notation for inverse operators given in the introduction, we can construct composite operators for all the combinations of concavity/convexity and increasing/decreasing monotonicity restrictions. Thus, the operator $\mathbf{CM^-} f = \mathbf{C}[-\mathbf{M}(-f)]$ enforces convexity and decreasing monotonicity, $\mathbf{C^-M} f = - \mathbf{C}[-\mathbf{M}(f)]$ enforces concavity and increasing monotonicity, and $\mathbf{C^-M^-} f = - \mathbf{C}[\mathbf{M}(-f)]$ enforces concavity and decreasing monotonicity. It can be shown that these operators satisfy analogous properties to $\mathbf{CM}$ by a straightforward modification of the proof of Theorem~\ref{composition}.
\end{remark}

\subsection{Quasi-convexity}\label{qc} \textcolor{black}{Quasi-convexity is a global property of a function, which is weaker than convexity. A convex function must be quasi-convex, but a quasi-convex function is not necessarily convex. Intuitively, a function $f$ defined on a convex domain is quasi-convex if and only if all its level sets are convex. Please also see its definition in Eq.~\eqref{eq:quasi_def} below. Quasi-convexity is commonly used in economics because it is an ordinal property, preserved by monotone transformations, which represents well economic relationships such as  utility and production functions \cite{g04,Koenker:10,c17}. Moreover, quasi-convex functions have good optimization properties.}

Consider the set of bounded lower semi-continuous quasi-convex functions on $\X$: 
\begin{equation}\label{eq:quasi_def}
	\ell^{\infty}_Q(\X) := \{f \in \ell_S^{\infty}(\mathcal{X}): f(\alpha x + (1-\alpha) x') \leq \max\{f(x),f(x')\} \textrm{ for all } x,x' \in \X, \alpha \in [0,1] \}.
\end{equation}
We note that $\ell^{\infty}_C(\X) \subset \ell^{\infty}_Q(\X)$, and that for any $f \in \ell^{\infty}_Q(\X)$, the lower contour sets, $\mathcal{I}_f(y):=\{x \in \X : f(x) \leq y \}$,  are convex for all $y \in \RR$. For any set $\mathcal{Z} \subset \RR^k$, let $\mathrm{conv}(\mathcal{Z})$ denote the convex hull of $\mathcal{Z}$.  We consider the following new operator to impose quasi-convexity:

\begin{definition}[$\mathbf{Q}$-Operator]
	For any convex  and compact set $\X \subset \RR^k$, the quasi-convexity operator $\mathbf{Q}: \ell^{\infty}_S(\X) \to \ell_S^{\infty}(\X)$ is defined by
	\begin{equation}\label{def:Q}
		\mathbf{Q} f(x) := \min \left\{ y \in \RR : x \in \mathrm{conv}[\mathcal{I}_f(y)] \right\}.
	\end{equation}
\end{definition}


\begin{remark}[Existence of $\mathbf{Q}$-Operator] The restriction of the operator $\mathbf{Q}$ to $ \ell^{\infty}_S(\X)$, where $\X$ is convex and compact,  guarantees that the minimum in \eqref{def:Q} exists  (see Lemma~\ref{lemma:Q-def} in the Appendix). 
	When the set $\mathcal{X}$ is non-compact or the function $f\notin \ell^\infty_S(\mathcal{X})$, there exist counter examples such that the minimum in \eqref{def:Q} does not exists.\footnote{\textcolor{black}{For example, the function $f(x) = x[1(x \leq 0)/(2+x) - 1(x > 0)/(2-x)]$ if $|x| \neq 1$ and $f(1) = f(-1) = -1$ is not lower-semicontinuous on $\mathcal{X}=[-1,1]$. In this case,  for any $x \in (-1,1)$, $ \{y : x \in \mathrm{conv}[\mathcal{I}_f(y)] \} = (-1,+\infty)$, so that  $\mathbf{Q}(x)$ is not well-defined. The same problem arises if $\mathcal{X}:=(-1,1)$, i.e., the domain $\mathcal{X}$ is not compact.}} In such cases, one might still define $\mathbf{Q} f(x) := \inf \left\{ y \in \RR : x \in \mathrm{conv}[\mathcal{I}_f(y)] \right\}$, but this operator appears to lose the contraction property stated below. 
\end{remark}

%
The operator $\mathbf{Q}$ transforms any bounded lower semi-continuous function into a quasi-convex  function. 
To see this, recall that a function is quasi-convex if its domain and all its lower contour sets are convex.
By construction,  $x \in \mathrm{conv}[\mathcal{I}_f(y)] $ if and only if $\mathbf{Q} f(x) \leq y$. Therefore, the lower contour  set of $\mathbf{Q} f$ at any level $y \in \RR$ is $\mathcal{I}_{\mathbf{Q} f}(y) = \{x \in \mathcal{X} : \mathbf{Q} f(x)\leq y\} = \mathrm{conv}[\mathcal{I}_f(y)]$, which is a convex set.

The left panel of Figure~\ref{figCMQ} shows a graphical example. We apply the operator $\mathbf{Q}$ to the function $f(x) = [10 \exp(3x/2) - \lfloor 10 x \rfloor - 10]/25$ on $\X = [0,1]$. Here we can see that the function $\mathbf{Q} f$ is the greatest quasi-convex minorant of $f$, $\mathbf{Q} f(x) = \min\{f(x), f(\lceil 10 x \rceil /10)\} $.

\begin{figure}[!t]
	\centering
	\includegraphics[width=.49\textwidth]{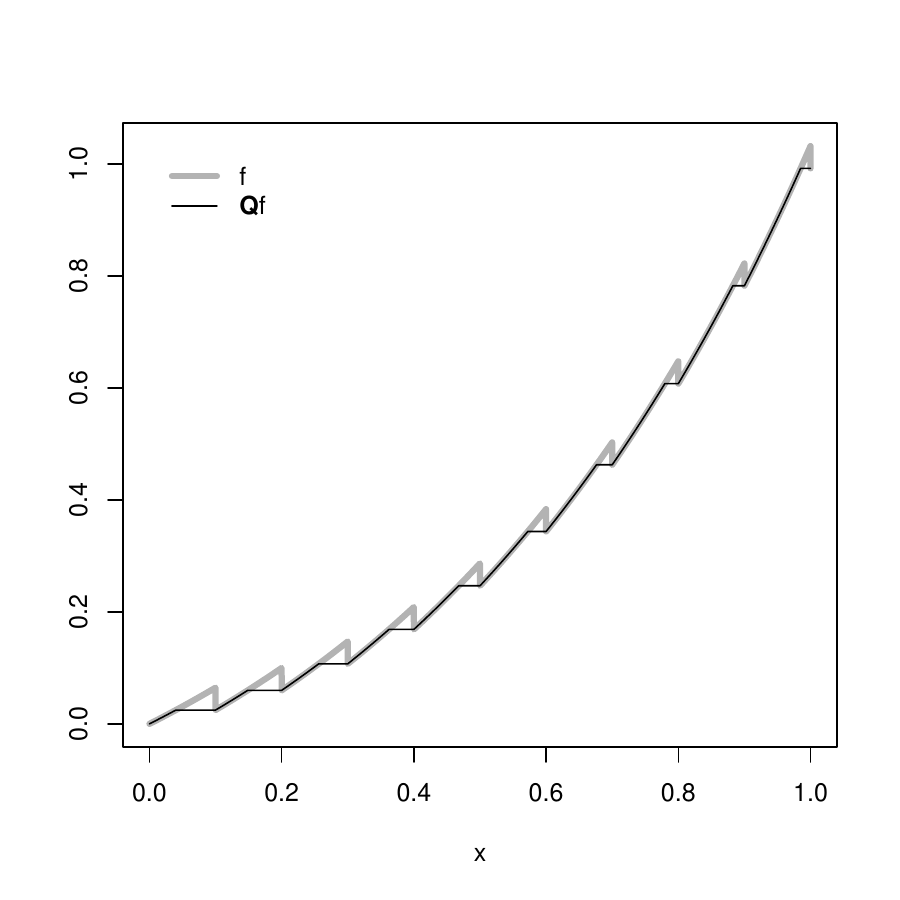}
	\includegraphics[width=.49\textwidth]{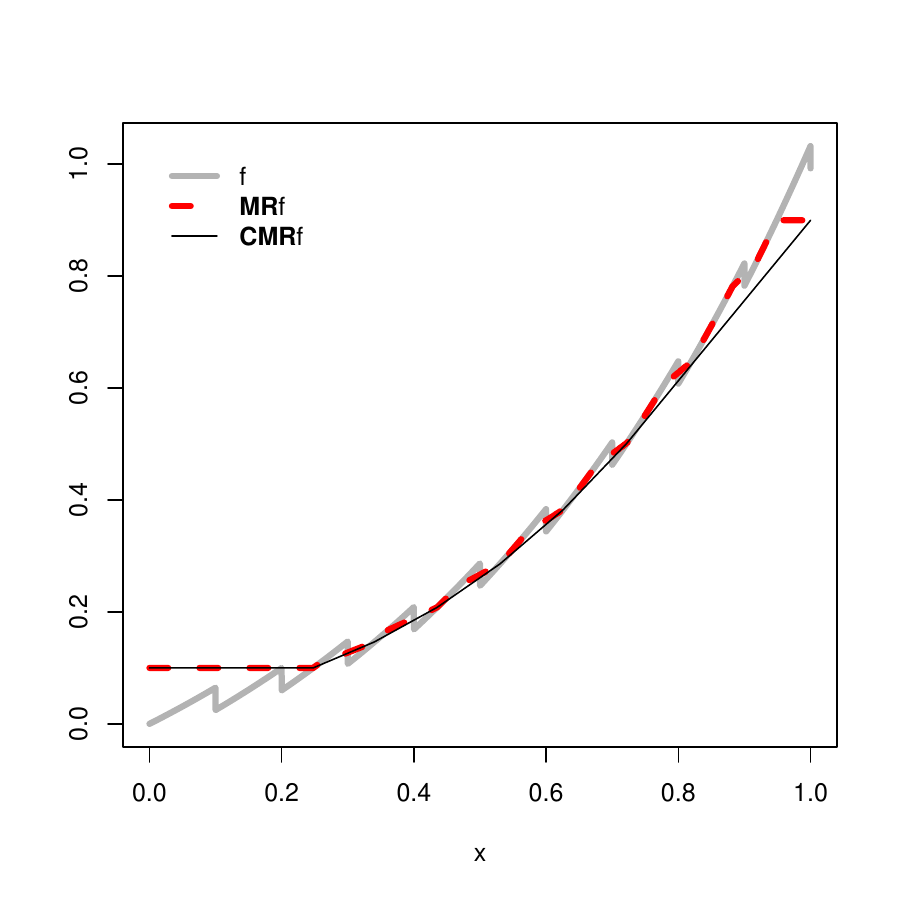}
	\caption{Left: unconstrained and quasiconvexity-enforced functions. Right: unconstrained and (convexity, monotonicity and range)-enforced functions. The original function is $f(x) = [10 \exp(3x/2) - \lfloor 10 x \rfloor - 10]/25$.} \label{figCMQ}
\end{figure}



\begin{theorem}[Quasi-Convexity Operator]\label{QC}
	For any convex and compact set $\mathcal{X}$, the operator $\mathbf{Q}$ is $\ell^{\infty}_Q$-enforcing with respect to $\ell_S^{\infty}(\X)$ and a $d_\infty$-distance contraction. 
	%
\end{theorem}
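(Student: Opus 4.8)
The plan is to reduce all four required properties to a single structural identity for the sub-level sets of $\mathbf{Q}f$, namely
\[
\{x \in \X : \mathbf{Q} f(x) \leq y\} = \mathrm{conv}[\mathcal{I}_f(y)], \qquad y \in \RR.
\]
First I would establish this identity. The inclusion ``$\supseteq$'' is immediate from the definition, since $x \in \mathrm{conv}[\mathcal{I}_f(y)]$ places $y$ in the admissible set of \eqref{def:Q}. For ``$\subseteq$'', I would use that $y \mapsto \mathcal{I}_f(y)$, and hence $y \mapsto \mathrm{conv}[\mathcal{I}_f(y)]$, is nondecreasing, together with the attainment of the minimum in \eqref{def:Q} (Lemma~\ref{lemma:Q-def}): if $y^* := \mathbf{Q}f(x) \leq y$, then $x \in \mathrm{conv}[\mathcal{I}_f(y^*)] \subseteq \mathrm{conv}[\mathcal{I}_f(y)]$. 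Attainment is exactly where compactness of $\X$ and lower semi-continuity of $f$ enter, via a Carath\'{e}odory-plus-compactness extraction argument on representing sequences.

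Given this identity, reshaping splits into three checks. Boundedness is clear since $\mathcal{I}_f(y) = \X$ for $y \geq \sup_{\X} f$ and $\mathcal{I}_f(y) = \emptyset$ for $y < \inf_{\X} f$, so $\mathbf{Q}f$ takes values in $[\inf_{\X} f, \sup_{\X} f]$. Quasi-convexity is immediate because every sub-level set of $\mathbf{Q}f$ equals a convex hull, hence is convex. Lower semi-continuity is the one point requiring care: I would argue that $\mathcal{I}_f(y)$ is closed (a sub-level set of an l.s.c.\ function) and bounded (a subset of the compact $\X$), hence compact, so that $\mathrm{conv}[\mathcal{I}_f(y)]$ is compact --- and therefore closed --- by Carath\'{e}odory's theorem in $\RR^k$; closedness of all sub-level sets is equivalent to lower semi-continuity. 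This yields $\mathbf{Q}f \in \ell^{\infty}_Q(\X)$.

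Invariance and order preservation then read off the identity. If $f \in \ell^{\infty}_Q(\X)$, its lower contour sets are already convex, so $\mathrm{conv}[\mathcal{I}_f(y)] = \mathcal{I}_f(y)$ and $\mathbf{Q}f(x) = \min\{y : f(x) \leq y\} = f(x)$. If $f \leq g$, then $\mathcal{I}_g(y) \subseteq \mathcal{I}_f(y)$ for every $y$, hence $\mathrm{conv}[\mathcal{I}_g(y)] \subseteq \mathrm{conv}[\mathcal{I}_f(y)]$, so the admissible set of $y$ in \eqref{def:Q} for $g$ is contained in that for $f$; comparing the minima gives $\mathbf{Q}f \leq \mathbf{Q}g$.

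Finally, for the $d_\infty$-contraction I would first record the shift-equivariance $\mathbf{Q}(f + c) = \mathbf{Q}f + c$ for any constant $c$, which follows from $\mathcal{I}_{f+c}(y) = \mathcal{I}_f(y - c)$ and a change of variables in the minimum. Writing $\delta := \|f - g\|_{\infty}$, we have $f \leq g + \delta$ and $g \leq f + \delta$ pointwise; applying order preservation and shift-equivariance gives $\mathbf{Q}f \leq \mathbf{Q}g + \delta$ and $\mathbf{Q}g \leq \mathbf{Q}f + \delta$, that is $\|\mathbf{Q}f - \mathbf{Q}g\|_{\infty} \leq \delta$. I expect the main obstacle to be the lower semi-continuity/attainment step, since this is precisely where the compactness of $\X$ and the convex-hull-of-a-compact-set machinery are indispensable; everything else is a formal consequence of the sub-level set identity.
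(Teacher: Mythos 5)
Your proposal is correct and follows the same overall route as the paper: both rest on the sub-level-set identity $\{x \in \X : \mathbf{Q}f(x)\le y\} = \mathrm{conv}[\mathcal{I}_f(y)]$, whose nontrivial half is attainment of the minimum in \eqref{def:Q}, proved by the same Carath\'eodory-plus-compactness extraction as the paper's Lemma~\ref{lemma:Q-def}; and both obtain the $d_\infty$-contraction from shift-equivariance $\mathbf{Q}(g+c)=\mathbf{Q}g+c$ together with order preservation applied to $g-\delta\le f\le g+\delta$. The one genuine difference is how lower semi-continuity of $\mathbf{Q}f$ is established: the paper proves it by contradiction with a second, separate sequence-extraction argument (the second half of Lemma~\ref{lemma:Q-def}), whereas you note that $\mathcal{I}_f(y)$ is compact (a closed sub-level set of an l.s.c.\ function inside the compact $\X$), that the convex hull of a compact set in $\RR^k$ is compact by Carath\'eodory's theorem, and that closedness of all sub-level sets is equivalent to lower semi-continuity. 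Your version of this step is shorter and cleaner, since it reuses the sub-level-set identity you already proved instead of running a fresh limiting argument; what it does not avoid is the extraction argument itself, which is still indispensable for attainment, so the essential analytic core of the two proofs coincides. A minor remark: your containment directions in the order-preservation step are stated correctly ($f\le g$ gives $\mathcal{I}_g(y)\subseteq\mathcal{I}_f(y)$), whereas the paper's write-up of that step has the inclusions reversed, though its conclusion $\mathbf{Q}f\ge\mathbf{Q}g$ for $f \ge g$ is the right one.
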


%


\textcolor{black}{
	\begin{remark}[Shifted Quasi-Convexity-Enforcing Operator]\label{remark:sq} By similar reasons to  Remark \ref{remark:sc}, we introduce the shifted quasi-convexity-enforcing operator:
		$$
		\mathbf{SQ} f(x) := \mathbf{Q} f(x) + \lambda(\X)^{-1}  \int_{\X} [f(x) - \mathbf{Q} f(x)] dx,
		$$ 
		where $\lambda(\X)$ is the Lebesgue measure of $\X$. 
\end{remark}}



\subsection{Quasi-Convexity and Monotonicity}
Let  $\ell^{\infty}_{QM}(\X) := \ell^{\infty}_Q(\X) \cap \ell^{\infty}_M(\X)$ be the set of bounded quasi-convex and partially nondecreasing functions on $\X$. This case is only relevant when $k>1$ because univariate monotone functions are quasi-convex. We consider the composition of the $\mathbf{Q}$ and $\mathbf{M}$ operators to impose both quasi-convexity and monotonicity.

\begin{definition}[$\mathbf{QM}$-Operator] For any regular rectangular set $\X$, the quasi-convex rearrangement operator $\mathbf{QM}: \ell^{\infty}_S(\X) \mapsto \ell^{\infty}_{S}(\X)$ is defined by
	$$
	\mathbf{QM} f := \mathbf{Q} \circ \mathbf{M} f.
	$$
\end{definition}

\begin{theorem}[Quasi-Convexity and Monotonicity Operator]\label{composition2}
	For any regular rectangular set  $\mathcal{X}$, the operator $\mathbf{QM}$ is  $\ell^{\infty}_{QM}$-enforcing with respect to $\ell_S^{\infty}(\X)$ and a $d_\infty$-distance contraction.
\end{theorem}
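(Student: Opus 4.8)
The plan is to obtain three of the four required properties by directly composing the corresponding properties of $\mathbf{M}$ (Theorem~\ref{RA}) and $\mathbf{Q}$ (Theorem~\ref{QC}), and to spend the real effort on the reshaping property. For invariance, if $f \in \ell^{\infty}_{QM}(\X) = \ell^{\infty}_Q(\X) \cap \ell^{\infty}_M(\X)$ then $f$ is nondecreasing, so $\mathbf{M}f = f$ by invariance of $\mathbf{M}$; moreover $f$ is quasi-convex and lower semi-continuous (since $\ell^{\infty}_Q(\X) \subset \ell^{\infty}_S(\X)$), so $\mathbf{Q}f = f$ by invariance of $\mathbf{Q}$, giving $\mathbf{QM}f = \mathbf{Q}(\mathbf{M}f) = \mathbf{Q}f = f$. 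Order preservation is immediate: $f \leq g$ yields $\mathbf{M}f \leq \mathbf{M}g$ and then $\mathbf{Q}(\mathbf{M}f) \leq \mathbf{Q}(\mathbf{M}g)$. For the $d_\infty$-contraction I would chain the two contraction inequalities, $d_\infty(\mathbf{QM}f, \mathbf{QM}g) \leq d_\infty(\mathbf{M}f, \mathbf{M}g) \leq d_\infty(f,g)$, using the $d_\infty$-contraction of $\mathbf{Q}$ from Theorem~\ref{QC} and of $\mathbf{M}$ from Theorem~\ref{RA}.

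The substance is reshaping: $\mathbf{QM}f \in \ell^{\infty}_{QM}(\X)$ for every $f \in \ell^{\infty}_S(\X)$. Two points must be checked. First, the composition must be well defined, i.e.\ $h := \mathbf{M}f$ must lie in the domain $\ell^{\infty}_S(\X)$ of $\mathbf{Q}$; boundedness is clear, and lower semi-continuity I would obtain from the structure of $\mathbf{M}$ (each one-dimensional rearrangement is a left-continuous quantile function, hence lower semi-continuous, a property I expect to be preserved under the coordinatewise compositions and the finite positively weighted averaging defining $\mathbf{M}$). Second, granting that $h$ is lower semi-continuous, Theorem~\ref{QC} already gives $\mathbf{Q}h \in \ell^{\infty}_Q(\X)$, so it remains only to show $\mathbf{Q}h \in \ell^{\infty}_M(\X)$, i.e.\ that $\mathbf{Q}$ preserves monotonicity on a rectangle. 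This is the analogue, for $\mathbf{Q}$, of the fact invoked for $\mathbf{C}$ in Theorem~\ref{composition} (Remark~\ref{RM}), and I expect it to be the main obstacle.

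I would establish monotonicity preservation through a lower-set lemma. Since $h$ is nondecreasing, each lower contour set $\mathcal{I}_h(y) = \{x \in \X : h(x) \leq y\}$ is a \emph{lower set}: if $x \in \mathcal{I}_h(y)$ and $x' \in \X$ with $x' \leq x$, then $h(x') \leq h(x) \leq y$, so $x' \in \mathcal{I}_h(y)$. The key claim is that, when $\X$ is a rectangle, the convex hull of a lower set is again a lower set. Granting this, $\mathcal{I}_{\mathbf{Q}h}(y) = \mathrm{conv}[\mathcal{I}_h(y)]$ is a lower set for every $y$; hence for $x' \leq x$ one takes $y = \mathbf{Q}h(x)$, so $x \in \mathrm{conv}[\mathcal{I}_h(y)]$, and the lower-set property places $x'$ in the same set, giving $\mathbf{Q}h(x') \leq y = \mathbf{Q}h(x)$, which is exactly monotonicity of $\mathbf{Q}h$.

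To prove the claim, let $S$ be a nonempty lower set in the rectangle $\X$, let $a$ denote the minimal corner of $\X$, and note $a \in S$ since $a \leq p$ for any $p \in S$. Take $x \in \mathrm{conv}(S)$, say $x = \sum_i \lambda_i p_i$ with $p_i \in S$, $\lambda_i > 0$, $\sum_i \lambda_i = 1$, and let $x' \in \X$ with $x' \leq x$. For each $i$ let $q_i$ range over the box $[a, p_i] \subset \X$; by the lower-set property every such $q_i$ lies in $S$. As the $q_i$ vary, $\sum_i \lambda_i q_i$ ranges over the Minkowski sum $\sum_i \lambda_i [a, p_i] = [a, \sum_i \lambda_i p_i] = [a, x]$, because scaled axis-aligned boxes sum coordinatewise to the box of summed endpoints. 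Since $a \leq x' \leq x$, i.e.\ $x' \in [a, x]$, there exist $q_i \in [a, p_i] \subset S$ with $\sum_i \lambda_i q_i = x'$, whence $x' \in \mathrm{conv}(S)$. This argument uses the product structure of $\X$ twice—through order-convexity, so that $[a, p_i] \subset \X$, and through the box form of the Minkowski sum—consistent with the fact that monotonicity preservation fails on general convex domains, mirroring the triangle counterexample for $\mathbf{C}$. Assembling the four properties then completes the proof.
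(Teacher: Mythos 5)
Your decomposition---compose the routine properties (invariance, order preservation, $d_\infty$-contraction) from Theorems~\ref{RA} and~\ref{QC}, and concentrate on showing that $\mathbf{Q}$ preserves monotonicity on a rectangle---is exactly the structure of the paper's proof of Theorem~\ref{composition2}. The paper likewise reduces reshaping to the claim that the convex hull of a lower contour set of the nondecreasing function $g = \mathbf{M}f$ is again a ``lower set'' of $\X$. Where you differ is in the proof of that claim. The paper first treats a decrease in a single coordinate, $x' = x - t e_i$: it writes $x = \sum_j \alpha_j x_j$ with $x_j \in \mathcal{I}_g(y)$ by Carath\'eodory's theorem, slides the $i$-th coordinates of the $x_j$ down to represent $x'$, and then iterates this over the $k$ coordinates to handle a general $x' \leq x$. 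You instead handle all coordinates at once through the Minkowski-sum identity $\sum_i \lambda_i [a, p_i] = [a, \sum_i \lambda_i p_i]$ for axis-aligned boxes, which immediately places every $x' \in [a,x]$ in $\mathrm{conv}(S)$. Both arguments exploit the product structure of $\X$ in the same two places (order-convexity, so that $[a,p_i] \subset \X$, and the coordinatewise nature of the box sum), and both are correct; yours is a one-shot argument that is arguably cleaner, while the paper's coordinatewise iteration is more pedestrian but requires nothing beyond Carath\'eodory.

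The one genuine loose end in your proposal is the well-definedness step, $\mathbf{M}f \in \ell^{\infty}_{S}(\X)$. You argue that each one-dimensional rearrangement $\mathbf{M}_j f$ is a left-continuous quantile function, hence lower semi-continuous, and ``expect'' this to survive composition and averaging. But left-continuity of the quantile function gives lower semi-continuity only in the rearranged coordinate $x(j)$, for $x(-j)$ fixed; what is needed---both for $\mathbf{M}_{\pi_1} \circ \cdots \circ \mathbf{M}_{\pi_k}$ to remain in the domain of each successive operator and for Theorem~\ref{QC} and Lemma~\ref{lemma:Q-def} to apply to $h = \mathbf{M}f$---is lower semi-continuity jointly in all coordinates, and separate semi-continuity in each variable does not imply joint semi-continuity. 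This is precisely what the paper's Lemma~\ref{lemma:M-S} establishes, via a contradiction argument that applies the reverse Fatou lemma to the indicator integrands in the definition of $\mathbf{M}_j$. Joint lower semi-continuity also matters at the point where you ``take $y = \mathbf{Q}h(x)$, so $x \in \mathrm{conv}[\mathcal{I}_h(y)]$'': that step is the attainment of the minimum in the definition of $\mathbf{Q}$, which Lemma~\ref{lemma:Q-def} guarantees only for $h \in \ell^{\infty}_{S}(\X)$ on a convex compact domain. With Lemma~\ref{lemma:M-S} (or an equivalent argument) supplied, your proof is complete.
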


The comments and example in Remark~\ref{RM} also apply to the $\mathbf{QM}$-operator. Thus, the assumption that $\X$ is a rectangular set is sufficient to guarantee that the  $\mathbf{Q}$-operator preserves monotonicity.

\begin{remark}[Quasi-Concavity and  Monotonicity] Similar to Remark~\ref{remark:concavity},  we can construct composite operators for all the combinations of quasi-concavity/quasi-convexity and increasing/decreasing monotonicity restrictions. Thus, the operator $\mathbf{QM^-} f = \mathbf{Q}[-\mathbf{M}(-f)]$ enforces quasi-convexity and decreasing monotonicity, $\mathbf{Q^-M} f = - \mathbf{Q}[-\mathbf{M}(f)]$ enforces quasi-concavity and increasing monotonicity, and $\mathbf{Q^-M^-} f = - \mathbf{Q}[\mathbf{M}(-f)]$ enforces quasi-concavity and decreasing monotonicity. It can be shown that these operators satisfy analogous properties to $\mathbf{QM}$ by a straightforward modification of the proof of Theorem~\ref{composition2}.
	%
	%
	%
\end{remark}

\subsection{Range and Other Shape Restrictions} The following theorem shows that the operator $\mathbf{R}$ can be composed with $\mathbf{C}$, $\mathbf{M}$ and $\mathbf{Q}$ to produce range-constrained convex, monotone or quasi-convex functions.  Let $\ell^{\infty}_{CR}(\X) := \ell^{\infty}_{C}(\X) \cap \ell^{\infty}_{R}(\X)$, $\ell^{\infty}_{MR}(\X) := \ell^{\infty}_{M}(\X) \cap \ell^{\infty}_{R}(\X)$, $\ell^{\infty}_{QR}(\X) := \ell^{\infty}_{Q}(\X) \cap \ell^{\infty}_{R}(\X)$, $\mathbf{CR}  := \mathbf{C} \circ \mathbf{R}$, $\mathbf{MR} := \mathbf{M} \circ \mathbf{R}$, and $\mathbf{QR} := \mathbf{Q} \circ \mathbf{R}$.

\begin{theorem}[Composition with Range Operator]\label{cRR}
	(i) For any convex set $\X$, the operator $\mathbf{CR}$ is  $\ell^{\infty}_{CR}$-enforcing with respect to $\ell^{\infty}_{S}(\X)$ and a $d_{\infty}$-distance  contraction; (ii) for any regular rectangular set $\X$, the operator $\mathbf{MR}$  is $\ell^{\infty}_{MR}$-enforcing with respect to $\ell^{\infty}(\X)$ and a $d_{p}$-distance  contraction for any $p \geq 1$; and (iii) for any convex and compact set $\mathcal{X}$, the operator $\mathbf{QR}$ is   $\ell^{\infty}_{QR}$-enforcing with respect to $\ell_S^{\infty}(\X)$ and a $d_{\infty}$-distance contraction.
\end{theorem}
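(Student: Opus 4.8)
The plan is to treat all three parts uniformly, since each operator is a composition $\mathbf{O}\circ\mathbf{R}$ with $\mathbf{O}\in\{\mathbf{C},\mathbf{M},\mathbf{Q}\}$, and to reduce every clause of Definition~\ref{def:SE_operators} together with the contraction property to the already-established properties of the two factors, namely Theorems~\ref{RR}, \ref{DFL}, \ref{RA} and~\ref{QC}. Three of the four clauses (invariance, order preservation, distance contraction) will follow by simply chaining the corresponding properties of $\mathbf{R}$ and $\mathbf{O}$. The only clause that does \emph{not} follow by mere chaining is reshaping, because it requires that the shape operator $\mathbf{O}$ keep the image of $\mathbf{R}$ inside the range class $\ell^{\infty}_{R}(\X)$; this is where the real content of the theorem sits.

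First I would check that the factors compose at the level of domains. By Theorem~\ref{RR}, $\mathbf{R}$ maps into $\ell^{\infty}_{R}(\X)$; for parts (i) and (iii) I additionally need $\mathbf{R}f\in\ell^{\infty}_{S}(\X)$ whenever $f\in\ell^{\infty}_{S}(\X)$, so that $\mathbf{C}$ and $\mathbf{Q}$ can legitimately be applied. This holds because $\mathbf{R}f$ is obtained from $f$ by applying the continuous nondecreasing map $t\mapsto (\underline{f}\vee t)\wedge\overline{f}$ pointwise, and composing a lower semi-continuous function with a continuous nondecreasing function preserves lower semi-continuity. For part (ii) no such check is needed, since $\mathbf{M}$ is defined on all of $\ell^{\infty}(\X)$.

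Next comes the key reshaping step. Write $g:=\mathbf{R}f$, so that $g\in\ell^{\infty}_{R}(\X)$, i.e. $\underline{f}\le g\le\overline{f}$ pointwise. Each shape operator $\mathbf{O}\in\{\mathbf{C},\mathbf{M},\mathbf{Q}\}$ is order preserving and leaves constant functions fixed: constants are lower semi-continuous and are simultaneously convex, nondecreasing, and quasi-convex, so they belong to $\ell^{\infty}_{C}$, $\ell^{\infty}_{M}$, $\ell^{\infty}_{Q}$ respectively and invariance applies. Regarding $\underline{f}$ and $\overline{f}$ as constant functions and applying $\mathbf{O}$ to the inequalities $\underline{f}\le g\le\overline{f}$ gives
\[
\underline{f}=\mathbf{O}\,\underline{f}\ \le\ \mathbf{O}g\ \le\ \mathbf{O}\,\overline{f}=\overline{f},
\]
so $\mathbf{O}g\in\ell^{\infty}_{R}(\X)$. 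Combined with the fact that $\mathbf{O}g$ lies in $\ell^{\infty}_{C}(\X)$, $\ell^{\infty}_{M}(\X)$, or $\ell^{\infty}_{Q}(\X)$ respectively (the reshaping clauses of Theorems~\ref{DFL}, \ref{RA}, \ref{QC}), this yields $\mathbf{OR}f\in\ell^{\infty}_{CR}$, $\ell^{\infty}_{MR}$, or $\ell^{\infty}_{QR}$, as required.

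The remaining clauses are immediate from composition. For invariance, if $f$ lies in the relevant intersection class then $\mathbf{R}f=f$ (range invariance) and $\mathbf{O}f=f$ (shape invariance), hence $\mathbf{OR}f=f$. Order preservation and the contraction property are likewise preserved under composition: the composite of order preserving maps is order preserving, and if both factors are $\rho$-contractions then $\rho(\mathbf{OR}f,\mathbf{OR}g)\le\rho(\mathbf{R}f,\mathbf{R}g)\le\rho(f,g)$. Here one takes $\rho=d_{\infty}$ in parts (i) and (iii), since $\mathbf{C}$ and $\mathbf{Q}$ are only $d_{\infty}$-contractions, while in part (ii) one may take $\rho=d_{p}$ for every $p\ge1$, the full range for which both $\mathbf{R}$ and $\mathbf{M}$ contract. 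The main obstacle is genuinely the reshaping step above together with the lower semi-continuity check; I would also flag that the order of composition is essential, as applying $\mathbf{R}$ \emph{after} the shape operator would generally destroy the shape (for instance $\min\{\cdot,\overline{f}\}$ applied to a convex function need not be convex), which is precisely why each operator is defined with $\mathbf{R}$ applied first.
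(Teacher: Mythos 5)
Your proof is correct, but the key step is handled by a genuinely different argument from the paper's. Both proofs share the same outer skeleton (invariance, order preservation, and contraction follow by chaining the corresponding properties of $\mathbf{R}$ and of $\mathbf{C}$, $\mathbf{M}$, $\mathbf{Q}$, after verifying that $\mathbf{R}$ preserves lower semi-continuity), and your continuity-composition argument for that last verification is a fine substitute for the paper's direct $\epsilon$-argument. Where you diverge is the reshaping step: the paper proves that each shape operator preserves the range class by three operator-specific arguments --- for $\mathbf{C}$ it invokes the Carath\'eodory-type representation $\mathbf{C}\mathbf{R}f(x)=\sum_i \alpha_i \mathbf{R}f(x_i)$ from statement (2) of Lemma~\ref{lemma:DLF-basic}, for $\mathbf{M}$ it uses that rearrangement is an average of sorting operators which do not change the set of function values, and for $\mathbf{Q}$ it analyzes the lower contour sets $\mathcal{I}_f(y)$ directly. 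You instead give a single uniform argument: the constants $\underline{f}$ and $\overline{f}$ lie in each shape class, so they are fixed points by invariance, and order preservation applied to $\underline{f}\leq \mathbf{R}f\leq \overline{f}$ sandwiches $\mathbf{O}\mathbf{R}f$ in $[\underline{f},\overline{f}]$. This is cleaner and more portable: it applies verbatim to any $\ell^{\infty}_1$-enforcing operator whose shape class contains constants (so the Corollary on $\mathbf{CMR}$ and $\mathbf{QMR}$ would follow by the same two lines), and it avoids a wrinkle in the paper's part (i), where the cited Lemma~\ref{lemma:DLF-basic}(2) is stated under compactness of $\X$ while the theorem claims the result for any convex set --- your sandwich argument needs no compactness. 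What the paper's route buys in exchange is structural information (e.g., the explicit convex-combination representation of $\mathbf{C}\mathbf{R}f$ and the explicit contour-set identity for $\mathbf{Q}$), which is reused elsewhere in its appendix. Your closing remark that the composition order matters (truncation above destroys convexity) is also correct and mirrors the paper's motivation for defining $\mathbf{CR}=\mathbf{C}\circ\mathbf{R}$ rather than $\mathbf{R}\circ\mathbf{C}$.
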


The operator $\mathbf{MR}$ can be composed with $\mathbf{C}$ and $\mathbf{Q}$ to produce range-constrained monotone convex or quasi-convex functions. The properties of the resulting operators $\mathbf{CMR}  := \mathbf{C} \circ \mathbf{MR}$ and $\mathbf{QMR}  := \mathbf{Q} \circ \mathbf{MR}$ follow from combining Theorem~\ref{cRR} with Theorems~\ref{composition} and~\ref{composition2}, respectively. Let $\ell^{\infty}_{CMR}(\X) := \ell^{\infty}_{CM}(\X) \cap \ell^{\infty}_{R}(\X)$ and $\ell^{\infty}_{QMR}(\X) := \ell^{\infty}_{QM}(\X) \cap \ell^{\infty}_{R}(\X)$.


\begin{corollary}[Composition with Range and Monotonicity Operators]
	(i) For any regular rectangular set $\X$, the operator $\mathbf{CMR}$ is  $\ell^{\infty}_{CMR}$-enforcing with respect to $\ell^{\infty}_{S}(\X)$  and a $d_{\infty}$-distance  contraction; and (ii) for any regular rectangular set $\X$, the operator $\mathbf{QMR}$ is  $\ell^{\infty}_{QMR}$-enforcing with respect to $\ell_S^{\infty}(\X)$  and a $d_{\infty}$-distance contraction.
\end{corollary}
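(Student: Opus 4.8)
The plan is to prove both parts by writing each operator, via associativity of composition, as an outer shape operator applied to the range operator, namely $\mathbf{CMR}=\mathbf{CM}\circ\mathbf{R}$ and $\mathbf{QMR}=\mathbf{QM}\circ\mathbf{R}$, and then checking the four defining properties by combining Theorem~\ref{composition} (resp. Theorem~\ref{composition2}) for the inner operator with the range results. I would treat $\mathbf{CMR}$ in detail; the argument for $\mathbf{QMR}$ is word-for-word identical after replacing $\mathbf{C}$, $\mathbf{CM}$ and Theorem~\ref{composition} by $\mathbf{Q}$, $\mathbf{QM}$ and Theorem~\ref{composition2}. First I would note the composition is well defined on $\ell^{\infty}_{S}(\X)$: clamping is the pointwise map $f\mapsto(\underline f\vee f)\wedge\overline f$, and taking the max or min of a lower semi-continuous function with a constant preserves lower semi-continuity, so $\mathbf{R}$ maps $\ell^{\infty}_{S}(\X)$ into itself and $\mathbf{CM}$ (whose domain is $\ell^{\infty}_{S}(\X)$) applies to $\mathbf{R}f$.

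The three easier properties then follow by composing the corresponding properties of the two factors. For invariance, if $f\in\ell^{\infty}_{CMR}(\X)=\ell^{\infty}_{CM}(\X)\cap\ell^{\infty}_{R}(\X)$ then $\mathbf{R}f=f$ by invariance of $\mathbf{R}$ and $\mathbf{CM}f=f$ by invariance of $\mathbf{CM}$, so $\mathbf{CMR}f=f$. Order preservation holds because $\mathbf{R}$ and $\mathbf{CM}$ are order preserving, so $f\le g$ yields $\mathbf{R}f\le\mathbf{R}g$ and then $\mathbf{CM}\mathbf{R}f\le\mathbf{CM}\mathbf{R}g$. For the $d_\infty$-contraction property I would use that the clamp is $1$-Lipschitz, so $|\mathbf{R}f(x)-\mathbf{R}g(x)|\le|f(x)-g(x)|$ pointwise and hence $\mathbf{R}$ is a $d_\infty$-contraction; composing with the $d_\infty$-contraction $\mathbf{CM}$ from Theorem~\ref{composition} gives $d_\infty(\mathbf{CMR}f,\mathbf{CMR}g)\le d_\infty(f,g)$.

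The one substantive step is reshaping. Theorem~\ref{composition} delivers $\mathbf{CM}(\mathbf{R}f)\in\ell^{\infty}_{CM}(\X)$, i.e. convexity and monotonicity, but says nothing about the range, so I would separately show that $\mathbf{CM}$ maps $\ell^{\infty}_{R}(\X)$ into itself; since $\mathbf{R}f\in\ell^{\infty}_{R}(\X)$ by reshaping of $\mathbf{R}$, this gives $\mathbf{CMR}f\in\ell^{\infty}_{CM}(\X)\cap\ell^{\infty}_{R}(\X)=\ell^{\infty}_{CMR}(\X)$. Range-preservation of the two factors is exactly the reshaping content of Theorem~\ref{cRR} together with invariance of $\mathbf{R}$: for any $h\in\ell^{\infty}_{R}(\X)$ we have $h=\mathbf{R}h$, whence $\mathbf{M}h=\mathbf{MR}h\in\ell^{\infty}_{MR}(\X)\subseteq\ell^{\infty}_{R}(\X)$ by Theorem~\ref{cRR}(ii), and likewise for lower semi-continuous $h$, $\mathbf{C}h=\mathbf{CR}h\in\ell^{\infty}_{CR}(\X)\subseteq\ell^{\infty}_{R}(\X)$ by Theorem~\ref{cRR}(i); composing shows $\mathbf{CM}$ preserves the range. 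The underlying reasons are transparent and worth recording: $\mathbf{M}$ returns quantiles of the values of its argument and so cannot leave $[\underline f,\overline f]$, while $\mathbf{C}h$, being the greatest convex minorant of $h$ by Lemma~\ref{lemma:DLF-basic}, satisfies $\mathbf{C}h\le h\le\overline f$ and also $\mathbf{C}h\ge\underline f$ because the constant $\underline f$ is itself a convex minorant of $h$. Part (ii) then follows by the identical scheme with $\mathbf{QM}$ and Theorem~\ref{composition2} in place of $\mathbf{CM}$ and Theorem~\ref{composition}, the range-preservation of $\mathbf{Q}$ coming from Theorem~\ref{cRR}(iii); equivalently, at level $\overline f$ the lower contour set $\mathcal{I}_h(\overline f)$ is all of the convex set $\X$, so $\mathbf{Q}h(x)\le\overline f$, and for $y<\underline f$ the set $\mathcal{I}_h(y)$ is empty, so $x\notin\mathrm{conv}[\mathcal{I}_h(y)]$ and $\mathbf{Q}h(x)\ge\underline f$.

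I expect the main obstacle to be precisely this range-preservation of the shape operators rather than anything in the contraction bookkeeping. The composition theorems guarantee that the output is convex (quasi-convex) and monotone, but they do not by themselves keep the reshaped function inside $[\underline f,\overline f]$; it is the monotone-rearrangement/convex-minorant/quasi-convex-minorant structure, surfaced here through Theorem~\ref{cRR}, that confines the output to the prescribed range. This is also exactly why the order of composition is range-first, so that the subsequent shape operators act on an already range-restricted function and merely need to respect that restriction.
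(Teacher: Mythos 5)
Your proof is correct and follows essentially the route the paper itself indicates for this corollary (the paper leaves it as "combining Theorem~\ref{cRR} with Theorems~\ref{composition} and~\ref{composition2}"): you merely bracket the composition as $\mathbf{CM}\circ\mathbf{R}$ rather than $\mathbf{C}\circ\mathbf{MR}$, which is immaterial by associativity. Your explicit verification that $\mathbf{CM}$ (resp.\ $\mathbf{QM}$) preserves the range restriction---via invariance of $\mathbf{R}$, the greatest-convex-minorant bound, and the contour-set argument already used in the proof of Theorem~\ref{cRR}---is exactly the glue the paper's one-line justification relies on.
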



In the right panel of Figure~\ref{figCMQ}, we apply the  operators  $\mathbf{MR}$ and $\mathbf{CMR}$  to the function $f(x) = [10 \exp(3x/2) - \lfloor 10 x \rfloor - 10]/25$ on $\X = [0,1]$. We enforce that the range be in the interval $[0.1,0.9]$.  The (monotonicity and range)-enforced function $\mathbf{MR} f$ in dashed line satisfies the monotonicity and range restrictions but is not convex. The (convexity, monotonicity and range)-enforced function $\mathbf{CMR} f$ satisfies the three shape restrictions.

\subsection{Shape Restrictions on Transformations} The shape operators can be combined with other functions to enforce shape restrictions on transformations of the function $f$. An example is log-concavity where we assume that $\log f$ is concave.\footnote{\cite{bagnoli2005log}  discussed applications of log-concavity to economics  and statistics, and analyzed the log-concavity properties of common distributions. } Let $h$ be a real-valued bijection with inverse function $h^{-1}$. We consider the operator $\mathbf{O}_{h}$ that applies the operator $\mathbf{O}$ to the transformation $h(f)$ and then recovers the shape-constrained version of $f$ by inversion, that is
\[
\mathbf{O}_{h}f=h^{-1}\circ\mathbf{O}(h\circ f).
\]
For example, if $f$ is log-concave, then $h(x) = \log x$ and
$$
\mathbf{O}_{h}f= \mathbf{C}^{-}_{\log}f = \exp[\mathbf{C}^{-}(\log f)].
$$

The following theorem gives conditions under which the transformations $h$ and $h^{-1}$ preserve the properties of the operator $\mathbf{O}$. Define $\ell_{h,j}^{\infty}(\mathcal{X})=\{ f\in\ell_{0}^{\infty}(\mathcal{X}) : h \circ f\in\ell_{j}^{\infty}(\mathcal{X})\} $ for $j \in \{0,1\}$.

\begin{theorem}[Properties of  $\mathbf{O}_{h}$-Operator]\label{transformation} Let  $\mathbf{O}$ be a $\ell_1^\infty(\X)$-enforcing operator with respect to $\ell_{0}^{\infty}(\mathcal{X})$, and $y \mapsto h(y)$ be a real valued strictly monotonic bijection on the domain $\mathcal{Y} \subset \mathbb{R}$. Then,  $\mathbf{O}_{h}$ is a $\ell_{h,1}$-enforcing operator with respect to $\ell_{h,0}^{\infty}(\mathcal{X})$. Moreover, if $\mathbf{O}$ is a $\rho$-distance contraction, then  $\mathbf{O}_h$ is a $\rho_h$-distance contraction for $\rho_h(f,g) := \rho(h\circ f, h\circ g)$.
\end{theorem}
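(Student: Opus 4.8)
The plan is to reduce every claim to the single intertwining identity $h \circ \mathbf{O}_{h} f = \mathbf{O}(h \circ f)$, which is immediate from $h \circ h^{-1} = \mathrm{id}$. Once this identity is in hand, each of the four properties (reshaping, invariance, order preservation, contraction) transfers from $\mathbf{O}$ to $\mathbf{O}_{h}$ by a short computation. Throughout I will use that a strictly monotonic bijection $h$ on an interval $\mathcal{Y}$ and its inverse $h^{-1}$ are continuous and measurable, and that both are monotonic in the \emph{same} direction; this last fact is what makes the order-related properties survive the transformation.

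For reshaping, take $f \in \ell_{h,0}^{\infty}(\X)$, so that $h \circ f \in \ell_{0}^{\infty}(\X)$. The intertwining identity gives $h \circ \mathbf{O}_{h} f = \mathbf{O}(h \circ f)$, which lies in $\ell_{1}^{\infty}(\X)$ by the reshaping property of $\mathbf{O}$. Since membership in $\ell_{h,1}^{\infty}(\X)$ is \emph{defined} as ``$h \circ (\cdot) \in \ell_{1}^{\infty}(\X)$'' together with $\mathbf{O}_h f \in \ell_0^\infty(\X)$, the conclusion $\mathbf{O}_{h} f \in \ell_{h,1}^{\infty}(\X)$ follows, once I check that $\mathbf{O}_h f = h^{-1}\circ\mathbf{O}(h\circ f)$ is itself bounded and measurable (this uses continuity and measurability of $h^{-1}$ and the fact that $h$ is a bijection, so the range of $\mathbf{O}(h\circ f)$ lies in the domain of $h^{-1}$). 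For invariance, take $f \in \ell_{h,1}^{\infty}(\X)$, so $h \circ f \in \ell_{1}^{\infty}(\X)$; invariance of $\mathbf{O}$ gives $\mathbf{O}(h \circ f) = h \circ f$, whence $\mathbf{O}_{h} f = h^{-1} \circ \mathbf{O}(h \circ f) = h^{-1} \circ h \circ f = f$.

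The step that requires genuine care is order preservation, and it is where the direction of $h$ matters. Suppose $f \le g$ with $f,g \in \ell_{h,0}^{\infty}(\X)$. If $h$ is increasing, then $h \circ f \le h \circ g$, order preservation of $\mathbf{O}$ gives $\mathbf{O}(h \circ f) \le \mathbf{O}(h \circ g)$, and applying the increasing map $h^{-1}$ preserves this, yielding $\mathbf{O}_{h} f \le \mathbf{O}_{h} g$. If $h$ is decreasing, then $h \circ f \ge h \circ g$, so $\mathbf{O}(h \circ f) \ge \mathbf{O}(h \circ g)$; but now $h^{-1}$ is also decreasing, so applying it reverses the inequality back, again giving $\mathbf{O}_{h} f \le \mathbf{O}_{h} g$. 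The two order reversals cancel, so the conclusion holds regardless of the monotonicity direction of $h$, and the only subtlety is to confirm that this cancellation is genuine rather than assumed.

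Finally, for the contraction property, assume $\mathbf{O}$ is a $\rho$-distance contraction and compute directly, using the intertwining identity twice:
\[
\rho_{h}(\mathbf{O}_{h} f, \mathbf{O}_{h} g) = \rho(h \circ \mathbf{O}_{h} f, h \circ \mathbf{O}_{h} g) = \rho(\mathbf{O}(h \circ f), \mathbf{O}(h \circ g)) \le \rho(h \circ f, h \circ g) = \rho_{h}(f, g),
\]
where the inequality is the contraction property of $\mathbf{O}$ applied to the pair $h \circ f, h \circ g \in \ell_{0}^{\infty}(\X)$. This closes all parts. I expect no serious obstacle: the only points needing attention are the measurability and boundedness bookkeeping that keeps the compositions inside $\ell_{0}^{\infty}(\X)$, and the sign-tracking in the decreasing case described above.
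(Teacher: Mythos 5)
Your proposal is correct and follows essentially the same route as the paper's own proof: transfer each property of $\mathbf{O}$ through composition with $h$ and $h^{-1}$ (your ``intertwining identity'' is just this composition made explicit), with the contraction argument being verbatim the paper's. The only difference is cosmetic—where the paper assumes $h$ increasing ``without loss of generality,'' you check the decreasing case explicitly and note that the two order reversals cancel, which is a slightly more careful rendering of the same step.
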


%
%

\subsection{Other Ways of Generating Shape Enforcing Operators} 
When $\ell_0^\infty(\X)$ is a Hilbert space \textcolor{black}{and $\ell_1^\infty(\X)$ is a closed set in the $L^2$ metric}, it is possible to construct generic shape-enforcing operators via $L^2$-projection in $\ell_1^\infty(\X)$:
\begin{definition}[$\mathbf{P}$-Operator]  The $L^2$-projection operator on the Hilbert space $\ell_0^\infty(\X)$, 
	$\mathbf{P} : \ell_0^{\infty}(\mathcal{X}) \to \ell_0^{\infty}(\mathcal{X})$, is defined by
	\begin{equation}
		\mathbf{P}f(x) := \arg\min_{g \in \ell_1^\infty(\X)} \|f - g \|_2.
	\end{equation}
\end{definition}

The $\mathbf{P}$-operator involves an infinite dimensional optimization program that can be computationally challenging except for special cases. For example, when $\ell_1^{\infty}(\mathcal{X}) = \ell_M^{\infty}(\mathcal{X})$, the $\mathbf{P}$-operator corresponds to the isotonization operator that can be computed using the pool adjacent violators algorithm described in \cite{barlowstatistical}.  The following result, discussed on p.45 of \cite{chetverikov2017econometrics} and stated here as a theorem for the purpose of completeness, shows  that $\mathbf{P}$ is range-enforcing and distance-reducing with respect to $d_2$ under some conditions on $\ell_1^\infty(\X)$.

\begin{theorem}[$L^2$-Projection Operator]\label{PP}
	If $\ell_0^\infty(\X)$ is a Hilbert space, $\ell_1^\infty(\X)$ is a closed and convex set, and for any $f_1, f_2 \in \ell_1^\infty(\X)$ the pointwise maximum and minimum of $f_1$ and $f_2$ belongs to $\ell_1^\infty(\X)$, then the operator $\mathbf{P}$ is $\ell^{\infty}_{1}$-enforcing with respect to $\ell_0^{\infty}(\X)$ and a  $d_2$-distance contraction.
\end{theorem}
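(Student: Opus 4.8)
The plan is to verify the three shape-enforcing properties together with the $d_2$-contraction property in turn, isolating order preservation as the genuine difficulty. Throughout I take the inner product on the Hilbert space $\ell_0^\infty(\X)$ to be the $L^2$ inner product inducing $\|\cdot\|_2$, so that $\|f\|_2^2=\int_\X f^2\,dx$.

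First I would record that $\mathbf{P}$ is well defined. Since $\ell_0^\infty(\X)$ is a Hilbert space and $\ell_1^\infty(\X)$ is closed and convex, the Hilbert projection theorem gives, for every $f\in\ell_0^\infty(\X)$, a unique minimizer $\mathbf{P}f$ of $g\mapsto\|f-g\|_2$ over $\ell_1^\infty(\X)$. Reshaping, $\mathbf{P}f\in\ell_1^\infty(\X)$, is then immediate from the definition, and invariance, $\mathbf{P}f=f$ for $f\in\ell_1^\infty(\X)$, follows because $g=f$ attains distance zero and is therefore the unique minimizer. For the $d_2$-contraction I would invoke the standard nonexpansiveness of metric projections onto closed convex sets: the variational characterization gives $\langle f-\mathbf{P}f,\,h-\mathbf{P}f\rangle\le 0$ for all $h\in\ell_1^\infty(\X)$. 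Taking $h=\mathbf{P}g$ here and $h=\mathbf{P}f$ in the analogous inequality for $g$, then adding, yields
$$\|\mathbf{P}f-\mathbf{P}g\|_2^2\le\langle f-g,\,\mathbf{P}f-\mathbf{P}g\rangle,$$
and Cauchy--Schwarz gives $d_2(\mathbf{P}f,\mathbf{P}g)\le d_2(f,g)$.

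The step I expect to be the main obstacle is order preservation, which is where the lattice hypothesis on $\ell_1^\infty(\X)$ enters. Suppose $f\le g$ and write $a:=\mathbf{P}f$, $b:=\mathbf{P}g$. By assumption $a\wedge b$ and $a\vee b$ lie in $\ell_1^\infty(\X)$, so they are admissible competitors, and optimality of $a$ and $b$ gives $\|f-a\|_2^2\le\|f-(a\wedge b)\|_2^2$ and $\|g-b\|_2^2\le\|g-(a\vee b)\|_2^2$; summing,
$$\|f-a\|_2^2+\|g-b\|_2^2\le\|f-(a\wedge b)\|_2^2+\|g-(a\vee b)\|_2^2.$$
The key is to bound the right-hand side in the opposite direction pointwise. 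On $\{a\le b\}$ the two sides have identical integrands, while on $\{a>b\}$ one has $a\wedge b=b$ and $a\vee b=a$, and expanding the squares gives the identity $[(f-b)^2+(g-a)^2]-[(f-a)^2+(g-b)^2]=2(a-b)(f-g)\le 0$, since $a-b>0$ there and $f-g\le 0$ everywhere. Integrating, the right-hand side of the display is in fact $\le$ the left-hand side, so the inequality is an equality; because the two summand inequalities have equal sums, each must be an equality, and in particular $\|f-a\|_2=\|f-(a\wedge b)\|_2$. Since $a\wedge b\in\ell_1^\infty(\X)$ attains the minimal distance for $f$, uniqueness of the projection forces $a\wedge b=a$, i.e.\ $a\le b$, which is exactly $\mathbf{P}f\le\mathbf{P}g$.

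The only delicate point is this final exchange argument: it combines the lattice structure (to make $a\wedge b$ and $a\vee b$ admissible), the elementary pointwise sign computation on $\{a>b\}$, and—crucially—uniqueness of the Hilbert-space projection to upgrade ``same optimal value'' to ``equal minimizer.'' Everything else is routine once well-definedness is established.
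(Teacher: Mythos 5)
Your proposal is correct, but it takes a genuinely more self-contained route than the paper, which in fact does not write out a proof of this theorem at all: the paper's accompanying remark states that reshaping and invariance are trivial from the definition, and then defers order preservation to Lemma 2.4 of Nishimura and Ok (2012) and the distance contraction to Lemma 46.5.4 of Zeidler (1984). Your argument supplies complete proofs of exactly those two cited facts. The contraction step --- adding the two variational inequalities $\langle f-\mathbf{P}f,\,\mathbf{P}g-\mathbf{P}f\rangle\le 0$ and $\langle g-\mathbf{P}g,\,\mathbf{P}f-\mathbf{P}g\rangle\le 0$ to get $\|\mathbf{P}f-\mathbf{P}g\|_2^2\le\langle f-g,\,\mathbf{P}f-\mathbf{P}g\rangle$ and then applying Cauchy--Schwarz --- is the standard nonexpansiveness argument underlying Zeidler's lemma. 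Your order-preservation step is the lattice exchange argument behind the Nishimura--Ok lemma: the algebra is right (the pointwise identity $[(f-b)^2+(g-a)^2]-[(f-a)^2+(g-b)^2]=2(a-b)(f-g)$ holds, and is $\le 0$ on $\{a>b\}$ when $f\le g$), the summed optimality inequality then collapses to an equality, each summand inequality is forced to be an equality, and uniqueness of the Hilbert projection upgrades $\|f-a\|_2=\|f-(a\wedge b)\|_2$ to $a\wedge b=a$, i.e.\ $\mathbf{P}f\le\mathbf{P}g$. What each approach buys: the paper's citation route is shorter and leans on known results; yours makes the theorem self-contained and makes visible exactly where each hypothesis enters (completeness and closed convexity for existence and uniqueness of the projection, the Hilbert-lattice property solely to make $a\wedge b$ and $a\vee b$ admissible competitors). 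The only caveat worth flagging is that in the $L^2$ setting all conclusions such as $a\wedge b=a$ and $\mathbf{P}f\le\mathbf{P}g$ hold almost everywhere rather than pointwise, consistent with the paper's implicit identification of functions with their equivalence classes.
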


\textcolor{black}{
	The condition that $\ell_0^\infty(\X)$ is a Hilbert space is satisfied when $\X$ is bounded.  The sets $\overline{\ell}_C^\infty(\X)$, $\overline{\ell}_M^\infty(\X)$, $\overline{\ell}_Q^\infty(\X)$ and their intersections are convex and closed.\footnote{\textcolor{black}{The set $\overline{\ell}_O^\infty(\X)$ denotes the intersection of $\ell_O^\infty(\X)$ with the set of uniformly bounded functions, for $O \in \{C,M,Q\}$. The intersection ensures that $\overline{\ell}_O^\infty(\X)$  is closed in the $L^2$ metric. Alternatively, we can ensure that $\ell_O^\infty(\X)$ is closed in the $L^2$ metric by restricting  $\X$ to be a countable set.}} The condition on the maximum and minimum is a more restrictive Hilbert lattice property. It  is satisfied by  $\overline{\ell}_M^\infty(\X)$, but not by $\overline{\ell}_C^\infty(\X)$ and $\overline{\ell}_Q^\infty(\X)$. Theorem~\ref{PP} therefore covers the isotonization operator, but not the convex and quasi-convex projections. }

\begin{remark}[Proof of Theorem \ref{PP}]  \cite{chetverikov2017econometrics}  referred to Lemma 2.4 in \cite{no2012} for the order-preserving  and to Lemma 46.5.4 in \cite{zeidler1984} for the distance contraction. Reshaping and invariance hold trivially by the definition of the operator.
\end{remark}

\medskip

\section{Improved Point and Interval Estimation} \label{sec:inf}

We show how to use shape-enforcing operators to improve point and interval estimators of a shape-constrained function.  Let $f_0 : \X \to \RR$ be the target function, which is known to satisfy a shape restriction, i.e.,  $f_0 \in \ell_1^{\infty}(\X)$. Assume we have a point estimator $f$ of $f_0$, and an interval estimator or uniform confidence band $[f_l,f_u]$ for  $f_0$. These estimators are unconstrained and therefore do not necessarily satisfy the shape restrictions, i.e.,  $f,f_l,f_u \in \ell^{\infty}_0(\X)$ but $f,f_l,f_u \not\in \ell^{\infty}_1(\X)$ in general. 

There are many different ways to obtain these initial estimators, ranging from parametric to modern adaptive nonparametric methods \textcolor{black}{\citep[e.g., ][]{fan96,li07,hastie09}. These methods can be tailored to properties of the target function such as smoothness or sparsity.}  A common frequentist confidence band for the function $f_0$ is constructed as
$$
f_l(x) = f(x) - c_p s(x), \ \ f_u(x) = f(x) + c_p s(x),
$$
where $s(x)$ is the standard error of $f(x)$ and $c_p$ is a critical value chosen such that
$$
\Pr(f_0 \in [f_l,f_u]) \geq p,
$$
for some confidence level $p$, where  event $f_0 \in [f_l,f_u] $ means $\{f_0(x) \in [f_l(x),f_u(x)] \text{ for all } x \in \X\} $. \cite{Wasserman:2006} provides an excellent overview of methods for constructing the critical value; \textcolor{black}{see also \cite{gn10} and \cite{cck14} for constructions of adaptive confidence bands in low-dimensional smooth nonparametric models and \cite{bkks20} for a recent proposal  in high-dimensional generalized additive models.}   With a slight abuse of notation,  an initial Bayesian credible region  $[f_l,f_u]$  can be constructed similarly with
the constant $c_p$ determined such that
$$
\Pi \{f_0 \in [f_l,f_u]   \mid S\} \geq p,
$$
where $S$ denotes data (can be a set of statistics derived from data in robust Bayes procedures, for example, means or empirical moment functions), $[f_l,f_u]$ is a measurable function of $S$,
and $\Pi (\cdot \mid S) $ denotes posterior distribution of parameter $f_0$ (viewed as a random element in the Bayesian approach), induced
by $S$ and a prior distribution over potential values $f_0$ can take.  We give empirical and numerical examples in Section 5. 

To enforce the shape restriction, we apply a suitable shape-enforcing operator to the original point estimator and end-point functions of the confidence band.
The resulting estimator, $\mathbf{O}f$, and confidence band, $[\mathbf{O} f_l, \mathbf{O} f_u]$, improve over $f$ and $[f_l, f_u]$ in the sense that 
$f$ lies weakly closer to $f_0$ and  the width of the band $[\mathbf{O} f_l, \mathbf{O} f_u]$ is weakly smaller than that of $[f_l, f_u]$, while the coverage is weakly greater.
These properties of  $\mathbf{O}f$ and  $[\mathbf{O} f_l, \mathbf{O} f_u]$ are a  corollary of Definition~\ref{def:SE_operators}:

\begin{corollary}[Improved Point and Interval Estimators]\label{CI}
	Suppose we have a target function $f_0 \in \ell^{\infty}_1(\X)$, an estimator $f \in  \ell^{\infty}_0(\X)$ a.s., and a confidence band $[f_l,f_u]$ such that $f_l,f_u \in \ell^{\infty}_0(\X)$ a.s.  If the operator $\mathbf{O}$ is $\ell^{\infty}_1$-enforcing with respect to $\ell^{\infty}_0(\X)$, then a.s.
	
	(1) the $\ell^{\infty}_1$-enforced confidence band $[\mathbf{O} f_l, \mathbf{O} f_u]$ has weakly greater coverage than $[f_l,f_u]$:
	$$ 1\{f_0 \in [\mathbf{O} f_l,\mathbf{O} f_u] \} \geq 1\{(f_0 \in [f_l,f_u] \}.$$
	
	If in addition $\mathbf{O}$ is a $\rho$-distance contraction, then a.s.
	
	(2) the $\ell^{\infty}_1$-enforced estimator $\mathbf{O} f$ is weakly closer to $f_0$  than $f$  with respect to the distance $\rho$,
	$$\rho(\mathbf{O}f, f_0) \leq \rho(f,f_0);$$
	
	(3) and the $\ell^{\infty}_1$-enforced confidence band $[\mathbf{O} f_l, \mathbf{O} f_u]$ is weakly shorter than $[f_l,f_u]$ with respect to the distance $\rho$,
	$$\rho(\mathbf{O}f_l,\mathbf{O} f_u) \leq \rho(f_l,f_u).$$
	
	%
	%
\end{corollary}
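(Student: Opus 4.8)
The plan is to derive all three conclusions directly from the defining axioms of a shape-enforcing operator (reshaping, invariance, order preservation) in Definition~\ref{def:SE_operators}, together with the distance-reducing property of Definition~\ref{def:dc}, working pointwise on the almost-sure event that $f, f_l, f_u \in \ell^{\infty}_0(\X)$. Since $f_0 \in \ell^{\infty}_1(\X)$ is assumed throughout, invariance will be the recurring device that lets us replace $\mathbf{O} f_0$ by $f_0$.

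For part (1), since both indicators take values in $\{0,1\}$, it suffices to show that the event $\{f_0 \in [f_l, f_u]\}$ is contained in the event $\{f_0 \in [\mathbf{O} f_l, \mathbf{O} f_u]\}$. So I would assume $f_l \leq f_0 \leq f_u$ pointwise on $\X$ and aim to deduce $\mathbf{O} f_l \leq f_0 \leq \mathbf{O} f_u$. The key move is to apply order preservation to the inequalities $f_l \leq f_0$ and $f_0 \leq f_u$, obtaining $\mathbf{O} f_l \leq \mathbf{O} f_0$ and $\mathbf{O} f_0 \leq \mathbf{O} f_u$, and then to invoke invariance---valid because $f_0 \in \ell^{\infty}_1(\X)$---to replace $\mathbf{O} f_0$ by $f_0$. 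Chaining these yields $\mathbf{O} f_l \leq f_0 \leq \mathbf{O} f_u$, which is exactly the desired containment of events.

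For part (2), I would apply the distance-reducing property to the pair $(f, f_0)$, giving $\rho(\mathbf{O} f, \mathbf{O} f_0) \leq \rho(f, f_0)$, and then use invariance once more, $\mathbf{O} f_0 = f_0$, to rewrite the left-hand side as $\rho(\mathbf{O} f, f_0)$. For part (3), the conclusion is simply the distance-reducing property applied to the pair $(f_l, f_u)$, with no further input required.

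I do not anticipate a substantive obstacle: the corollary is essentially a bookkeeping consequence of the operator axioms. The only point that warrants care is that part (1) genuinely requires two axioms at once---order preservation to propagate the pointwise sandwiching through $\mathbf{O}$, and invariance to pin down $\mathbf{O} f_0 = f_0$---and that invariance is applicable precisely because the target $f_0$ is assumed to lie in the restricted class $\ell^{\infty}_1(\X)$. I would also emphasize that part (1) uses none of the contraction structure, explaining why it holds under the weaker hypotheses, whereas parts (2) and (3) additionally need $\mathbf{O}$ to be a $\rho$-distance contraction.
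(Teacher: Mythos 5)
Your proof is correct and follows exactly the paper's own argument: order preservation plus invariance (using $f_0 \in \ell^{\infty}_1(\X)$) for part (1), distance contraction plus invariance for part (2), and distance contraction alone for part (3). No gaps; nothing further needed.
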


Part (1) shows that  $[\mathbf{O}f_{l},\mathbf{O}f_{u}]$ provides a  coverage improvement over  $[f_{l},f_{u}]$ in that $[\mathbf{O}f_{l},\mathbf{O}f_{u}]$  contains $f_0$ whenever $[f_{l},f_{u}]$ does.  Part (2) shows that the shape-enforced point estimator improves over the original estimator in terms of estimation error measured by the $\rho$-distance between the estimator and the target function. Parts (1) and (3) show that the shape-enforced confidence band not only has greater coverage but also is shorter with respect to the $\rho$-distance than the original band. These improvements apply to any sample size. In particular, they imply that enforcing the shape restriction preserves the statistical properties of the point and interval estimators. Thus, the shape-enforced estimator inherits the rate of consistency of the original estimator, and the shape-enforced confidence band has coverage at least $p$ in large samples if the original band has coverage $p$ in large samples. Corollary~\ref{CI} can therefore be coupled with Theorems~\ref{RR}--\ref{PP} to yield improved inference on a function that satisfies any of the shape restrictions considered in the previous section. It is also worthwhile noting that further quantifying the exact size of improvement depends on $f_0$ and properties of the obtained estimators $f$ and $[f_l, f_u]$. 


\begin{remark}[Model Misspecification]  \textcolor{black}{Let $f_{\infty}$ denote the probability limit of the estimator $f$, provided that the limit exists. Model misspecification occurs when $f_{\infty}$ is different from the target function $f_0$.} In this case the results of Corollary~\ref{CI} still apply. Moreover,  if $f_{\infty}$ does not satisfy the shape restriction, $f_{\infty} \not\in \ell_1^{\infty}(\X)$, then enforcing this restriction also improves estimation and inference on $\mathbf{O} f_{\infty}$. Thus, the probability limit of the shape-enforced estimator, $\mathbf{O} f_{\infty}$, is closer to $f_0$ in $\rho$-distance than $f_{\infty}$, and the shape-enforced confidence band, $[\mathbf{O} f_l, \mathbf{O} f_u]$, covers $\mathbf{O} f_{\infty}$ with at least the same probability as $[f_l, f_u]$ covers $f_{\infty}$ and $[\mathbf{O} f_l, \mathbf{O} f_u]$ is shorter than $[f_l, f_u]$ in $\rho$-distance.
\end{remark}


\medskip

\section{Implementation Algorithms} \label{sec:alg}
We provide implementation algorithms for the different shape-enforcing operators based on a sample or grid of $n$ points $\X_n = \{x_1, \ldots, x_n\}$ with corresponding values of $f$ given by the array $\Y_n  = \{y_1, \ldots, y_n\}$ with $y_i = f(x_i)$. Computation of the $\mathbf{R}$-operator is trivial, as it amounts to thresholding the elements of $\Y_n$ to be between $\underline{f}$ and $\overline{f}$, i.e.,
$$
\mathbf{R} f(x_i) = (\underline{f} \vee y_i) \wedge \overline{f}.
$$
When $k=1$, \cite{CFG09} showed that the  $\mathbf{M}$-operator sorts the elements of $\Y_n$. Thus, assume that $x_{1} \leq x_{2} \leq \ldots \leq x_{n}$ and let $y_{(1)} \leq y_{(2)} \leq \ldots \leq y_{(n)}$ denote the sorted array of $\Y_n$. Then,
$$
\mathbf{M} f(x_i) =  y_{(i)}.
$$
When $k>1$, each $\mathbf{M}_{j}$-operator in Definition~\ref{def:M} can be computed by applying the same sorting procedure to the dimension $j$ sequentially for each possible value of the other dimensions.   We refer to \cite{CFG09} for more details on computation. We next develop new algorithms for the $\mathbf{C}$ and $\mathbf{Q}$ operators.

\subsection{Computation of  $\mathbf{C}$-Operator}
When $k=1$, we can obtain the greatest convex minorant  using the standard method based on the pool adjacent violators algorithm described in \cite{barlowstatistical}. We provide an algorithm for the case where $k>1$.
By Definitions \ref{def:FL} and~\ref{def:DFL}, the DFL transform of $f$ is the solution to
$$
\mathbf{C} f(x) = \sup_{\xi \in f^*(\X)} \inf_{\tilde x \in \X} \{ \xi'(x - \tilde x) + f(\tilde x) \}.
$$
This is a saddle point problem that might be difficult to tackle directly. However, when $\X$ is replaced by the finite grid $\X_n$, the problem has a convenient linear programming representation:
\begin{eqnarray}\label{eq:DLFatx}
	\mathbf{C} f(x) = &\max_{v \in \mathbb{R}, \xi \in \mathbb{R}^k} & v \\
	& \textrm{s.t. } & v+\xi'(x_i-x) \leq f(x_i), \quad i=1,2,\ldots,n. \nonumber
\end{eqnarray}
This program can be solved using standard linear programming methods. In particular, the computational complexity of the standard interior point method for solving \eqref{eq:DLFatx} is $O((k+1)(n+(k+1))^{1.5})$, where $k+1$ is the number of decision variables and $n$ is the number of constraints.
%
%
%
%
%
%
%

The following algorithm summarizes the computation of the $\mathbf{C}$-Operator.
\begin{algorithm}[$\mathbf{C}$-Operator]\label{Algo:C}
	(1) Pick a dense enough grid of size $n$ in $\mathcal{X}$, denoted as $\mathcal{X}_n$. One natural choice is the set of values of $x$ observed in the data. (2) For each $x\in \mathcal{X}_n$, solve the linear programming problem stated in (\ref{eq:DLFatx}) to obtain $\mathbf{C} f(x)$. \end{algorithm}

\subsection{Computation of  $\mathbf{Q}$-Operator}
We propose a method to compute the $\mathbf{Q}$ operator based on solving problem \eqref{def:Q} on a finite grid, namely
$$
\mathbf{Q} f(x) = \min \left\{ y \in \Y_n : x \in \mathrm{conv}[\mathcal{I}_{f,n}(y)] \right\},
$$
where $\mathcal{I}_{f,n}(y)  = \{ x_i : y_i = f(x_i) \leq y, i = 1,2,\ldots,n\}$. We find the solution to the program using the following bisection search algorithm:

%
%

\begin{algorithm}[$\mathbf{Q}$-Operator]\label{Algo:QC}
	For a given $x \in \X_n$: (1) Initialize $y_L=y_{(1)}$ and $y_U=y_{(n)}$. (2) Find the median  of $\{y \in \Y_n: y_L \leq y \leq y_U\}$ and assign it to $y^*$. (3) Compute the lower contour set $\mathcal{I}_{f,n}(y^*)$. (4)  If $x \in \mathrm{conv}[\mathcal{I}_{f,n}(y^*)]$ (which indicates  $y^* \geq \mathbf{Q} f(x)$), set $y_U=y^*$; otherwise, set $y_L=y^*$. (5) Repeat (2)--(4) until $y_U=y_L$ and report $\mathbf{Q} f(x)  = y_U$.
	%
\end{algorithm}

The binary search algorithm for the $\mathbf{Q}$-Operator runs in \textcolor{black}{$O(\log(n))$} iterations. The major computational cost within each iteration is the check of whether $x$ is in the convex hull in step (4). This check does not require construction of the actual convex hull, which is computationally expensive especially in high dimensions. Instead, it is sufficient to check the existence of a feasible solution of a linear program.

We further note that each of the above two algorithms can be run in parallel across the grid points,
because the output of the algorithm for one grid point does not depend on the
output for any other grid point. This parallelizability allows for efficient
computation on nontrivial grids.

\medskip

\section{Numerical Examples} \label{sec:num}

%
%

%

\subsection{Univariate Case} We consider an empirical application to growth charts and a calibrated simulation where the target function $f_0$ is univariate.

\subsubsection{Height Growth Charts for Indian Children}
Since their introduction by Quetelet in the 19th century, reference
growth charts have become common tools to assess an individual's
health status. These charts describe the evolution of individual
anthropometric measures, such as height, weight, and body mass
index, across different ages. See \cite{cole} for a classical work
on the subject, and \cite{koenker:charts} for a detailed analysis
and additional references.
Here we consider the estimation of height growth charts  imposing monotonicity and concavity restrictions. These restrictions are plausible, since an individual's height is nondecreasing in age at a nonincreasing growth rate during early childhood; see, e.g., \cite{twt66} and the  growth child standards of the World Health Organization at \url{https://www.who.int/childgrowth/en/}.


We use the data from \cite{fnh14} and \cite{koenker2011} on childhood malnutrition in India. These data include a measure of height in centimeters, $Y$,  age in months, $X$,  and 22 covariates, $Z$, for 37,623 Indian children.  All of the children have ages between 0 and 5 years, i.e., $X\in \X = \{0,1,\ldots,59\}$. The covariates $Z$ include the mother's body mass index, the number of months the child was breastfed, and the mother's age (as well as the square of the
previous three covariates); the mother's years of education and the
father's years of education; indicator variables for the child's sex,
whether the child was a single birth or part of a multiple birth, whether
the mother was unemployed, whether the mother's residence is
urban or rural, and whether the mother has each of: electricity, a
radio, a television, a refrigerator, a bicycle, a motorcycle, and
a car; and factor variables for birth order of the child, the mother's
religion and quintiles of wealth.

\medskip

We assume a partially linear model for the conditional expectation of $Y$ given $X$ and $Z$, namely
$$
\Ep[Y \mid X=x, Z = z] = f_0(x) + z'\gamma.
$$
The target function is the conditional average growth chart $x \mapsto f_0(x)$, which we assume to be nondecreasing  and concave.  Since $X$ is discrete, we can express $f_0(x) = P(x)'\beta$, where $P(x)$ is a vector of  indicators for each value in $\X$, i.e., $P(X) = [1(X=0), 1(X=1), \ldots, 1(X=59)]'$. We estimate $\beta$ and $\gamma$ by least squares of $Y$ on $P(X)$ and $Z$, and construct a confidence band for $f_0$ on $\X$ using weighted bootstrap with standard exponential weights and 200 repetitions \cite{praestgaard1993exchangeably,hahn1995bootstrapping}. Standard errors are estimated using bootstrap rescaled interquartile ranges \cite{chernozhukov2013inference}, and the critical value is the bootstrap $0.95$-quantile of the maximal $t$-statistic. Weighted bootstrap is computationally convenient in this application because it is less sensitive than empirical bootstrap to singular designs, which are likely to arise in the bootstrap resampling because $Z$ and $P(X)$ contain many indicators.

\medskip

%
%
%
%

Figures \ref{figc1} and~\ref{figc2} report the point estimates and 95\% confidence bands of $f_0$ for the entire sample and a random extract with $1{,}000$ observations,  respectively.  We use the subsample to illustrate the deviations from the shape restrictions that are more apparent when the sample size is small.  The original estimates are displayed in the left panels,  and the estimates imposing monotonicity and concavity in the right panels. The original estimates in the entire sample are nondecreasing in age except at 45 months, and deviate from concavity in some areas. The  $\mathbf{M}$ and $\mathbf{\nOn{C}}$ operators correct these deviations. The estimates in the random extract of the data clearly show deviations from both monotonicity and concavity. The $\mathbf{M}$ and $\mathbf{\nOn{C}}$ operators fix these deviations and produce point estimates that are closer to the estimates in the entire sample.

\begin{figure}[!htb]
	\centering
	\includegraphics[width=.49\textwidth]{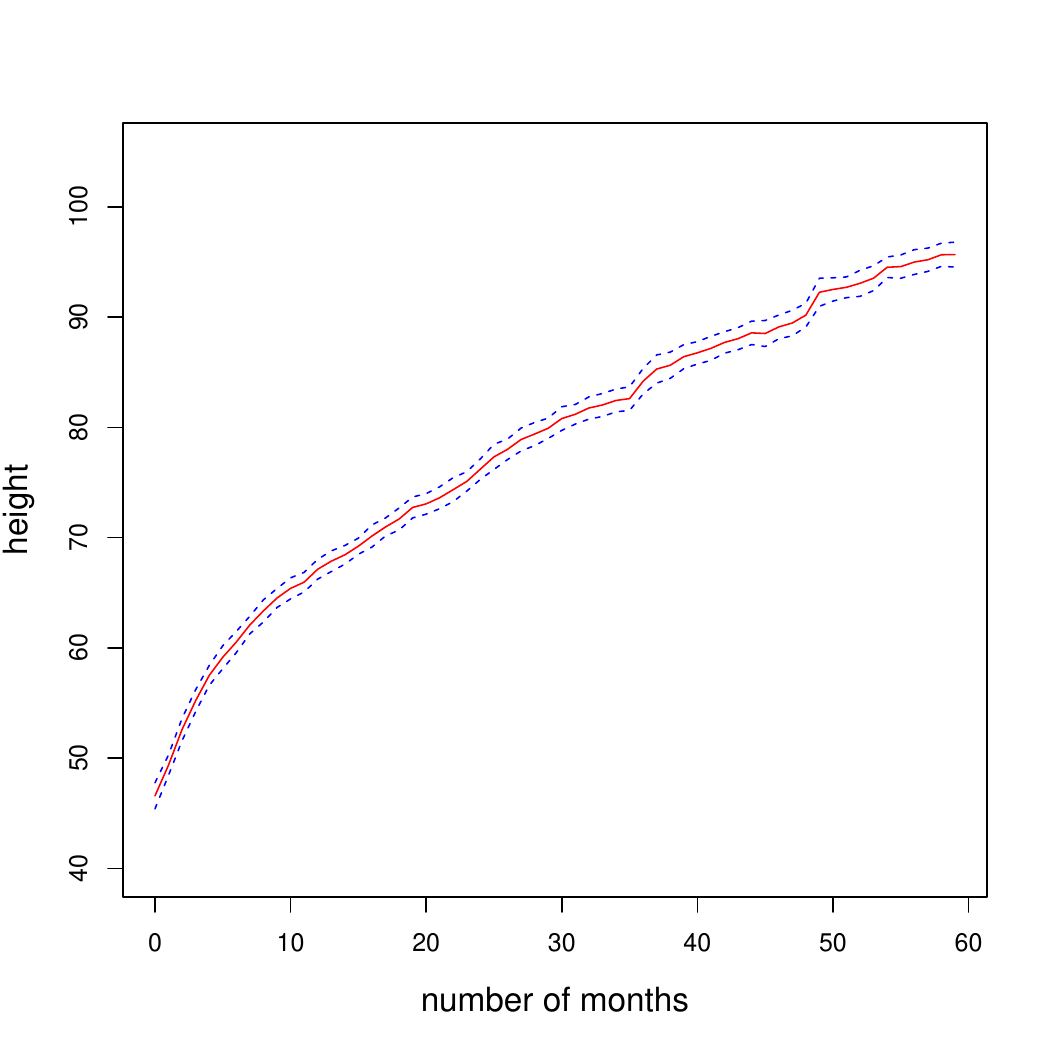}
	\includegraphics[width=.49\textwidth]{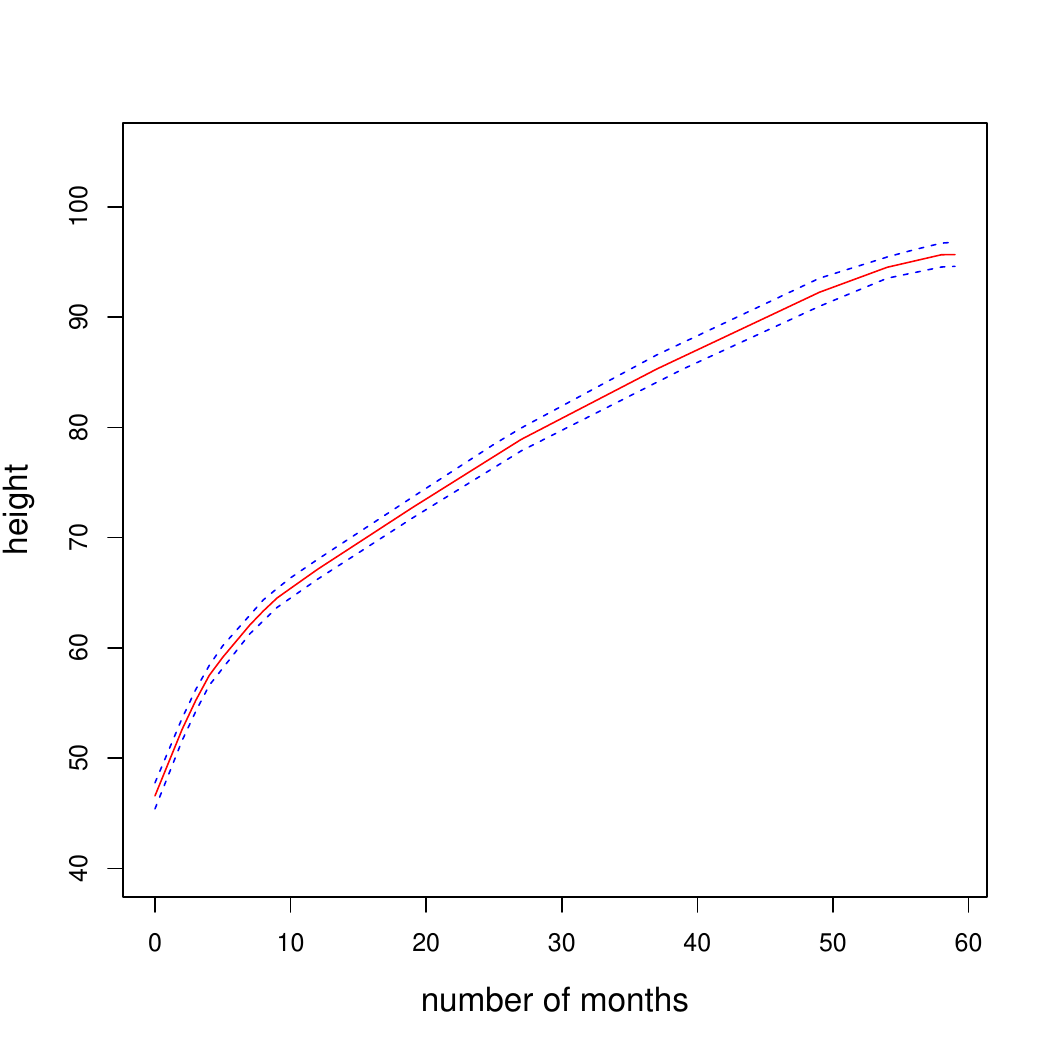}
	\caption{Entire Sample:  Estimates and 95\% confidence bands. Left: $ f$ and $[f_l,f_u]$.
		Right: $\mathbf{\nOn{C}M} f$ and  $[\mathbf{\nOn{C}M} f_l, \mathbf{\nOn{C}M} f_u]$} \label{figc1}
\end{figure}

\begin{figure}[!htb]
	\centering
	\includegraphics[width=.49\textwidth]{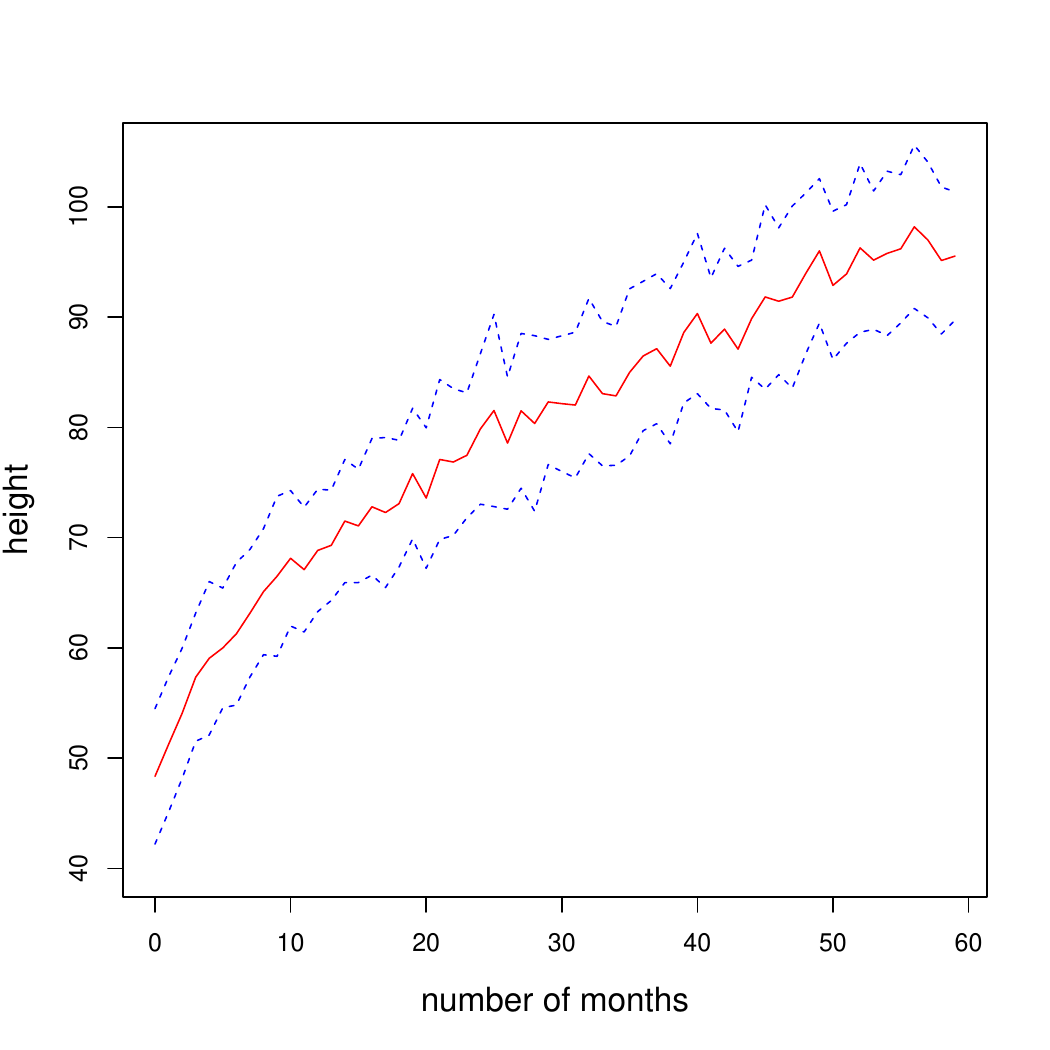}
	\includegraphics[width=.49\textwidth]{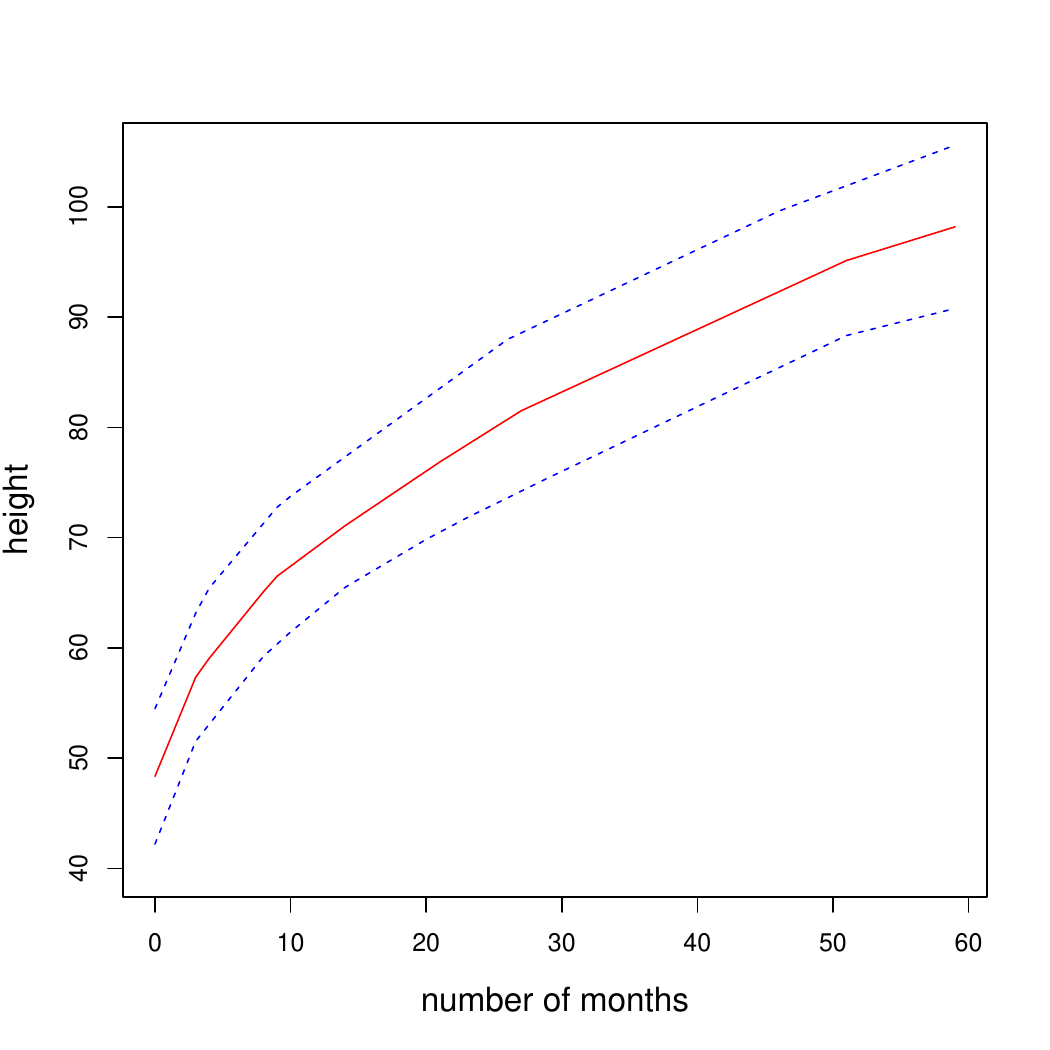}
	\caption{Subsample with 1,000 observations:  Estimates and 95\% confidence bands. Left: $ f$ and $[f_l,f_u]$.
		Right: $\mathbf{\nOn{C}M} f$ and  $[\mathbf{\nOn{C}M} f_l, \mathbf{\nOn{C}M} f_u]$} \label{figc2}
\end{figure}
%
%

\subsubsection{Calibrated Monte Carlo Simulation}
We quantify the finite-sample improvement in the point and interval estimates of enforcing shape restrictions using simulations calibrated to the growth chart application. The child's height, $Y$, is generated by
$$
Y_i = \mathbf{\nOn{C}M} [P(X_i)'\hat \beta] +Z_i'\hat\gamma+ \hat\sigma \epsilon_i, \ \ i = 1,\ldots,n,
$$
where $P(X_i)$ is the vector of indicators for all the values of $\X$; $\hat \beta$, $\hat\gamma$ and $\hat\sigma$ are the least squares estimates of $\beta$, $\gamma$ and the residual standard deviation in the growth chart data; and $\epsilon_i$ are independent draws from the standard normal distribution.  The application of the $\mathbf{\nOn{C}M}$-operator guarantees that the target function $f_0(x) = \mathbf{\nOn{C}M} [P(x)'\hat \beta]$ is monotone and concave. We consider six sample sizes, $n \in \{500, 1{,}000, 2{,}000, 4{,}000, 8{,}000, 37{,}623\}$, where $n = 37{,}623$ is the same sample size as in the empirical application. The values of $X_i$ and $Z_i$ are randomly drawn from the data without replacement. The results are based on 500 simulations. In each simulation we construct point and band estimates of $f_0$ using the same methods as in the empirical application.

%
%

\begin{table}[!t]
	\caption{Finite-Sample Properties}
	\label{bias_bw}
	\begin{tabular}{lcccccc}
		\hline \hline
		& $\|f - f_0 \|_{\infty}$ & $\|f_u - f_l \|_{\infty}$ & $\Pr(f_0 \in [f_l,f_u])$ & $\|f - f_0 \|_{\infty}$ & $\|f_u - f_l \|_{\infty}$ & $\Pr(f_0 \in [f_l,f_u])$ \\ \hline \\
		& \multicolumn{3}{c}{$n = 500$} & \multicolumn{3}{c}{$n = 1{,}000$} \\ \\
		Original & 7.45 & 26.31 & 0.69   & 4.83 & 18.45 & 0.84  \\
		$\mathbf{\nOn{C}}$ & 6.28 & 21.77 & 0.79 & 4.16 & 16.95 & 0.89 \\
		$\mathbf{M}$ & 5.23 & 21.24 & 0.92 & 3.72 & 16.58 & 0.95 \\
		$\mathbf{\nOn{C}M}$ & 4.73 & 20.15 & 0.93 & 3.30 & 15.97 & 0.95 \\ \\
		& \multicolumn{3}{c}{$n = 2{,}000$} & \multicolumn{3}{c}{$n = 4{,}000$} \\ \\
		Original & 3.35 & 13.55 & 0.90   & 2.29 & 9.61 & 0.93 \\
		$\mathbf{\nOn{C}}$ & 2.93 & 13.21 & 0.94 & 1.98 & 9.53 & 0.96 \\
		$\mathbf{M}$ & 2.79 & 12.89 & 0.96 & 2.00 & 9.43 & 0.97 \\
		$\mathbf{\nOn{C}M}$ & 2.48 & 12.64 & 0.97 & 1.76 & 9.34 & 0.98 \\    \\
		& \multicolumn{3}{c}{$n = 8{,}000$} & \multicolumn{3}{c}{$n = 37{,}623$} \\ \\
		Original & 1.61 & 6.92 & 0.93  & 0.72 & 3.22 & 0.95  \\
		$\mathbf{\nOn{C}}$ & 1.40 & 6.90 & 0.97 & 0.64 & 3.21 & 0.97 \\
		$\mathbf{M}$ & 1.47 & 6.89 & 0.96 & 0.71 & 3.22 & 0.96  \\
		$\mathbf{\nOn{C}M}$ & 1.30 & 6.87 & 0.98  & 0.63 & 3.21 & 0.98 \\ \\
		\hline \hline
		\multicolumn{7}{l}{\footnotesize{Notes: Based on $1{,}000$ simulations. Nominal level of the confidence bands is $95\%$. } }\\
		\multicolumn{7}{l}{\footnotesize{Confidence bands constructed by weighted bootstrap with standard exponential weights and $200$ repetitions.} }
	\end{tabular}
\end{table}


Table~\ref{bias_bw} reports simulation averages of the $d_{\infty}$-distance between the estimates and target function, coverage of the target function by the confidence band and $d_{\infty}$-length of the confidence band for the original and shape-enforced estimators. We consider enforcing concavity with the $\mathbf{\nOn{C}}$-operator, monotonicity with the $\mathbf{M}$-operator, and both concavity and monotonicity with the $\mathbf{\nOn{C}M}$-operator. The improvements from imposing the shape restrictions are decreasing in the sample size, but there are substantial benefits in estimation error even with the largest sample size. Enforcing monotonicity has generally stronger effects than enforcing concavity, but both help improve the estimates. Thus,  the $\mathbf{\nOn{C}M}$-operator produces the best point and interval estimators for every sample size. For the smallest sample size, the reduction in estimation error is almost 37\%  and the improvement in length of the confidence band is more than 20\%. The gains in coverage probability are also substantial, especially for the smaller sample sizes. Overall, the simulation results clearly showcase the benefits of enforcing shape restrictions, even with large sample sizes.

\begin{table}[ht]
	\centering
	\textcolor{black}{
	\caption{\textcolor{black}{Comparison  of Estimators}} \label{comparison}
	\begin{tabular}{lcccccc}
		\hline\hline
		& 500 & 1,000 & 2,000 & 4,000 & 8,000 & 37,623 \\ 
		\hline
		Piecewise Constant & 5.85 & 3.93 & 2.66 & 1.82 & 1.27 & 0.58 \\ 
		\quad $\mathbf{\nOn{C}}$ PC & 5.32 & 3.65 & 2.49 & 1.71 & 1.19 & 0.54 \\ 
		\quad   $\mathbf{S\nOn{C}}$ PC & 4.89 & 3.41 & 2.34 & 1.59 & 1.10 & 0.52 \\ 
		\quad   $\mathbf{M}$  PC& 3.73 & 2.83 & 2.13 & 1.55 & 1.15 & 0.57 \\ 
		\quad   $\mathbf{\nOn{C}M}$ PC & 3.54 & 2.67 & 2.01 & 1.46 & 1.08 & 0.53 \\ 
		\quad   $\mathbf{S\nOn{C}M}$ PC & 3.29 & 2.45 & 1.84 & 1.33 & 0.97 & 0.50 \\ 
		Conreg & 3.04 & 2.27 & 1.64 & 1.16 & 0.83 & 0.42 \\ 
		Isoreg & 3.82 & 2.95 & 2.21 & 1.60 & 1.17 & 0.57 \\ 
		\quad   $\mathbf{\nOn{C}}$ Isoreg & 3.52 & 2.75 & 2.08 & 1.50 & 1.10 & 0.53 \\ 
		\quad   $\mathbf{S\nOn{C}}$ Isoreg & 3.28 & 2.53 & 1.90 & 1.37 & 1.00 & 0.50 \\ 
		Locally Linear & 2.66 & 2.03 & 1.53 & 1.15 & 0.87 & 0.52 \\ 
		\quad   $\mathbf{\nOn{C}}$ LL & 2.64 & 2.02 & 1.52 & 1.14 & 0.87 & 0.52 \\ 
		\quad   $\mathbf{S\nOn{C}}$ LL & 2.61 & 1.99 & 1.51 & 1.13 & 0.87 & 0.53 \\ 
		\quad   $\mathbf{M}$ LL & 2.60 & 1.99 & 1.50 & 1.13 & 0.87 & 0.52 \\ 
		\quad   $\mathbf{\nOn{C}M}$-LL & 2.58 & 1.98 & 1.50 & 1.13 & 0.87 & 0.52 \\ 
		\quad   $\mathbf{S\nOn{C}M}$ LL & 2.54 & 1.94 & 1.48 & 1.12 & 0.86 & 0.52 \\ 
		\hline\hline
		\multicolumn{7}{l}{\footnotesize{Notes: Based on $5{,}000$ simulations.  Entries are $\|f - f_0 \|_{\infty}$.} }\\
	\end{tabular}
	}
\end{table}

\textcolor{black}{
	We compare the $d_{\infty}$-error of several estimators in the simplified design
	$$
	Y_i = \mathbf{\nOn{C}M} [P(X_i)'\hat \beta] + \hat\sigma \epsilon_i, \ \ i = 1,\ldots,n,
	$$
	where $P(X_i)$, $\hat \beta$, $\hat\sigma$, $\epsilon_i$ and $n$ are the same as for Table~\ref{bias_bw}. We consider unconstrained, shape-constrained, shape-enforced, and combinations of shape-enforced and shape-constrained estimators. The unconstrained estimators include the same estimator as in Table~\ref{bias_bw} (Piecewise Constant) and a locally linear estimator with data-driven choice of bandwidth (Locally Linear).\footnote{The piecewise constant estimator can be viewed as a locally constant estimator with bandwidth equal to zero. The locally linear estimator is computed using the package \texttt{KernSmooth} \cite{KernSmooth19} with the bandwidth chosen by the plug-in method of \cite{rsw95}.} We consider two classical shape-constrained estimators: the isotonic regression estimator (Isoreg) that imposes monotonicity and the concave regression estimator (Conreg) that imposes concavity.\footnote{We compute the isotonic regression using the \texttt{R} command \texttt{isoreg} \cite{r19}, and the concave regression using the package \texttt{cobs} \cite{cobs20}.} We illustrate how to combine shape-enforced operators with shape-constrained estimators by applying the $\mathbf{\nOn{C}}$-operator to the isotonic regression estimator to enforce monotonicity and concavity. Finally, we compare the $\mathbf{\nOn{C}}$-operator with the $\mathbf{S\nOn{C}}$-operator defined in Remark \ref{remark:sc}.}

\textcolor{black}{Table~\ref{comparison} shows the results based on $5{,}000$ simulations. The comparison between shape-constrained and shape-enforced estimators produces  mixed results, which vary with the estimator, shape restriction and sample size. Thus, the $\mathbf{M}$-operator outperforms Isoreg for both unconstrained estimators, whereas Conreg outperforms the $\mathbf{C}$-operator applied to Piecewise Constant. The unconstrained locally linear estimator outperforms Isoreg, Conreg and the shape-enforced estimators applied to Piecewise Constant for most sample sizes, despite the target function not being smooth. This finding highlights the benefit of using  estimators that exploit smoothness  when the  sample size is not large.  On the other hand, the shape-enforcing operators are more effective when applied to estimators such as Piece Constant and Isoreg that do not rely on smoothness. Shifting the $\mathbf{\nOn{C}}$-operator to deal with potential bias generally reduces estimation error for all the estimators considered.}

\subsection{Multivariate Case} We consider an empirical application to production functions and a calibrated simulation where the target function $f_0$ is bivariate.

\subsubsection{Production Functions of Chinese Firms}
The production function is a fundamental relationship in economics that maps the quantity of inputs, such as capital, labor and intermediate goods, to the quantity of output of a firm. When there are only two inputs, the law of diminishing marginal rate of technical substitution dictates that the production function of a firm is nondecreasing and quasi-concave in the inputs \cite{Hicks_Allen_1934}.  If in addition the industry exhibits diminishing returns to scale, then the production function is concave in the inputs. We use the data from \cite{Jacho-Chavez_Lewbel_Linton_2010},
and \cite{Horowitz_Lee_2017} to estimate the production function of Chinese firms in the chemical industry. These data contain information on real value added (output), real fixed assets (capital) and number of employees (labor) for 1,638 firms in 2001.\footnote{Following \cite{Jacho-Chavez_Lewbel_Linton_2010} and \cite{Horowitz_Lee_2017}, we drop observations with a capital-to-labor ratio below the $0.025$ sample quantile or above the $0.975$ sample quantile.}   We estimate a production function using these data and enforce the monotonicity and quasi-concavity restrictions.  We provide results from enforcing concavity  only for illustrative purposes because the chemical industry might exhibit increasing returns to scale at some levels of the inputs.

\begin{figure}
	\centering
	\includegraphics[width=\textwidth]{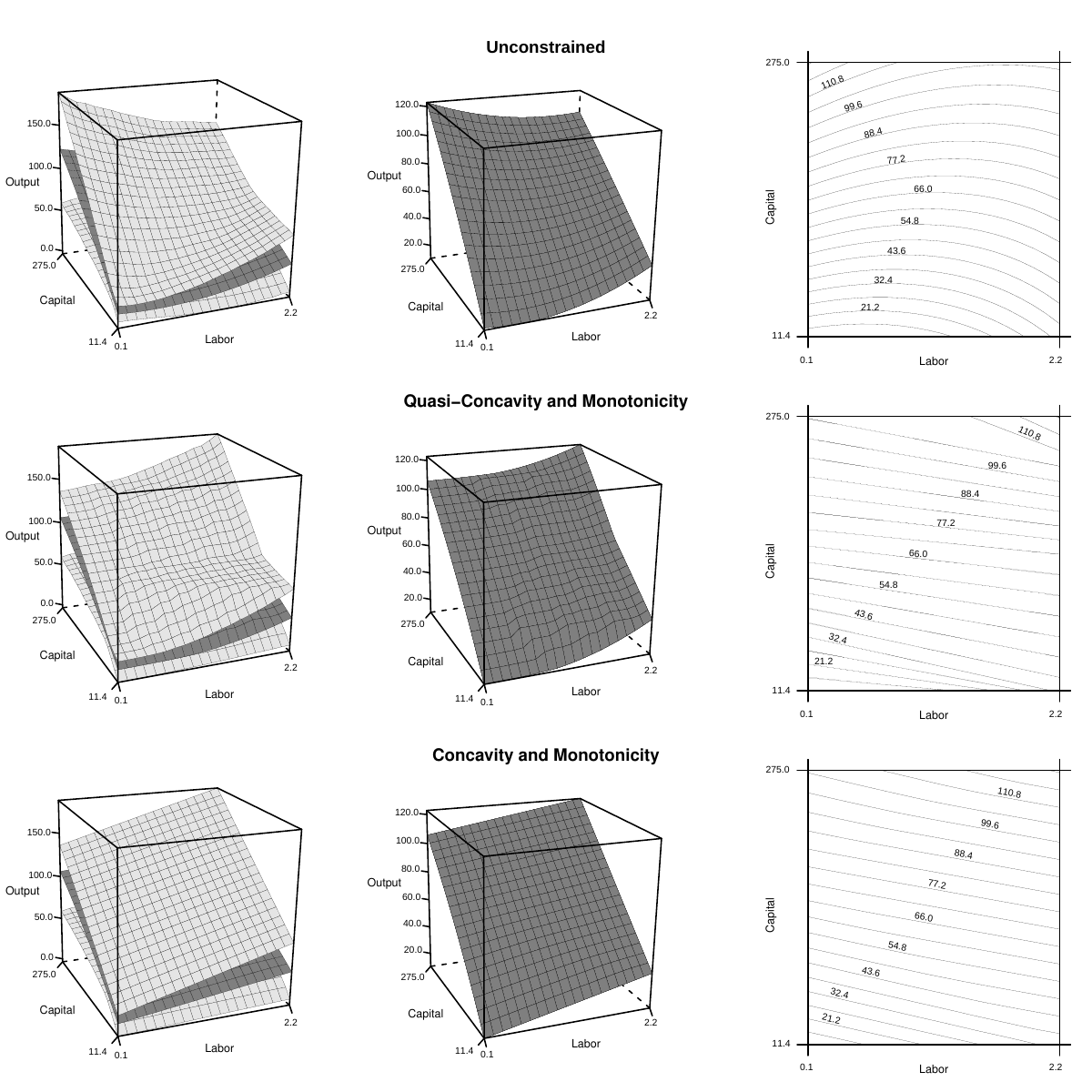}
	\caption{Confidence Bands and Countour Maps of Shape-Constrained and Unconstrained
		Estimates of the Production Function. Output and capital are measured in millions of 2000 yuan, and labor is measured in thousands of workers.}\label{fig_prodfn01}
\end{figure}

Figure~\ref{fig_prodfn01} shows 3-dimensional estimates and 95\% confidence
bands for the average production function, together with upper contour sets for the point estimates. The estimates and bands are displayed in a region defined by the tensor product of two grids for labor and capital. Each grid includes 20 equidistant points from the 10\% to the 90\% sample percentiles of the corresponding variable.  We obtain the unconstrained estimates from least squares with the tensor product of third-degree global polynomials as the two marginal bases for capital and labor. The confidence bands are constructed using  weighted bootstrap with standard exponential weights and 500 repetitions. Standard errors are estimated using bootstrap rescaled interquartile ranges and the critical value is the bootstrap $0.95$-quantile of the maximal $t$-statistic. Many of the upper contour sets of the unconstrained point estimates
are far from being convex, and thus imply a violation of quasi-concavity. In fact, violations of monotonicity occur over a considerable area---most notably, see the positive slopes of the contour curves at low levels of labor and high levels of capital (the upper-left region of the contour plot).

The second row of Figure~\ref{fig_prodfn01} shows the results after the
$\mathbf{\nOn{Q}M}$-operator is applied to the point estimates and to each end-point function of the
confidence band to ensure monotonicity and quasi-concavity.
The contour curves are convex by construction, and thus satisfy
the quasi-concavity restrictions. Finally, the third row of Figure~\ref{fig_prodfn01} shows the results after the $\mathbf{\nOn{C}M}$-operator is applied to enforce monotonicity and concavity. Although quasi-concavification of a production function estimate is always
reasonable, whether restriction to concavity is appropriate depends on prior
knowledge of the industry.

\subsubsection{Calibrated Monte Carlo Simulation}

\begin{table}[ht]
	\centering
	\caption{Finite-Sample Properties}
	\label{mc_prodfn}
	\begin{tabular}{llccccc}
		\hline
		\hline
		\multicolumn{2}{l}{} & \multicolumn{5}{c}{$n$} \\
		&  & 100 & 200 & 500 & 1,000 & 1,638 \\ 
		\hline
		$\Pr(f_{0}\in[f_{l},f_{u}])$ & Original & 0.83 & 0.89 & 0.93 & 0.94 & 0.95 \\ 
		& $\mathbf{\nOn{Q}M}$ & 0.93 & 0.97 & 0.98 & 0.97 & 0.96 \\ 
		& $\mathbf{\nOn{C}M}$ & 0.93 & 0.97 & 0.99 & 0.98 & 0.97 \\ 
		\\
		$\|f-f_{0}\|_{\infty}$ & Original & 297 & 88.9 & 36.5 & 20.4 & 13.2 \\ 
		& $\mathbf{\nOn{Q}M}$ & 271 & 67.1 & 26.6 & 17.6 & 12.7 \\ 
		& \textcolor{black}{$\mathbf{S\nOn{Q}M}$ }& \textcolor{black}{270 }& \textcolor{black}{66.7 }& \textcolor{black}{26.2 }& \textcolor{black}{17.4 }& \textcolor{black}{12.6 }\\ 
		& $\mathbf{\nOn{C}M}$ & 270 & 66.3 & 25.7 & 17.3 & 12.6 \\ 
		& \textcolor{black}{$\mathbf{S\nOn{C}M}$ }& \textcolor{black}{250 }& \textcolor{black}{63.0 }& \textcolor{black}{25.2 }& \textcolor{black}{17.3 }& \textcolor{black}{12.6 }\\ 
		\\
		$\|f_{u}-f_{l}\|_{\infty}$ & Original & 2297 & 484 & 190 & 109 & 72.0 \\ 
		& $\mathbf{\nOn{Q}M}$ & 1441 & 332 & 135 & 82.8 & 62.0 \\ 
		& $\mathbf{\nOn{C}M}$ & 1440 & 331 & 135 & 82.8 & 62.0 \\ 
		\hline
		\hline
		\multicolumn{7}{l}{\footnotesize{Notes: Based on \textcolor{black}{5,000} simulations. Nominal level of the confidence } }\\
		\multicolumn{7}{l}{\footnotesize{bands  is 95\%.  Confidence bands constructed by weighted bootstrap } }\\
		\multicolumn{7}{l}{\footnotesize{with standard exponential weights and 500 repetitions.} }
	\end{tabular}
\end{table}

Similar to the univariate case, we now explore the finite-sample improvements from enforcing shape restrictions via simulations calibrated to the production function application. The output,  $Y$, of each firm is generated by
$$
Y_i = \hat \gamma + \hat \beta_1 L_i + \hat \beta_2 K_i + \hat \sigma \epsilon_i, \ \ i = 1,\ldots,n,
$$
where $\hat \gamma$, $\hat \beta_1$, $\hat \beta_2$ and $\hat \sigma$ are calibrated to the least squares estimates and the residual standard deviation of this linear regression model in the production function data; $\epsilon_i$ are independent draws from the standard normal distribution; and $n$ is the sample size of the simulated data. The vector $(L, K)$ of labor and capital is drawn without replacement from the original data. The target function is $$
f_0(l,k) := \Ep[ Y \mid L = l, K =k] =  \hat \gamma +  \hat \beta_1 l +  \hat \beta_2 k,
$$
which is increasing and concave in the capital and labor inputs because $\hat \beta_1 > 0$ and $\hat \beta_2 > 0$.
We consider five sample sizes, $n \in \{100, 200, 500, 1{,}000, 1{,}638\}$, where $n = 1{,}638$ is the same sample size as in the empirical application. The results are based on 5,000 simulations. In each simulation we construct point and band estimates of $f_0$ using the same methods as in the empirical application.

Table~\ref{mc_prodfn} reports the same diagnostics as Table~\ref{bias_bw}. The operators $\mathbf{\nOn{Q}M}$ and $\mathbf{\nOn{C}M}$ perform similarly in this case. Both bring substantial gains in estimation and inference\textcolor{black}{, and the shifted variants bring additional gains. Shifting the $\mathbf{\nOn{C}M}$ operator in particular has a notable effect on estimation error for small sample sizes}. The operators reduce estimation error between 5\% and \textcolor{black}{31\%} and the width of the confidence band between 14\% and 37\% in the sup-norm, depending on the sample size.  
The operators also improve the coverage of the confidence bands, especially for the smaller sample sizes.   Indeed, enforcing the constraints compensates for the  undercoverage of the unconstrained estimates for most of the sample sizes considered.


\section{Conclusion}\label{sec:con}

In this paper, we investigate a pool of shape-enforcing operators, including range, rearrangement, double Legendre-Fenchel, quasi-convexification, composition of rearrangement and double Legendre-Fenchel, and composition of rearrangement and quasi-convexification operators. We show that enforcing the shape restrictions through these operators improves point and interval estimators, and provide computational algorithms to implement these shape-enforcing operators. It would be useful to develop operators to enforce other shape restrictions, such as \textcolor{black}{supermodularity} or the Slutsky conditions for demand functions. We leave this extension to future research.

\section*{Acknowledgments}
We are very grateful to Simon Lee for kindly sharing the data for the production function application and to Roger Koenker for kindly making the Indian nutrition data available through his website. We thank the editor Garvesh Raskutti, two anonymous referees, Shuowen Chen and Hiroaki Kaido for comments. We gratefully acknowledge research support from the National Science Foundation and the Spanish State Research Agency MDM-2016-0684 under the Mar\'ia de Maeztu Unit of Excellence Program. Xi Chen is supported by NSF IIS-1845444. Part of this work was completed while Fern\'andez-Val was visiting CEMFI and NYU. He is grateful for their hospitality.

\newpage

\appendix
\section{Proofs}

\subsection*{Proof of Theorem~\ref{RR}} We first show that $\mathbf{R}$ satisfies the three properties of  Definition~\ref{def:SE_operators}.

(1) Reshaping: it holds because for any $f \in \ell^{\infty}(\X)$, $\mathbf{R} f \in \ell_{R}(\X)$ by construction.

(2) Invariance: it holds trivially because $\mathbf{R} f=f$ for any $f\in \ell_{R}(\X)$ by definition of $\mathbf{R}$.

(3) Order preservation: assume that $f,g \in \ell^{\infty}(\X)$ are such that $f\geq g$.  For any $x\in \mathcal{X}$ there are three possible cases.  (a) If $f(x)\geq \overline{f}$, then $\mathbf{R}f (x) = \overline{f} \geq \mathbf{R}g(x)$. (b) If $f(x)\leq \underline{f}$, then $g(x)\leq \underline{f}$, and $\mathbf{R}f (x)=\underline{f} =\mathbf{R}g(x)$. (c) if $\underline{f}<f(x)<\overline{f}$, then $\mathbf{R}f(x)= f(x)\geq \max(g(x),\underline{f}) = \mathbf{R}g(x)$ because $g(x) <\overline{f}$. Thus, $\mathbf{R}f(x)\geq \mathbf{R}g(x)$ for any $x\in \mathcal{X}$.

We next show Definition~\ref{def:dc} for $\rho = d_p$ for any $p \geq 1$. For any $f,g \in \ell^{\infty}(\X)$ assume without loss of generality  that  $f(x)\geq g(x)$ for some $x \in \X$. We need to show that $|\mathbf{R}f(x)-\mathbf{R} g(x)|\leq f(x)-g(x)$. There are five possible cases. (a) If $f(x)\geq g(x)\geq \overline{f}$, then $|\mathbf{R} f(x)-\mathbf{R} g(x)| = |\overline{f}-\overline{f}| = 0 \leq f(x)-g(x)$. (b) If $f(x)\geq \overline{f} \geq g(x)$, then $\mathbf{R}f(x) = \overline{f} \leq f(x)$, and $\mathbf{R} g(x)\geq g(x)$. By the order preservation property proved in (3), $0\leq |\mathbf{R}f(x)-\mathbf{R} g(x)|\leq f(x)-g(x)$.
(c) If $\overline{f}>f(x)\geq g(x)\geq \underline{f}$, then $\mathbf{R} f(x)=f(x)$ and $\mathbf{R}g(x)=g(x)$, and $|\mathbf{R} f(x)-\mathbf{R} g(x)| = f(x)-g(x)$. (d) If $\overline{f}>f(x)>\underline{f} \geq g(x)$, then $0\leq |\mathbf{R}f(x)-\mathbf{R}g(x)| = f(x)-\underline{f}\leq f(x)-g(x)$. (e) If $\underline{f}\geq f(x)\geq g(x)$, then $|\mathbf{R}f(x)-\mathbf{R}g(x)| = |\underline{f}-\underline{f}|=0\leq f(x)-g(x)$. \qed

\subsection*{Proof of Theorem~\ref{DFL}}

Before we prove Theorem~\ref{DFL}, we recall some useful geometric properties of the Legendre-Fenchel transform.

\begin{lemma}[Properties of Legendre-Fenchel transformation]\label{lemma:LF}
	Given a convex set $\X \subset \RR^k$, suppose that $f, g \in \ell^{\infty}_S(\X)$. Then:
	
	(1) Lower semi-continuity: $\mathbf{L}_{\X} f \in \ell^{\infty}_S(\RR^k)$.
	
	%
	
	(2) Convexity: $\mathbf{L}_{\X} f$ is closed convex on $\mathbb{R}^k$.

	(3) Order reversing: If $f\geq g$, then
	$\mathbf{L}_{\X} f\leq
	\mathbf{L}_{\X}g $.
	
	(4) $d_{\infty}$-Distance reducing:
	$\|\mathbf{L}_{\X} f-
	\mathbf{L}_{\X}g \|_{\infty} \leq \|f-g\|_{\infty}$.
	
\end{lemma}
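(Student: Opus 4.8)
The plan is to read off all four properties from the single structural fact that the conjugate $f^*$ is a pointwise supremum, over $x \in \X$, of the functions $a_x(\xi) := \xi'x - f(x)$, each of which is \emph{affine} in $\xi$. With this observation in hand, convexity and lower semi-continuity become consequences of standard convex analysis, while the order-reversal and $d_\infty$-contraction properties follow from elementary monotonicity and non-expansiveness of the supremum.

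First I would establish (1) and (2) together. Each $a_x$ is affine, hence closed and convex on $\RR^k$. Since the pointwise supremum of an arbitrary family of closed convex functions is again closed and convex (see, e.g., \citeasnoun{ConvAna:01}), the function $\xi \mapsto f^*(\xi) = \sup_{x \in \X} a_x(\xi)$ is closed and convex. Closedness is exactly lower semi-continuity of the extended-real-valued function, which gives (1) once finiteness is used to carve out the effective domain $f^*(\X)$ on which $f^* \in \ell^{\infty}(f^*(\X))$; convexity gives (2).

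Next, for (3) I would use that $f \geq g$ implies $\xi'x - f(x) \leq \xi'x - g(x)$ for every $x \in \X$ and every $\xi$; taking the supremum over $x$ preserves the inequality, so $\mathbf{L}_{\X} f(\xi) \leq \mathbf{L}_{\X} g(\xi)$ pointwise. For (4) I would set $\delta := \|f-g\|_{\infty}$ and exploit the sandwich $g - \delta \leq f \leq g + \delta$ on $\X$. Subtracting from $\xi'x$ and taking suprema over $x$ yields $g^*(\xi) - \delta \leq f^*(\xi) \leq g^*(\xi) + \delta$, i.e. $|f^*(\xi) - g^*(\xi)| \leq \delta$ at every $\xi$ where the two values are finite; this simultaneously forces the effective domains $f^*(\X)$ and $g^*(\X)$ to coincide, since $f$ and $g$ differ by a bounded amount, so that the $d_{\infty}$-distance is well defined and $\|\mathbf{L}_{\X} f - \mathbf{L}_{\X} g\|_{\infty} \leq \delta = \|f-g\|_{\infty}$.

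The conceptual content is entirely standard, so the only delicate point---rather than a genuine obstacle---is the bookkeeping of effective domains: one must check that finiteness of $f^*$ and $g^*$ coincides and that the lower semi-continuity asserted on $\RR^k$ is read for the extended-real-valued conjugate, with value $+\infty$ off $f^*(\X)$. I would dispatch this as part of the non-expansiveness computation in (4), which, as noted, yields $f^*(\X) = g^*(\X)$ automatically.
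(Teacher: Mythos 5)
Your proof is correct, but it reaches part (1) by a genuinely different route than the paper. On (3) and (4) the two arguments coincide in substance: the paper proves (4) by exactly the non-expansiveness of suprema, $\sup_{\xi}\sup_{x\in\X}|\{\xi'x-f(x)\}-\{\xi'x-g(x)\}|\leq\|f-g\|_{\infty}$, which is your sandwich inequality in different clothing, and (3) is the same one-line monotonicity of the supremum, which the paper outsources to Proposition 1.3.1 in Chapter E of \citeasnoun{ConvAna:01}. The difference is in (1): the paper proves lower semi-continuity by a hands-on $\epsilon$-argument --- fix $\xi\in f^*(\X)$, pick a near-maximizer $x_0$ with $\xi'x_0-f(x_0)\geq \mathbf{L}_{\X}f(\xi)-\epsilon/2$, and show $\mathbf{L}_{\X}f(\xi_1)\geq \mathbf{L}_{\X}f(\xi)-\epsilon$ whenever $\|\xi_1-\xi\|\leq \min\{1,\epsilon/(2\|x_0\|)\}$ --- and invokes the convex-analysis reference only for (2), whereas you obtain (1) and (2) simultaneously from the single structural fact that a pointwise supremum of affine (hence closed convex) functions is closed and convex. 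Your route is more economical and slightly stronger, since it yields lower semi-continuity of the extended-real-valued conjugate on all of $\RR^k$ rather than only at points of $f^*(\X)$, and your remark that $\|f-g\|_{\infty}<\infty$ forces $f^*(\X)=g^*(\X)$ makes explicit a piece of effective-domain bookkeeping that the paper leaves implicit; what the paper's explicit argument buys is self-containedness, avoiding the extended-real-valued formalism and any appeal to the closedness-under-suprema theorem. One small caveat: your parenthetical claim that finiteness carves out a domain on which $f^*\in\ell^{\infty}(f^*(\X))$, i.e.\ that $f^*$ is \emph{bounded} there, is not literally true (take $\X=[0,1]$ and $f\equiv 0$, so $f^*(\xi)=\xi\vee 0$ is finite everywhere but unbounded); this looseness is inherited from the lemma's own statement $\mathbf{L}_{\X}f\in\ell^{\infty}_S(\RR^k)$ and affects neither your proof nor the paper's, since both only ever use lower semi-continuity.
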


\begin{proof}[Proof of Lemma~\ref{lemma:LF}]
	(1) For any $\xi\in f^*(\X)$ and $\epsilon>0$, there must exist $x_0\in \mathcal{X}$ such that $\xi'x_0 - f(x_0) \geq \mathbf{L}_{\X}f(\xi) - \epsilon/2$. Then, for any $\xi_1$ such that $||\xi-\xi_1||_2\leq \min[1, \epsilon/(2||x_0||)]$, we have:
	\begin{multline*}
		\mathbf{L}_{\X}f(\xi_1)\geq \xi_1'x_0- f(x_0) = (\xi_1-\xi)'x_0+\xi'x_0-f(x_0)\geq -||\xi-\xi_1||_2 ||x_0|| + \mathbf{L}_{\X}f(\xi) - \frac{\epsilon}{2} \\
		\geq -\frac{\epsilon}{2} + \mathbf{L}_{\X}f(\xi) - \frac{\epsilon}{2} \geq \mathbf{L}_{\X}f(\xi) - {\epsilon}.
	\end{multline*}
	Hence, $ \mathbf{L}_{\X}f(\xi) $ is lower semi-continuous at $\xi$. Since $\xi$ can be arbitrary, we conclude that $\mathbf{L}_{\X}f$ is a lower semi-continuous function.

	Properties (2) and (3) are shown in Theorem 1.1.2 and  Proposition 1.3.1 in Chapter E of \cite{ConvAna:01}. For (4), it is easy to check that
	$\|\mathbf{L}_{\X} f - \mathbf{L}_{\X}g \|_\infty = \sup_{\xi \in \mathbb{R}^k} |\mathbf{L}_{\X}f (\xi)-\mathbf{L}_{\X} g(\xi)|\leq \sup_{\xi \in \mathbb{R}^k}\sup_{x\in \X}|\{\xi'x-f(x)\}-\{\xi'x-g(x)\}|\leq \|f-g\|_\infty$.
	%
	%
	
	%
	%
\end{proof}

\begin{remark}
	The Legendre-Fenchel transform $\xi \mapsto \mathbf{L}_{\X} f(\xi)$ is locally Lipshitz, because any convex function is locally Lipshitz. Statement (4) of Lemma~\ref{lemma:LF} is known as Marshall's Lemma \cite{marshall1970}.
\end{remark}



Next, we derive some properties for the  Double Legendre-Fenchel transformation.

\begin{lemma}[Properties of $\mathbf{C}$-Operator]\label{lemma:DLF-basic}
	Given a convex set $\X \subset \RR^k$, suppose that $f \in \ell^{\infty}_S(\X)$. Then:
	
	(1) $\mathbf{C} f$ is the greatest convex minorant of $f$,
	i.e., the largest function $g \in \ell_C^{\infty}(\X)$ such that $g \leq f$.
	
	(2) If $\X$ is compact, for any $x\in\X$  there exist $d \leq k+2$ points
	$x_1,x_2,\ldots,x_d$ and scalars $(\alpha_1, \alpha_2, \ldots, \alpha_d) \in \Delta_{d-1}$, where  $ \Delta_{d-1}$ is the $(d-1)$-simplex $$\Delta_{d-1} := \left\{\alpha \in \mathbb{R}^{d}: \alpha_j\geq 0, j=1,2,\ldots,d, \sum_{j=1}^{d} \alpha_j =1\right\},$$  such that
	\begin{equation}
		\mathbf{C} f(x)=\sum_{i=1}^d \alpha_i f(x_i),
	\end{equation}
	where $x=\sum_{i=1}^d \alpha_i x_i$ and
	$f(x_i)=\mathbf{C}f (x_i)$, $1\leq i\leq d$.
	

	(3) We say that $f$ is convex at $x\in \X$ if there exists a supporting
	hyperplane with direction $\xi$ such that $f(\tilde x)\geq f(x)+\xi'(\tilde x-x)$ for
	all $\tilde x\in \X$. Then, $f(x)=\mathbf{C}f(x)$ if and only if $f$ is
	convex at $x$. Furthermore, if $f$ is convex at every $x\in \mathcal{X}$, then $f$ is a convex function.
	
\end{lemma}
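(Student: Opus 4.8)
The plan is to recognize that $\mathbf{C} f = \mathbf{L}_{f^*(\X)} \circ \mathbf{L}_{\X} f$ is nothing but the biconjugate $f^{**}$, and to run all three parts off the Fenchel--Moreau machinery together with the order-reversing and distance properties already recorded in Lemma~\ref{lemma:LF}. For part (1), I would first note that $\mathbf{C} f$ is closed convex by Lemma~\ref{lemma:LF}(2), since it is itself a Legendre--Fenchel transform (of $\mathbf{L}_{\X} f$), and bounded because $f$ is; hence $\mathbf{C} f \in \ell_C^{\infty}(\X)$. The inequality $\mathbf{C} f \leq f$ is weak duality: taking $\tilde x = x$ in the supremum defining $f^*$ gives $f^*(\xi) \geq \xi' x - f(x)$, so $\mathbf{C} f(x) = \sup_{\xi}\{\xi' x - f^*(\xi)\} \leq f(x)$. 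For maximality, let $g \in \ell_C^{\infty}(\X)$ satisfy $g \leq f$. Applying Lemma~\ref{lemma:LF}(3) twice gives $g^{**} \leq f^{**} = \mathbf{C} f$, and since $g$ is closed convex the Fenchel--Moreau theorem yields $g = g^{**} \leq \mathbf{C} f$. Thus $\mathbf{C} f$ dominates every convex minorant and is itself one, proving (1).

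For part (2), I would pass to the epigraph picture. By (1), the epigraph of $\mathbf{C} f$ is the closed convex hull of the epigraph of $f$. Fix $x \in \X$; the point $(x, \mathbf{C} f(x))$ lies on the lower boundary of this hull. To obtain a \emph{finite} representation I would truncate: for large $c$, the set $G_c := \{(x', t) : x' \in \X, \ f(x') \leq t \leq c\}$ is compact, because $\X$ is compact and $\mathrm{epi}(f)$ is closed by lower semicontinuity of $f$. The convex hull of a compact set in $\RR^{k+1}$ is compact, so $\mathrm{conv}(G_c)$ is closed, and for $c$ large enough $(x, \mathbf{C} f(x)) \in \mathrm{conv}(G_c)$. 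Carath\'eodory's theorem in $\RR^{k+1}$ then writes it as $\sum_{i=1}^d \alpha_i (x_i, t_i)$ with $d \leq k+2$, $(\alpha_1, \ldots, \alpha_d) \in \Delta_{d-1}$, and $t_i \geq f(x_i)$. Using $\mathbf{C} f \leq f$ and convexity of $\mathbf{C} f$ (Jensen), the chain
\[
\mathbf{C} f(x) = \sum_{i=1}^d \alpha_i t_i \geq \sum_{i=1}^d \alpha_i f(x_i) \geq \sum_{i=1}^d \alpha_i \mathbf{C} f(x_i) \geq \mathbf{C} f\bigl(\sum_{i=1}^d \alpha_i x_i\bigr) = \mathbf{C} f(x)
\]
collapses to equalities; since $t_i \geq f(x_i) \geq \mathbf{C} f(x_i)$ termwise, every index with $\alpha_i > 0$ must satisfy $t_i = f(x_i) = \mathbf{C} f(x_i)$. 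Discarding zero weights gives the claimed representation.

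For part (3), the direction ($\Leftarrow$) is immediate: if $f$ is convex at $x$ with direction $\xi$, then $\tilde x \mapsto f(x) + \xi'(\tilde x - x)$ is an affine, hence convex, minorant of $f$, so by (1) it is $\leq \mathbf{C} f$ everywhere; evaluating at $x$ gives $f(x) \leq \mathbf{C} f(x)$, which with $\mathbf{C} f \leq f$ forces equality. For ($\Rightarrow$), I would show the supremum in $\mathbf{C} f(x) = \sup_{\xi}\{\xi' x - f^*(\xi)\}$ is attained at some $\xi^*$ (equivalently, that the finite convex function $\mathbf{C} f$ admits a subgradient at $x$). Then $f^*(\xi^*) = {\xi^*}' x - \mathbf{C} f(x) = {\xi^*}' x - f(x)$, and unfolding $f^*(\xi^*) = \sup_{\tilde x}\{{\xi^*}' \tilde x - f(\tilde x)\}$ yields $f(\tilde x) \geq f(x) + {\xi^*}'(\tilde x - x)$ for all $\tilde x$, i.e. $f$ is convex at $x$. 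The ``furthermore'' then follows: convexity at every point means $f = \mathbf{C} f$ on $\X$, and $\mathbf{C} f$ is convex by (1).

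The routine parts are these algebraic manipulations and the appeals to Fenchel--Moreau; the real obstacles are the two attainment issues. The hard part will be justifying that $(x, \mathbf{C} f(x))$ genuinely lies in the (possibly unclosed) convex hull of $\mathrm{epi}(f)$ in part (2), and that the biconjugate supremum is attained (a supporting hyperplane exists) in part (3). Both are exactly where compactness of $\X$ and lower semicontinuity of $f$ are indispensable, and I would handle them through the compact truncation $G_c$ and the resulting compactness of its convex hull.
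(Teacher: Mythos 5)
Your proposal is correct in substance and shares the paper's overall skeleton (Lemma~\ref{lemma:LF}, Carath\'eodory, lower semicontinuity plus compactness for attainment, and exactly the same collapsing chain of inequalities at the end of part (2)), but two of your middle steps genuinely differ from the paper's. For the maximality half of (1), the paper does not invoke Fenchel--Moreau: it takes an arbitrary convex minorant $h$, picks a supporting hyperplane of $h$ at $x$, and runs a direct $\epsilon$-argument against the definition of $f^*$ --- in effect reproving the needed case of Fenchel--Moreau inline. Your route (apply order reversal, Lemma~\ref{lemma:LF}(3), twice to get $g^{**}\leq f^{**}$, then cite $g=g^{**}$ for closed convex $g$) is shorter, but to apply the theorem to a function defined only on $\X$ you need the usual extension-by-$+\infty$ device, which is clean when $\X$ is closed and carries the same boundary caveats as the paper's own supporting-hyperplane step. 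For (2), the paper starts from the representation $\mathbf{C}f(x)=\inf\{\sum_{j}\alpha_j f(x_j) : \sum_j\alpha_j x_j=x,\ \alpha\in\Delta_{k+1}\}$ (cited from Hiriart-Urruty and Lemar\'echal) and proves attainment by extracting a limit point of a minimizing sequence in $\X^{k+2}\times\Delta_{k+1}$ and using lower semicontinuity of $f$; your truncated-epigraph argument ($G_c$ compact, hence $\mathrm{conv}(G_c)$ compact, then Carath\'eodory in $\RR^{k+1}$) reaches the same attainment more geometrically, and since $f$ is bounded you can simply take $c=\sup_{\X}f$, so that $\mathrm{conv}(\mathrm{epi}\,f)=\mathrm{conv}(G_c)+\{0\}\times[0,\infty)$ is closed and the identification of $\mathbf{C}f$ with the epigraph-hull function follows from (1). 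Both routes buy the same conclusion; yours trades the cited representation formula for an elementary compactness fact.

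The one place your plan does not hold up as written is the attainment step in (3). You propose to obtain the maximizing $\xi^*$ (equivalently a subgradient of $\mathbf{C}f$ at $x$) from compactness of $\X$ via the truncation $G_c$, but part (3) does not assume $\X$ compact, and compactness would not help anyway: for $f(x)=-\sqrt{1-x^2}$ on $\X=[-1,1]$ one has $\mathbf{C}f=f$, yet at $x=1$ the supremum $\sup_{\xi}\{\xi x-f^*(\xi)\}$ is approached only as $\xi\to\infty$ and no supporting hyperplane exists. What actually powers this direction --- and what the paper's proof silently asserts with ``there must exist $\xi$'' --- is the existence of a subgradient of the finite convex function $\mathbf{C}f$ at $x$, which is standard at interior points of $\X$ but genuinely fails at boundary points such as the one above. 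So your argument for (3) is on the same footing as the paper's at interior points (and shares its hole at boundary points), but the compactness-based mechanism you advertise is not the tool that closes it; you should instead invoke subgradient existence for finite convex functions, with the interiority qualification made explicit.
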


\begin{proof}[Proof of Lemma~\ref{lemma:DLF-basic}]
	Statement (1): Recall that  $\mathbf{C}f:=\mathbf{L}_{f^*(\mathcal{X})}\circ \mathbf{L}_\mathcal{X}f$. We first show that $\mathbf{C}f \in \ell_C^{\infty}(\X)$ and $\mathbf{C}f \leq f$.  For any $g\in f^*(\mathcal{X})$,  $\mathbf{L}_{f^*(\mathcal{X})} g$ is a closed convex function by Lemma \ref{lemma:LF}(2), so that $\mathbf{C}f  \in \ell_C^{\infty}(\X)$. Let $f^*:=\mathbf{L}_\mathcal{X}f$. For any $x\in \mathcal{X}$, $\mathbf{C}f(x) = \mathbf{L}_{f^*(\mathcal{X})} f^*(x)=\sup_{\xi \in f^*(\mathcal{X})}\{\xi'x-f^*(\xi)\} \leq \sup_{\xi\in f^*(\mathcal{X})}(\xi'x- (\xi'x-f(x)))=f(x)$ because $f^*(\xi)\geq \xi' x-f(x)$ for any $\xi\in f^*(\mathcal{X})$ by definition of $f^*$.
	
	Next, we show that $\mathbf{C}f$ is the convex minorant of $f$, i.e. $\mathbf{C}f \geq h$ for any $h \in \ell_C^{\infty}(\X)$ such that $h \leq f$. If $h \in \ell_C^{\infty}(\X)$, 
	for any $x\in \mathcal{X}$, there exists $\xi\in f^*(\mathcal{X})$ such that 
	$h(\tilde x)\geq h(x)+\xi'(\tilde x-x)$ for all $\tilde x \in \mathcal{X}$. Since $h(\tilde x)\leq f(\tilde x)$, 
	\begin{equation}\label{eq:aux}
		\xi' x-h(x)\geq \xi' \tilde x-h(\tilde x)\geq \xi' \tilde x - f(\tilde x).
	\end{equation}
	By definition, $f^*(\xi) =\sup_{\tilde x \in \mathcal{X}} \{\xi'\tilde x-f(\tilde x)\}$, so that for any $\epsilon>0$, there must exist $\tilde x \in \mathcal{X}$ such that 
	$\xi'\tilde x-f(\tilde x)\geq f^*(\xi)-\epsilon$, which combined with \eqref{eq:aux} gives
	$\xi' x-h(x) \geq f^*(\xi)-\epsilon$, or rearranging terms, $\xi' x-f^*(\xi) \geq h(x) - \epsilon$.
	Then, $\mathbf{C}f(x)=\sup_{\tilde \xi \in f^*(\mathcal{X})}\{ \tilde \xi' x-f^*(\tilde \xi)\}\geq \xi' x - f^*(\xi)\geq h(x)-\epsilon$. Since $\epsilon$ can be arbitrarily small, we conclude that $\mathbf{C}f(x)\geq h(x)$.

	Statement (2): by Proposition 2.5.1. in Chapter B of \cite{ConvAna:01},
	\begin{equation}\label{eq:DFL}
		\mathbf{C}f(x) = \inf\left\{\sum_{j=1}^{k+2} \alpha_j f(x_j) :  \sum_{j=1}^{k+2}\alpha_j x_j = x, \alpha = (\alpha_1,\ldots,\alpha_{k+2})\in \Delta_{k+1}\right\},
	\end{equation}
	where $\Delta_{k+1}$ is the $(k+1)$-simplex.
	
	By \eqref{eq:DFL}, there exists a sequence $(x^t,\alpha^t)\in \mathcal{X}\times \Delta_k$ such that $\sum_{j=1}^{k+2}\alpha_j^t x_j^t = x$ and $\mathbf{C}f(x) \leq \sum_{j=1}^{k+2} \alpha_j^t f(x_j^t)+ \frac{1}{t}$. Since $\mathcal{X}\times \Delta_{k+1}$ is compact,  there must exist a limit point $(x^0,\alpha^0) \in \mathcal{X}\times \Delta_{k+1}$ of the sequence $(x^t,\alpha^t)$ such that $\sum_{j=1}^{k+2} \alpha^0_j x^0_j =\lim_{t\rightarrow \infty} \sum_{j=1}^{k+2}  \alpha^t_j x^t_j =x $, and by lower semi-continuity of $f$, $\sum_{j=1}^{k+2}  \alpha^0_j f(x^0_j) \leq \lim_{t\rightarrow \infty} \sum_{j=1}^{k+2}  \alpha^t_j f(x^t_j) = \mathbf{C}f(x)$. Then, it follows from \eqref{eq:DFL} that $\sum_{j=1}^{k+2}  \alpha^0_j f(x^0_j) =   \mathbf{C}f(x)$.
	Equivalently, $\sum_{j=1}^{k+2} \alpha_j^0 x_j^0 =x$, and
	\begin{equation}\label{eq:DLF-1}\sum_{j=1}^{k+2} \alpha_j^0 f(x^0_j) =\mathbf{C}f(x).\end{equation}
	
	Let $(\alpha_1, \ldots, \alpha_d)$ and $(x_1,\ldots, x_d)$ denote the subsets of $(\alpha^0_1, \ldots, \alpha^0_{k+2})$ and $(x^0_1,\ldots, x^0_{k+2})$ corresponding to the components with $\alpha_j^0 >0$, where $j = 1,2,\ldots, k+2$ and $d\leq k+2$.  Next, we show that $f(x_j)=\mathbf{C}f (x_j)$, for $1\leq j\leq d$. By statement (1), since $\mathbf{C}f$ is the convex minorant of $f$, it follows that $x \mapsto \mathbf{C}f(x)$ is convex and $\mathbf{C}f(x)\leq f(x)$ for all $x\in \X$.  In particular,
	$$\mathbf{C}f(x)\leq \sum_{j=1}^{d} \alpha_j \mathbf{C}f(x_j)\leq \sum_{j=1}^{d} \alpha_j f(x_j).$$
	By (\ref{eq:DLF-1}), the two inequalities imply that $$\sum_{j=1}^{d} \alpha_j\mathbf{C}f(x_j) = \sum_{j=1}^{d} \alpha_j f(x_j).$$ Since $\mathbf{C}f(x_j)\leq f(x_j)$ for all $j=1,2,\ldots,d$, it follows that $f(x_j) = \mathbf{C}f(x_j)$  for all $j=1,2,\ldots,d$.
	
	This completes the proof of statement (2).
	
	
	Statement (3):  $\mathbf{C}f(x)\leq f(x)$ for all $x\in \X$ by (1). If there exists  a $\xi \in \RR^k$ such that  $f(\tilde x)\geq f(x)+\xi'(\tilde x-x)$ for any $\tilde x\in \X$, then $g(\tilde x)=f(x)+\xi'(\tilde x-x)$ is a convex function that lies below $f$. By (1), $\mathbf{C}f(x)\geq g(x)=f(x)$. Therefore, $\mathbf{C}f(x)=f(x)$. On the other hand, suppose that $\mathbf{C}f(x)=f(x)$. Since $x \mapsto \mathbf{C}f(x)$ is convex on $\X$,  there must exist $\xi \in \mathbb{R}^k$ such that  $\mathbf{C}f(\tilde x) \geq \mathbf{C}f(x)+\xi'(\tilde x-x)$ for any $\tilde x \in \X$. By definition of greatest convex minorant, $f(\tilde x)\geq \mathbf{C}f(\tilde x) \geq \mathbf{C}f(x)+\xi'(\tilde x-x)= f(x)+\xi'(\tilde x-x)$ for any $\tilde x \in \X$. So  $\mathbf{C}f(x)=f(x)$ implies that $f$ is convex at $x$.
	
	If $f$ is convex at every $x\in \mathcal{X}$, then by the results above, $f(x)=\mathbf{C}f(x)$ for every $x\in \mathcal{X}$. That is, $f=\mathbf{C}f$, which implies that $f$ is convex on $\mathcal{X}$ because $\mathbf{C}f$ is convex on $\mathcal{X}$.
\end{proof}

Theorem~\ref{DFL} follows from the properties in Lemmas \ref{lemma:LF} and~\ref{lemma:DLF-basic}.
The properties (1) and (3) in Definition~\ref{def:SE_operators} are implied by properties (2) and (3) of Lemma~\ref{lemma:LF} applied to $\mathbf{L}_{\X}f$ and using that $\mathbf{L}_{\X}f \in \ell_S^{\infty}(\RR^k)$ by property (1) of Lemma~\ref{lemma:LF}.
The property (2) in Definition~\ref{def:SE_operators} is implied by property (3) of Lemma~\ref{lemma:DLF-basic}. Moreover, the $d_{\infty}$-contraction property is given by property (4) in Lemma~\ref{lemma:LF} again applied to $\mathbf{L}_{\X}f$ and using that $\mathbf{L}_{\X}f \in \ell_S^{\infty}(\RR^k)$ by property (1) of Lemma~\ref{lemma:LF}. \qed
%
%
%
%
%



\subsection*{Proof of Theorem~\ref{composition}}


We start by demonstrating that the $\mathbf{C}$-operator on a rectangle can be computed separately at each face of the rectangle.

\begin{definition}[$\mathbf{C}$-Operator Restricted to a Face of a Rectangle]
	
	
	For any regular rectangular set $\mathcal{X}:=[a_1,b_1]\times \ldots\times [a_k,b_k]$,  a set  $\F_m$ is an $m$-dimensional face of $\mathcal{X}$ if there exists a  set of indexes $i(\F) \subset \{1,2,\ldots,k\}$ with $m$ elements such that  $\F_m= \{x \in \X: x_{j}\in [a_j, b_j], \textrm{ for any } j\in i(\F), x_{j} \in \{a_j,b_j\}, \textrm{ for any } j \notin i(\F)\}$. For every $x \in \F_m$, we can define the $\mathbf{C}$-operator restricted to the face $\F_m$ by applying  the Legendre-Fenchel transform only to each of the coordinates of $x$ that are in $i(\F_m)$. Thus, let $$\mathbf{L}_{\X \mid \F_m}f(\xi) :=  \sup_{x \in \F_m}\{\xi'x_{i(\F_m)} - f(x_{i(\F_m)},x_{i^c(\F_m)})\},$$  where we partition $x$ into the coordinates with indexes in $i(\F_m)$, $x_{i(\F_m)}$, and the rest of the coordinates, $i^c(\F_m)$. Then, the $\mathbf{C}$-operator restricted to the face $\F_m$ of $ f \in \ell_S^\infty(\X)$ is
	$$
	\mathbf{C}_{\X \mid \F_m} f(x)  := \mathbf{L}_{f^*(\X \mid \F_m)}  \circ \mathbf{L}_{\X \mid \F} f(x),
	$$
	where $f^*(\X \mid \F_m) := \{\xi \in \RR^m:\mathbf{L}_{\X \mid \F_m}f(\xi)  < \infty \}$.
	Moreover, by Proposition 2.5.1 of \cite{ConvAna:01}, $\mathbf{C}_{\X \mid \F}  f(x)$ is a linear combination of the $f$-images of $m+2$ elements of $\F_m$,   that is
	$$\mathbf{C}_{\X \mid \F} f(x) = \inf \left\{\sum_{j=1}^{m+2} \alpha_j f(x_j) :  x_j\in \F_m, (\alpha_1, \ldots, \alpha_{m+2}) \in \Delta_{m+1} \right\}, \ \ \text{ for any } x \in \F_m,$$
	where $ \Delta_{m+1}$ is the $(m+1)$-simplex. 

\end{definition}

\begin{lemma}[$\mathbf{C}$-Operator on a Regular Rectangular Set]\label{remark:Drectagle} For any regular rectangular set  $\mathcal{X}$ and  $f \in \ell^{\infty}_{S}(\X)$, if $x \in \F_m$ with $m > 0$, then
	$$
	\mathbf{C} f(x) = \mathbf{C}_{\X \mid \F_m} f(x).
	$$
\end{lemma}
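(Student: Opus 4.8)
The plan is to establish the two inequalities $\mathbf{C}f(x)\le\mathbf{C}_{\X\mid\F_m}f(x)$ and $\mathbf{C}f(x)\ge\mathbf{C}_{\X\mid\F_m}f(x)$ separately, working entirely with the convex-combination representations of the greatest convex minorant. Both quantities are infima of $\sum_j\alpha_j f(x_j)$ over representations $x=\sum_j\alpha_j x_j$ with $\alpha$ in a simplex: the full operator ranges over $k+2$ points $x_j\in\X$ by \eqref{eq:DFL} and Lemma~\ref{lemma:DLF-basic}(2), while the face-restricted operator ranges over $m+2$ points $x_j\in\F_m$, as recorded in the definition of $\mathbf{C}_{\X\mid\F_m}$.

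The inequality $\mathbf{C}f(x)\le\mathbf{C}_{\X\mid\F_m}f(x)$ is the easy one. Since $\F_m\subset\X$, any representation feasible for the face-restricted infimum (using $m+2$ points of $\F_m$) can be padded with $k-m$ additional points of $\X$ carrying zero weight, producing a representation feasible for the full infimum with the same objective value. The full infimum is therefore taken over a larger feasible set and can only be smaller.

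For the reverse inequality I would invoke the attainment statement of Lemma~\ref{lemma:DLF-basic}(2): because $\X$ is a compact rectangle, there are points $x_1,\dots,x_d\in\X$ with $d\le k+2$ and weights $\alpha_i>0$, $\sum_i\alpha_i=1$, such that $x=\sum_i\alpha_i x_i$ and $\mathbf{C}f(x)=\sum_i\alpha_i f(x_i)$. The crucial step, and the one where the rectangular structure is indispensable, is to show that every $x_i$ already lies on $\F_m$. Fix a coordinate $j\notin i(\F_m)$; by definition of the face, $x_j$ equals an endpoint, say $x_j=a_j$. Each $(x_i)_j\in[a_j,b_j]$ and $\sum_i\alpha_i (x_i)_j=x_j=a_j$ with all $\alpha_i>0$, so every $(x_i)_j=a_j$, since a convex combination of points in $[a_j,b_j]$ attains the left endpoint only if all of them do; the case $x_j=b_j$ is symmetric. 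Hence each $x_i$ carries the same fixed corner coordinates as $x$ and belongs to $\F_m$.

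It remains to control the number of support points. All $(x_i,f(x_i))$ lie in $\mathrm{aff}(\F_m)\times\RR$, an affine space of dimension $m+1$, and $(x,\mathbf{C}f(x))$ is a convex combination of them; by Carath\'eodory's theorem it can be rewritten as a convex combination of at most $m+2$ of the $(x_i,f(x_i))$, preserving both the argument $x$ and the value $\mathbf{C}f(x)$. This gives a representation admissible in the face-restricted infimum whose objective equals $\mathbf{C}f(x)$, whence $\mathbf{C}_{\X\mid\F_m}f(x)\le\mathbf{C}f(x)$, and combining the two inequalities yields the claim. I expect the support-localization argument to be the main obstacle; it is precisely the step that fails for a non-rectangular convex domain, consistent with the triangle counterexample in Remark~\ref{RM}.
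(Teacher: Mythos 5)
Your proof is correct, and it departs from the paper's argument in both directions of the inequality. For the easy direction ($\mathbf{C}f \le \mathbf{C}_{\X\mid\F_m}f$ on the face), the paper argues via the greatest-convex-minorant characterization: $\mathbf{C}f$ restricted to $\F_m$ is a convex function below $f$ there, and $\mathbf{C}_{\X\mid\F_m}f$ is the greatest such; you instead compare the two infimum representations directly by padding a face-feasible convex combination with zero-weight points, which is equally valid and more mechanical. For the hard direction, both proofs invoke the attainment statement of Lemma~\ref{lemma:DLF-basic}(2) and both need the support points $x_i$ to lie on $\F_m$ --- but the paper merely asserts this (``It must be that $x_i \in \F_m$, since $x\in \F_m$''), whereas you actually prove it via the endpoint argument with strictly positive weights; this is the step where rectangularity enters, and your writing it out is a genuine improvement in completeness (do note that since $\Delta_{d-1}$ in the lemma's statement allows zero weights, one should first discard zero-weight points, which your phrasing implicitly does). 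From there the proofs diverge: the paper exploits the additional property $f(x_i)=\mathbf{C}f(x_i)$, the chain $f \ge \mathbf{C}_{\X\mid\F_m}f \ge \mathbf{C}f$ on the face, and Jensen's inequality for the convex function $\mathbf{C}_{\X\mid\F_m}f$, never needing to control the number of support points; you instead stay inside the representation formula, which forces you to reduce from $d\le k+2$ to at most $m+2$ points via Carath\'eodory in $\mathrm{aff}(\F_m)\times\RR$ --- a step you execute correctly (the affine space has dimension $m+1$, so $m+2$ points suffice, preserving both $x$ and the value). In short, the paper's route is shorter because it reuses structural facts already established (minorant property, convexity of the face-restricted operator), while yours is more self-contained, makes the rectangularity mechanism explicit, and avoids needing the property $f(x_i)=\mathbf{C}f(x_i)$ at all, using only value attainment.
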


\begin{proof}[Proof of Lemma~\ref{remark:Drectagle}] Suppose that $\X$ is a regular rectangle in $\RR^k$. Let $\F_m$ be a face of $\X$ with dimension $m$ such that $x \in \F_m$.  The result follows from the following facts:
	
	First, $x \mapsto \mathbf{C}f(x)$ is a convex function and lies below $x \mapsto f(x)$ on $\X$, so that $x \mapsto \mathbf{C}f(x)$ is a convex function and lies below $x \mapsto f(x)$ on $\mathcal{F}_m \subset \X$. By definition, $\mathbf{C}_{\X \mid \F_m} f$ is the convex minorant of $f$ restricted on $\mathcal{F}_m$, i.e., the largest possible convex function lying below $f$ restricted on $\mathcal{F}_m$. Therefore, it must be that $\mathbf{C}_{\X \mid \F_m} f\geq \mathbf{C}f(x)$ for all $x\in \mathcal{F}_m$.
	

	

	Second, by statement (2) of Lemma~\ref{lemma:DLF-basic}, for any $x\in \F_m$, there exist $d\leq k+2$ points
	$x_1,\ldots,x_d$ and $\alpha_i>0$, $1\leq i \leq d$, $\sum_{i=1}^d
	\alpha_i=1$, such that $\mathbf{C}f (x_i)=f(x_i), \sum_{i=1}^d
	\alpha_i x_i=x$, and $\sum_{i=1}^d \alpha_i f(x_i)=f(x)$. It must be that $x_i \in \F_m$, $1\leq i \leq d$, since $x\in \F_m$.

	Third, by definition of greatest convex minorant, on the face $\F_m$, $f(x)\geq  \mathbf{C}_{\X \mid \F_m}f(x)$ for any $x\in \F_m$. Since $ \mathbf{C}_{\X \mid \F_m}f(x)$ is the convex minorant of $f(x)$ restricted on $\F_m$, and $\mathbf{C}f$ is a convex function on $\F_m$, it follows that $\mathbf{C}_{\X \mid \F_m}f(x) \geq \mathbf{C}f(x)$ for any $x\in \F_m$. Therefore,
	\begin{equation}\label{eq:chain1}f(x)\geq \mathbf{C}_{\X \mid \F_m}f(x)\geq \mathbf{C} f(x)\end{equation} for
	all $x\in \F_m$.
	
	Fourth, for each $x_i$, $i=1,2,\ldots,d$, we know that $f(x_i) = \mathbf{C}f(x_i)$. Applying equation (\ref{eq:chain1}), it must be that
	$f(x_i)=\mathbf{C}_{\X \mid \F_m} f(x_i)=\mathbf{C} f (x_i)$. Therefore,
	\begin{equation}\label{eq:chain2}
		\mathbf{C} f (x)=\sum_{i=1}^d \alpha_i f(x_i) =
		\sum_{i=1}^d \alpha_i \mathbf{C}_{\X \mid \F_m} f(x_i)\geq \mathbf{C}_{\X \mid \F_m}f(x),
	\end{equation}
	where the inequality follows from convexity of
	$x \mapsto \mathbf{C}_{\X \mid \F_m}f(x)$.
	
	
	Combining inequalities (\ref{eq:chain1}) and (\ref{eq:chain2}), we conclude that $\mathbf{C}_{\X \mid \F_m}f (x)=\mathbf{C} f(x)$.
\end{proof}

Before stating the main proof of Theorem~\ref{composition}, we require a lemma to show that $\mathbf{M}$ maps a function in $ \ell^{\infty}_{S}(\X)$ to $ \ell^{\infty}_{S}(\X)$.

\begin{lemma}\label{lemma:M-S}
	Suppose $\mathcal{X}=[0,1]^k$. The rearrangement operator maps any function $f\in  \ell^{\infty}_{S}(\X)$ to $ \ell^{\infty}_{S}(\X)$.
\end{lemma}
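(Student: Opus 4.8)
The plan is to reduce the claim to a single one-dimensional rearrangement and then to establish \emph{joint} lower semi-continuity of that object through a sublevel-set argument. First I would record two reductions. Boundedness is immediate, since the one-dimensional rearrangement $\mathbf{M}_j f$ takes values in the range of $x(j)\mapsto f(x(j),x(-j))$, so $\mathbf{M}_j$, any composition of such operators, and their average all map a bounded function to a bounded function. For lower semi-continuity, since $\mathbf{M} f=|\Pi|^{-1}\sum_{\pi\in\Pi}\mathbf{M}_\pi f$ is a finite convex combination and a finite sum of lower semi-continuous functions is lower semi-continuous, it suffices to treat each $\mathbf{M}_\pi f=\mathbf{M}_{\pi_1}\circ\cdots\circ\mathbf{M}_{\pi_k}f$; as this is a composition of one-dimensional rearrangements, it suffices in turn to prove the single assertion that $\mathbf{M}_j$ maps $\ell^\infty_S(\X)$ into itself, i.e. the coordinatewise rearrangement preserves joint lower semi-continuity on the full cube $[0,1]^k$. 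The composition and the averaging then follow immediately.

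To prove this single assertion, write $u=x(j)$, $z=x(-j)$, $g(u,z):=\mathbf{M}_j f(x)$, and $F_z(y):=\int_0^1 1\{f(t,z)\le y\}\,dt$. The map $y\mapsto F_z(y)$ is nondecreasing and right-continuous, so a short computation using the definition in Definition~\ref{def:M} yields the clean equivalence $g(u,z)\le c \iff F_z(c)\ge u$. Consequently the sublevel set $\{(u,z): g(u,z)\le c\}$ equals $\{(u,z): u\le F_z(c)\}$, which is exactly the hypograph of the map $z\mapsto F_z(c)$. Since lower semi-continuity of $g$ is equivalent to closedness of all such sublevel sets, and a hypograph is closed precisely when the underlying function is upper semi-continuous, the task reduces to showing that $z\mapsto F_z(c)$ is upper semi-continuous for each fixed $c$.

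This last step is where the joint lower semi-continuity of $f$ enters, and I expect it to be the main obstacle, since it requires interchanging a limit with an integral. Fixing $z$ and taking $z_n\to z$, I would apply the reverse Fatou lemma (the integrands are indicators, hence bounded by $1$ on the finite-measure domain $[0,1]$) to obtain $\limsup_n F_{z_n}(c)\le \int_0^1 \limsup_n 1\{f(t,z_n)\le c\}\,dt$. The integrand is then controlled pointwise: joint lower semi-continuity of $f$ implies that each fiber $z\mapsto f(t,z)$ is lower semi-continuous, so whenever $f(t,z)>c$ one has $f(t,z_n)>c$ for all large $n$, giving $\limsup_n 1\{f(t,z_n)\le c\}\le 1\{f(t,z)\le c\}$ for every $t$. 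Integrating yields $\limsup_n F_{z_n}(c)\le F_z(c)$, the desired upper semi-continuity. Combining the three steps shows each $\mathbf{M}_j f$ lies in $\ell^\infty_S(\X)$, and iterating over the composition and averaging over $\Pi$ then gives $\mathbf{M} f\in\ell^\infty_S(\X)$. The only delicate point to watch is the boundary value $u=0$, where the defining infimum must be read with the usual quantile convention; this does not affect lower semi-continuity, as the required inequality $\liminf\ge g(0,z)$ is automatic there, and I would simply note it in passing.
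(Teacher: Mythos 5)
Your proof is correct, and it shares the paper's analytic engine while packaging it differently. The paper makes the same two reductions (boundedness, then a single one-dimensional operator $\mathbf{M}_j$ acting on the cube, with sums and compositions handled trivially) and then proves lower semi-continuity of $\mathbf{M}_1 f$ \emph{by contradiction}: assuming a sequence $x^n \to x^0$ with $\mathbf{M}_1 f(x^n) \leq \mathbf{M}_1 f(x^0) - \epsilon$, it extracts from the infimum definition the inequalities $\int 1\{f(t,x^0(-1)) \leq y^0 - \tfrac{\epsilon}{2}\}\,dt < x^0(1)$ and $\int 1\{f(t,x^n(-1)) \leq y^0 - \epsilon\}\,dt \geq x^n(1)$, and contradicts them using exactly your two ingredients: the pointwise bound $\limsup_n 1\{f(t,x^n(-1)) \leq y^0 - \epsilon\} \leq 1\{f(t,x^0(-1)) \leq y^0 - \tfrac{\epsilon}{2}\}$ (from joint lower semi-continuity of $f$, with an $\epsilon/2$ slack) and the reverse Fatou lemma. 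What you do differently is to isolate the Galois-type equivalence $g(u,z) \leq c \iff F_z(c) \geq u$ (which requires right-continuity of $F_z$, a fact the paper never invokes) and thereby convert the problem into a clean standalone statement --- upper semi-continuity of $z \mapsto F_z(c)$, proved by reverse Fatou with no slack --- combined with the textbook equivalences ``lsc $\iff$ closed sublevel sets'' and ``closed hypograph $\iff$ usc.'' This buys a direct (non-contradiction) argument, eliminates the $\epsilon/2$ bookkeeping, and makes explicit the boundary case $u=0$, which the paper's proof actually glosses over: its claim that $\int 1\{f(t,x(-1)) \leq y_{\min}-\epsilon\}\,dt < x(1)$ fails when $x(1)=0$, where the defining infimum is over all of $\RR$. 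The paper's route, in exchange, works with the infimum definition alone and needs no monotonicity or right-continuity facts about $F_z$.
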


\begin{proof}
	First, it is easy to see that for any $f_1,f_2\in  \ell^{\infty}_{S}(\X)$ and $a,b\geq 0$, $af_1+bf_2\in  \ell^{\infty}_{S}(\X)$. Therefore, to show that $\mathbf{M}$ maps a function $f \in  \ell^{\infty}_{S}(\X)$ to $ \ell^{\infty}_{S}(\X)$, it suffices to show that $\mathbf{M}_\pi$ maps a function $f \in  \ell^{\infty}_{S}(\X)$ to $ \ell^{\infty}_{S}(\X)$, since $\mathbf{M} f = \sum_{\pi \in \Pi}  \mathbf{M}_{\pi} f/|\Pi|$. Denote $\pi = (\pi_1,\ldots,\pi_k)$, so $\mathbf{M}_\pi = \mathbf{M}_{\pi_1}\circ \cdots \circ \mathbf{M}_{\pi_k}$. For any function $f\in\ell^{\infty}_{S}(\X)$ and $j=1,2,\ldots,k$, we would like to prove that $\mathbf{M}_{j}f\in \ell^{\infty}_{S}(\X)$. If the statement above is true, then it follows that $\mathbf{M}_\pi f =  \mathbf{M}_{\pi_1}\circ \cdots \circ \mathbf{M}_{\pi_k}f \in \ell^{\infty}_{S}(\X)$. Consequently, the conclusion of the lemma is true.
	
	
	Second, we prove that  for any function $f\in\ell^{\infty}_{S}(\X)$ and $j=1,2,\ldots,k$, $\mathbf{M}_{j}f\in \ell^{\infty}(\X)$. Without loss of generality, we can assume $j=1$. By definition, $$\mathbf{M}_1 f(x(1),x(-1)) =  \inf\left\{y \in \RR : \int_{\X(1)} 1\{f(t, x(-1)) \leq y\} dt \geq x(1)  \right\}.$$ For any $x(1)\in [0,1]$ and $x(-1)\in [0,1]^{k-1}$,
	$
	\int_{\X(1)} 1\{f(t, x(-1)) \leq y_{max}\} dt = 1\geq x(1),
	$
	and
	$
	\int_{\X(1)} 1\{f(t, x(-1)) \leq (y_{min}-\epsilon)\} dt =0< x(1),
	$
	where $y_{max}=\sup_{x\in \mathcal{X}} f(x)$, $y_{min}=\inf_{x\in \X}f(x)$ and $\epsilon>0$ can be any arbitrarily small constant. Since $f\in \ell^{\infty}_{S}(\X)$, $y_{min}$ and $y_{max}$ exist. Therefore, $ \inf\left\{y \in \RR : \int_{\X(1)} 1\{f(t, x(-1)) \leq y\} dt \geq x(1)  \right\}$ must be well defined and bounded by $y_{max}$ from above and by $y_{min}-\epsilon$ from below. We conclude that $\mathbf{M}_1 f\in \ell^\infty(\X)$.
	
	Third, we show that $\mathbf{M}_1f \in\ell^{\infty}_{S}(\X)$  if $f\in\ell^{\infty}_{S}(\X)$.
	We prove this by contradiction: suppose that $\mathbf{M}_1 f$ is not lower semi-continuous at a point $x^0\in \X$. There must exist a sequence $x^1,\ldots,x^n,\ldots$ in $\mathcal{X}$ and a constant $\epsilon>0$ such that $\|x^n-x^0\|\rightarrow 0$ as $n\rightarrow \infty$ and $\mathbf{M}_1 f(x^n)\leq \mathbf{M}_1 f(x^0)-\epsilon$ for all $n\geq 1$. Let $y^0 := \mathbf{M}_1 f(x^0)$. By definition of $\mathbf{M}_1 f$, it must be that $\int_{\X(1)} 1\{f(t, x^0(-1)) \leq y^0-\frac{\epsilon}{2}\} dt < x^0(1) $, and  $\int_{\X(1)} 1\{f(t, x^n(-1)) \leq y^0-\epsilon\} dt \geq  x^n(1) $ for all $n\geq 1$. For any $t\in [0,1]$, since $f\in\ell^{\infty}_{S}(\X)$ and $(t, x^{n}(-1))\rightarrow (t, x^0(-1))$, $\lim_{n\rightarrow \infty} f(t, x^n(-1))\geq f(t, x^0(-1))$. Therefore, there exists $N$ large enough such that $f(t, x^n(-1))\geq f(t, x^0(-1))-\frac{\epsilon}{2}$ for all $n\geq N$. Consequently, $1\{f(t, x^n(-1)) \leq y^0-\epsilon\} \leq 1\{f(t, x^0(-1)) \leq y^0-\frac{\epsilon}{2}\}$ for all $n\geq N$. Then, $$  {\lim\sup_{n\rightarrow \infty}   1\{f(t, x^n(-1)) \leq y^0-\epsilon\} \leq 1\{f(t, x^0(-1)) \leq y^0-\frac{\epsilon}{2}\}}$$ holds for all $t$.
	By reverse Fatou's Lemma,
	\begin{multline}\label{eq:fatou}
		\lim\sup_{n\rightarrow \infty }\int_{\X(1)} 1\{f(t, x^n(-1)) \leq y^0-\epsilon\} dt\leq \int_{\X(1)} \lim\sup_{n\rightarrow \infty } 1\{f(t, x^n(-1)) \leq y^0-\epsilon\} dt \\\leq  \int_{\X(1)} 1\{f(t, x^0(-1)) \leq y^0-\frac{\epsilon}{2}\} dt.
	\end{multline}
	However, $\int_{\X(1)} 1\{f(t, x^0(-1)) \leq y^0-\frac{\epsilon}{2}\} dt<x(1)^0$, while $\lim\sup_{n\rightarrow \infty }\int_{\X(1)} 1\{f(t, x^n(-1)) \leq y^0-\epsilon\} dt\geq \lim\sup_{n\rightarrow \infty}  x^n(1)=x^0_1$. Hence,
	$$\lim\sup_{n\rightarrow \infty }\int_{\X(1)} 1\{f(t, x^n(-1)) \leq y^0-\epsilon\} dt \geq x^0_1>\int_{\X(1)} 1\{f(t, x^0(-1)) \leq y^0-\frac{\epsilon}{2}\} dt,$$ which contradicts (\ref{eq:fatou}).
	Therefore, we conclude that $\mathbf{M}_1 f \in\ell^{\infty}_{S}(\X)$ if $f \in\ell^{\infty}_{S}(\X)$.
\end{proof}

We now start the proof of Theorem~\ref{composition}.

(1) We first show that $\mathbf{CM}$ satisfies the reshaping property (1) of  Definition~\ref{def:SE_operators}.



We know that $\mathbf{CM}f = \mathbf{C} (\mathbf{M} f)$. By Lemma~\ref{lemma:M-S}, for any $f\in \ell^{\infty}_{S}(\X)$, $\mathbf{M}f\in  \ell^{\infty}_{S}(\X)$. Consequently, $\mathbf{M}f\in  \ell^{\infty}_{S}(\X)\cap \ell^{\infty}_{M}(\X)$.

We use induction to prove that $\mathbf{C} f \in \ell^{\infty}_{CM}(\X)$ for any $f \in  \ell^{\infty}_{S}(\X)\cap\ell^{\infty}_{M}(\X)$, where $\X \subset \RR^k$ is a regular rectangular set. Without loss of generality, assume that $\X=[0,1]^k$. Since $\mathbf{C} f \in \ell^{\infty}_{C}(\X)$ by Theorem~\ref{DFL}, we only need to show that $\mathbf{C} f \in \ell^{\infty}_{M}(\X)$.

For dimension $k=1$, $\X$ is a closed interval. We prove that $\mathbf{C}f$ is nondecreasing. Assume, by contradiction, that there exists a pair of points $x,x' \in \X$ such that $x<x'$ and $\mathbf{C}f(x)>\mathbf{C}f(x')$. Let $\underline{x}$ be the left end point of the interval $\X$. By convexity, $\mathbf{C}f(\underline{x})\geq
\mathbf{C}f(x)>\mathbf{C}f(x')$. By Lemma~\ref{remark:Drectagle},
$\mathbf{C}f(\underline{x})=\mathbf{C}_{\X \mid \F_0} f(\underline{x})=  f(\underline{x})$. By statement (2) of Lemma~\ref{lemma:DLF-basic},  there exist $x_1,\ldots,x_d\in \mathcal{X}$ and $\alpha_1,\ldots,\alpha_d > 0 , \sum_{j=1}^d \alpha_j = 1$  such that
$\mathbf{C}f(x') = \mathbf{C} f (x')=\sum_{j=1}^d \alpha_j   f(x_j)$. Since $ f$ is nondecreasing, we have $\sum_{j=1}^d \alpha_j  f(x_j)\geq\sum_{j=1}^d
\alpha_j  f(\underline{x})= f(\underline{x})$, which contradicts that $\mathbf{C}f(\underline{x})>\mathbf{C}f(x')$.  Hence, for any $x<x'$, it must be that $\mathbf{C}f(x)\leq \mathbf{C}f(x')$. We conclude that $x \mapsto \mathbf{C}f(x)$ is  nondecreasing.

Suppose that $x \mapsto \mathbf{C}f(x)$ is nondecreasing for $(k-1)$-dimensional regular rectangles, $k\geq 2$. Let $\X$ be a $k$-dimensional rectangle.
Assume, by contradiction, that there exists $x\leq x'$ ($x\neq x'$) such that $\mathbf{C}f(x)>\mathbf{C}f(x')$. Consider the radial originated from $x'$ that passes through  $x$, denoted as $L$. $L$ can be written as $\{z \in \RR^k : z=\gamma x'+(1-\gamma)x, \gamma\leq 1\}$. Therefore, there exists a $\gamma_0\leq 0$ such that $\gamma x'+(1-\gamma)x\in \X\cap L$ if and only if $1\geq \gamma \geq \gamma_0$. Denote $\underline{l} = \gamma_0 x'+(1-\gamma_0)x$.  By convexity of $x \mapsto \mathbf{C}f(x)$, it must be that
\begin{equation}\label{eq:CM-1}\mathbf{C}f(\underline{l})\geq
	\mathbf{C}f(x)>\mathbf{C}f(x').\end{equation}
By statement (2) of Lemma~\ref{lemma:DLF-basic}, there are $d$
points $x_1,\ldots,x_d\in \X$ and $\alpha_1,\ldots,\alpha_d>0$, $\sum_{i=1}^d\alpha_i=1$, such that $\sum_{j=1}^d \alpha_j x_i = x'$ and  $\sum_{j=1}^d \alpha_j f(x_j)=\mathbf{C}f(x')$. The point $\underline{l}$ must be on a $k-1$ dimensional face of $\X$, denoted by $\F_{k-1}$. Since $\X=[0,1]^k$, $\F_{k-1}$ can be expressed as $A_1\times A_2\times\ldots\times A_k$, where $A_i=[0,1]$ for $i\in i(\F_{k-1})$, $i(\F_{k-1})\subset\{1,2,\ldots,k\}$, $|i(\F_{k-1})|=k-1$, and $A_i = \{0\}$ or $\{1\}$ if $i\notin i(\F_{k-1})$. Without loss of generality, we can assume that $i(\F_{k-1})=\{1,2,\ldots,k-1\}$. Denote $A_k=\{w\}$, so $w=0$ or $1$.

Let $s$ be the projection mapping from $\X=[0,1]^k$ to $\F_{k-1}$, so $s: (x(1),\ldots,x(k))\mapsto (x(1),\ldots,x(k-1),w)$ for any $x(1),\ldots,x(k)\in [0,1]$. Since $\underline{l} \in \F_{k-1}$, it must be that $s(\underline{l}) = \underline{l}$. If $w=0$,  $s(x)\leq x$ for any $x\in \X$. Therefore, $s(x_1)\leq x_1,\ldots, s(x_{d})\leq x_d$. Then, since $x \mapsto f(x)$ is nondecreasing, $f(x_i)\geq f(s(x_i))$ for all $i=1,2,\ldots,d$. If $w=1$, then any point $x \in \F_{k-1}$ satisfies $x(k)=1$, including $\underline{l}$. Since $x'\geq \underline{l}$, the $k^{th}$ entry of $x'$ must equal to $1$. By $\sum_{j=1}^d \alpha_j x_j = x'$, $x_j\in [0,1]^k$, it must be that the $k^{th}$ entry of $x_j$ equals to $1$ for all $i=1,2,\ldots,d$. Therefore, $s(x_j) = x_j$, and $s(x') = x'$. Therefore, regardless of the value of $w$, $x_j \geq s(x_j)$, $x'\geq s(x')$.
Since $x'\geq \underline{l}$, it must be that $\underline{l}\leq s(x') = \sum_{j=1}^d \alpha_j s(x_j)$. By Lemma 9, $ \mathbf{C}_{\X|\F} f(\underline{l})=\mathbf{C}f(\underline{l})$ and by (\ref{eq:CM-1}),
\begin{multline*}
	\mathbf{C}_{\X|\F}f(\underline{l}) =\mathbf{C}f(\underline{l}) > \mathbf{C}f(x') =\sum_{i=1}^d \alpha_i f(x_i) \geq \sum_{i=1}^d \alpha_i f(s(x_i)) \\ \geq\sum_{i=1}^d \alpha_i \mathbf{C}_{\X|\F}  f(s(x_i))\geq   \mathbf{C}_{\X|\F}f(s(x')),
\end{multline*}
where the second inequality holds by monotonicity of $x \mapsto f(x)$, the third inequality by $\mathbf{C}_{\X|\F}$ being the convex minorant of $f$, and the fourth by convexity of $x \mapsto \mathbf{C}_{\X|\F} f (x)$.
Therefore, \begin{equation}\label{eq:CM-2}
	\mathbf{C}_{\X|\F} f(\underline{l})> \mathbf{C}_{\X|\F}f(s(x')).
\end{equation}

By induction, $\mathbf{C}_{\X|\F}f$ restricted on the $k-1$ dimensional regular rectangle $\F$ is nondecreasing. Since $s(x')\geq \underline{l}$, it must be that $\mathbf{C}_{\X|\F}f(s(x'))\geq \mathbf{C}_{\X|\F}f(\underline{l})$, which contradicts (\ref{eq:CM-2}). Hence, the induction is complete. $x \mapsto \mathbf{C}f(x)$ is nondecreasing if $x \mapsto f(x)$ is nondecreasing. Therefore, for any $f\in\ell^\infty_{S}(\X)$, $\mathbf{CM}f$ is monotonically increasing.


We next show that $\mathbf{CM}$ satisfies the rest of the properties of Definition~\ref{def:SE_operators} and distance reduction.

(2) To show invariance, note that if $f \in \ell^{\infty}_{CM}(\X)$, then $\mathbf{M} f = f$ by Theorem~$\ref{RA}$, and therefore $\mathbf{CM} f = \mathbf{C} (\mathbf{M} f) =   \mathbf{C} f = f $ by definition of $\mathbf{CM}$ and Theorem~\ref{DFL}.

(3) Similarly, $\mathbf{CM}$ is order preserving because if $f\geq g$ then $\mathbf{M} f\geq \mathbf{M}g$ by Theorem~\ref{RA}, and therefore $\mathbf{CM}f = \mathbf{C}(\mathbf{M}f) \geq \mathbf{C}(\mathbf{M}g) = \mathbf{CM}g$ by  definition of $\mathbf{CM}$ and Theorem~\ref{DFL}.

%

(4) Finally, $\mathbf{CM}$ is a $d_\infty-$distance contraction because $$d_\infty(\mathbf{CM}f , \mathbf{CM}g) = d_\infty(\mathbf{C}[\mathbf{M}f] , \mathbf{C} [\mathbf{M}g]) \leq d_\infty(\mathbf{M}f, \mathbf{M}g)\leq d_\infty(f,g),$$ where the first equality follows from definition of $\mathbf{CM}$, the first inequality by Theorem~\ref{DFL}, and the second inequality by Theorem~\ref{RA}.
%
\qed


\subsection*{Proof of Theorem~\ref{QC}} 
We start with a lemma establishing that the operator $\mathbf{Q}$ is well-defined. 
\begin{lemma}[Properties of Operator $\mathbf{Q}$]\label{lemma:Q-def}
	For any convex  and compact set $\X \subset \RR^k$, the operator $\mathbf{Q}$ defined in \eqref{def:Q} is well-defined in that  the minimum of the set $\left\{ y \in \RR : x \in \mathrm{conv}[\mathcal{I}_f(y)] \right\}$ exists for all $x\in \X$ and $\mathbf{Q} f \in \ell^\infty_S(\X)$ for any $f \in  \ell^\infty_S(\X)$.
\end{lemma}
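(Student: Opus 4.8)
The plan is to exploit two structural facts throughout. Since $f$ is lower semi-continuous and $\X$ is compact, every lower contour set $\mathcal{I}_f(y) = \{x \in \X : f(x) \le y\}$ is closed in $\X$ and hence compact; and, by Carath\'eodory's theorem, the convex hull of a compact subset of $\RR^k$ is itself compact, being the continuous image of the compact set $\Delta_k \times \mathcal{I}_f(y)^{k+1}$ under $(\lambda, z_0, \dots, z_k) \mapsto \sum_{i=0}^k \lambda_i z_i$. Because $\X$ is convex and $\mathcal{I}_f(y) \subseteq \X$, we also have $\mathrm{conv}[\mathcal{I}_f(y)] \subseteq \X$. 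First I would record that $S_x := \{y \in \RR : x \in \mathrm{conv}[\mathcal{I}_f(y)]\}$ is nonempty and bounded below: writing $y_{\min} := \inf_{\X} f$ and $y_{\max} := \sup_{\X} f$, both finite since $f$ is bounded, for $y < y_{\min}$ the set $\mathcal{I}_f(y)$ is empty so $y \notin S_x$, whereas $\mathcal{I}_f(y_{\max}) = \X$ gives $\mathrm{conv}[\mathcal{I}_f(y_{\max})] = \X \ni x$, so $y_{\max} \in S_x$. Monotonicity of $y \mapsto \mathcal{I}_f(y)$ makes $S_x$ upward-closed, so $\phi(x) := \inf S_x \in [y_{\min}, y_{\max}]$ is well-defined.

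The crux, and the main obstacle, is to show that the infimum $\phi(x)$ is attained, i.e. $x \in \mathrm{conv}[\mathcal{I}_f(\phi(x))]$. One cannot simply pass to the limit at the level of sets, because the convex hull does not commute with the nested intersection $\mathcal{I}_f(\phi(x)) = \bigcap_n \mathcal{I}_f(y_n)$ for $y_n \downarrow \phi(x)$. Instead I would run a sequential Carath\'eodory argument: choose $y_n \in S_x$ with $y_n \downarrow \phi(x)$; for each $n$, Carath\'eodory's theorem in $\RR^k$ yields points $z_0^{(n)}, \dots, z_k^{(n)} \in \mathcal{I}_f(y_n)$ and weights $\lambda^{(n)} \in \Delta_k$ with $x = \sum_{i=0}^k \lambda_i^{(n)} z_i^{(n)}$. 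Compactness of $\X$ and $\Delta_k$ lets me extract a subsequence along which $z_i^{(n)} \to z_i \in \X$ and $\lambda^{(n)} \to \lambda \in \Delta_k$, so $x = \sum_{i=0}^k \lambda_i z_i$. Finally, from $f(z_i^{(n)}) \le y_n$ and lower semi-continuity I would conclude $f(z_i) \le \liminf_n f(z_i^{(n)}) \le \limsup_n f(z_i^{(n)}) \le \lim_n y_n = \phi(x)$, so each $z_i \in \mathcal{I}_f(\phi(x))$ and therefore $x \in \mathrm{conv}[\mathcal{I}_f(\phi(x))]$. Hence $\phi(x) \in S_x$, the minimum exists, and $\mathbf{Q} f(x) = \phi(x)$.

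With attainment in hand, boundedness of $\mathbf{Q} f$ is immediate, since $\mathbf{Q} f(x) \in [y_{\min}, y_{\max}]$ for every $x$. For lower semi-continuity I would invoke the standard characterization that a function is lower semi-continuous iff all of its sublevel sets are closed. Attainment of the minimum gives the equivalence $\mathbf{Q} f(x) \le y \iff x \in \mathrm{conv}[\mathcal{I}_f(y)]$: the forward direction uses $x \in \mathrm{conv}[\mathcal{I}_f(\mathbf{Q} f(x))] \subseteq \mathrm{conv}[\mathcal{I}_f(y)]$ by monotonicity, and the reverse is immediate from $y \in S_x$. Thus the sublevel set $\{x \in \X : \mathbf{Q} f(x) \le y\}$ equals $\mathrm{conv}[\mathcal{I}_f(y)]$, which is compact, hence closed, by the first paragraph. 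Since this holds for every $y \in \RR$, the function $\mathbf{Q} f$ is lower semi-continuous, and combined with boundedness we conclude $\mathbf{Q} f \in \ell^\infty_S(\X)$.
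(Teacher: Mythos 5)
Your proof is correct, and its core---the attainment of the infimum---is essentially the paper's own argument: take $y_n \downarrow \inf S_x$, apply Carath\'eodory's theorem at each level to write $x$ as a convex combination of at most $k+1$ points of $\mathcal{I}_f(y_n)$, extract convergent subsequences of points and weights by compactness of $\X$ and the simplex, and use lower semi-continuity of $f$ to pass the constraint $f(z_i^{(n)}) \le y_n$ to the limit. Where you genuinely diverge is the second claim, $\mathbf{Q}f \in \ell^\infty_S(\X)$. The paper proves lower semi-continuity by contradiction, rerunning the entire sequential Carath\'eodory/compactness machinery a second time along a sequence $x^i \to x^0$ whose $\mathbf{Q}f$-values stay bounded below $\mathbf{Q}f(x^0)$. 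You instead extract from attainment the exact identity $\{x \in \X : \mathbf{Q}f(x) \le y\} = \mathrm{conv}[\mathcal{I}_f(y)]$, note that $\mathcal{I}_f(y)$ is compact (a closed sublevel set of an l.s.c.\ function inside the compact $\X$) and that the convex hull of a compact subset of $\RR^k$ is compact (again Carath\'eodory, as a continuous image of $\Delta_k \times \mathcal{I}_f(y)^{k+1}$), and conclude via the closed-sublevel-set characterization of lower semi-continuity. This buys you a shorter proof that runs the sequential argument only once, and it makes rigorous the identity $\mathcal{I}_{\mathbf{Q}f}(y) = \mathrm{conv}[\mathcal{I}_f(y)]$ that the paper asserts only informally in the main text after Definition~\ref{def:Q}. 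A further small improvement: your lower bound for $S_x$ (if $y < \inf_{\X} f$ then $\mathcal{I}_f(y) = \emptyset$, so $y \notin S_x$) is the correct justification, whereas the paper's corresponding sentence claims $\inf \mathcal{Q}_f(x) \ge f(x)$, which is false in general (indeed $\mathbf{Q}f \le f$ is the whole point of the operator); the paper's cited facts actually establish nonemptiness, i.e.\ $\inf \mathcal{Q}_f(x) \le f(x)$, and boundedness below must come from $\inf_{\X} f$ exactly as you argue.
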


\begin{proof}
	Define $\mathcal{Q}_f(x):=\{y \in \RR : x\in \mathrm{conv}[\mathcal{I}_f(y)] \}$. We first show that $\min  \mathcal{Q}_f(x)$ exists.
	Obviously, $\inf \mathcal{Q}_f(x) \geq f(x)>-\infty$, because  $\mathcal{I}_f(y) \subset \mathrm{conv}[\mathcal{I}_f(y)]$ and $x\in \mathcal{I}_f(f(x))$. Hence, $\mathcal{Q}_f(x)$ is bounded from below. Let $y^0 := \inf \mathcal{Q}_f(x)$. 
	We need to show that $\min \mathcal{Q}_f(x) = y^0$, i.e,  there exists a sequence of $y^1,\ldots,y^i,\ldots$ such that $y^i\in \mathcal{Q}_f(x)$, and $y^i\rightarrow y^0$ as $i\rightarrow \infty$. 
	For each $y^i$, $x\in \mathrm{conv}[\mathcal{I}_f(y^i)]$. Hence, by Carath\'eodory's theorem, there exist $d_x \leq k+1$ points $X^i:=(x_1^i,\ldots,x_{d_x}^i)$ where $x_j^i \in \mathcal{I}_f(y^i), j=1,2,\ldots,d_x$, and $\alpha^i:=(\alpha_1^i,\ldots,\alpha_{d_x}^i)\in \Delta_{d_x-1}$ where $\Delta_{d_x-1}$ is the $(d_x-1)$-simplex, such that $x=\sum_{j=1}^{d_x} \alpha_j^i x_{j}^i$. Since $\mathcal{X}$ and $\Delta_{d_x-1}$ are both compact, there must exist a subsequence of $(X^i,\alpha^i)$ that converges to a limit point $(X^0,\alpha^0)$ where $X^0 = (x_1^0, \ldots, x_{d_x}^0)$, $x_j^0 \in\mathcal{X}$, $j = 1\ldots,d_x$, and  $\alpha^0\in \Delta_{d_x-1}$. For simplicity, let us just assume that $(X^i,\alpha^i)$ converges to $(X^0,\alpha^0)$. Consequently, $x=\sum_{j=1}^{d_x} \alpha_j^0 x_{j}^0$. 
	By $f \in  \ell^\infty_S(\X)$ and $\lim_{i \to \infty} x_j^i = x_j^0$,  $f(x_j^0)\leq \liminf_{i \to \infty} f(x_j^i) \leq \liminf_{i\rightarrow \infty} y_i= y^0$, where the second inequality follows from $f(x_j^i)\leq y^i$ for each $x_j^i$ by definition of $\mathcal{I}_f(y^i)$. Hence, $x_j^0\in \mathcal{I}_f(y^0)$, for all $j$. Since $x=\sum_{j=1}^{d_x} \alpha_j^0 x_{j}^0$, and $x_j^0\in \mathcal{I}_f(y^0)$, it must be that $x\in \mathrm{conv}[\mathcal{I}_f(y^0)]$. Therefore, $y^0\in \mathcal{Q}_f(x)$. We conclude that $\min(\mathcal{Q}_f(x)) = y^0$ because $y^0 = \inf \mathcal{Q}_f(x)$ by definition.
	
	We next show that $\mathbf{Q} f \in \ell^\infty_S(\X)$ for any $f \in  \ell^\infty_S(\X)$. It is easy to see that $\mathbf{Q} f  \in \ell^\infty(\X)$ because $\inf_{x\in \X} f(x)\leq \mathbf{Q} f(x)\leq \sup_{x\in \X} f(x)$.
	We prove the result by contradiction. Suppose that $\mathbf{Q} f \not\in \ell^\infty_S(\X)$, i.e., there exists $x^0\in \mathcal{X}$, and a sequence $x^1,\ldots,x^n,\ldots\in \mathcal{X}$ such that $x^i\rightarrow x^0$ and $\liminf_{i\rightarrow\infty} \mathbf{Q} f(x^i) < y^0 - \epsilon$ for some constant $\epsilon>0$ and $y^0 := \mathbf{Q} f(x_0)$. Since $\mathcal{X}$ is compact, there must exist a subsequence of $\{x^{i}\}_{i=1}^\infty$, denoted as $\{x^{n_i}\}_{i=1}^\infty$, such that for all $i$,
	\begin{equation}\label{eq:r1}
		y^i:=\mathbf{Q} f(x^{n_i})\rightarrow c < y^0 - \epsilon.
	\end{equation} 
	For simplicity, we can assume that $x^{n_i} = x^i$ for all $i=1,2,\ldots$.  Similar to the proof above, each $x^i$ can be written as $\sum_{j=1}^{d_x} \alpha_{j}^i x_j^i$, where $\alpha^i = (\alpha_j^i,\ldots,\alpha_{d_x}^i)\in \Delta_{d_x-1}$ and $x_j^i\in \mathcal{I}_f(y_i)$. By compactness of $\mathcal{X}\times \Delta_{d_x-1}$, there exist subsequences of $X^i:=(x_1^i,\ldots,x_{d_x}^i)$ and $\alpha^i$, $i=1,2,\ldots$, such that they converge to $X^0:=(x_1^0,\ldots,x_{d_x}^0)$ and $\alpha^0\in \Delta_{d_x-1}$. Again, for simplicity, we can assume that the subsequences are the sequences $X^i$ and $\alpha^i$. Since $x^i\rightarrow x^0$, it is easy to see that 
	$x^0=\sum_{j=1}^{d_x}\alpha^0_j x_j^0$.
	By $f \in  \ell^\infty_S(\X)$, for $j = 1,\ldots, d_x$,
	\begin{equation}\label{eq:r2}
		f(x_j^0)\leq \liminf_{i\to \infty} f(x_j^i).
	\end{equation}
	Moreover,  $f(x_j^i)\leq y^i$ for all $i$ because $x_j^i\in \mathcal{I}_f(y_i)$. Combining this result with \eqref{eq:r1} and \eqref{eq:r2} yields that $f(x_i^0)\leq \liminf_{i\to \infty} f(x_j^i) \leq\liminf_{i\to \infty} y^i< y^0-\epsilon$ for all $i=1,2,\ldots,d_x$. Hence, $x^0\in \mathcal{I}_f(y^0-\epsilon)$ as $x^0 = \sum_{j=1}^{d_x}\alpha^0_j x_j^0$. By definition of $\mathbf{Q}f$, it must be that $\mathbf{Q}f(x^0)\leq y^0-\epsilon<y^0$, which leads to a contradiction with $y^0 = \mathbf{Q} f(x^0) =  \min \mathcal{Q}_f(x^0)$.
\end{proof}

We now proceed to prove Theorem~\ref{QC}.  We first show that $\mathbf{Q}$ satisfies the three properties of  Definition~\ref{def:SE_operators}.

(1) For any  $f \in \ell^{\infty}_S(\X)$, the lower contour set of $\mathbf{Q}f$ at level $y$ is defined as $\mathcal{I}_{\mathbf{Q}f}(y) = \{x \in \X : \mathbf{Q}f(x) \leq y\} := \mathrm{conv}[\mathcal{I}_f(y)]$, where $\mathcal{I}_f(y)$ is the lower contour set of $f$ at level $y$. Since  $\mathcal{I}_{\mathbf{Q}f}(y)$ is convex for any $y$, $\mathbf{Q}f\in \ell^{\infty}_Q(\mathcal{X})$.

(2) If $f \in  \ell_Q^{\infty}(X)$, then $\mathrm{conv}[\mathcal{I}_f(y)] = \mathcal{I}_f(y)$ for any $y \in \RR$. Thus, the lower contour set of $f$ agrees with the lower contour set of $\mathbf{Q}f$ at any level $y$, which implies that $f=\mathbf{Q}f$.

(3) If $f\geq g$, then $\mathcal{I}_f(y)\supset \mathcal{I}_g(y)$ at any level $y \in \RR$. If follows that $\mathrm{conv}[\mathcal{I}_f(y)] \supseteq \mathrm{conv}[\mathcal{I}_g(y)]$, which means that the level set of $\mathbf{Q}f$ contains the level set of $\mathbf{Q}g$ at any level $y$, i.e., $\mathbf{Q}f \geq \mathbf{Q}g$.

We next show that $\mathbf{Q}$ is $d_\infty$-distance contraction. For any $f,g\in \ell_S^\infty(\X)$, let $\epsilon:=\|f-g\|_\infty$. Then, $g(x)-\epsilon\leq f(x) \leq g(x)+\epsilon$. It is easy to see that $\mathbf{Q} (g+c) = \mathbf{Q}g + c$ for any constant $c$. By order preserving property of $\mathbf{Q}$, $\mathbf{Q}g(x)-\epsilon\leq \mathbf{Q}f(x)\leq \mathbf{Q}g(x)+\epsilon$. It follows that 
$$||\mathbf{Q} g - \mathbf{Q}f||_\infty\leq \epsilon = ||f-g||_\infty.$$\qed


\subsection*{Proof of Theorem~\ref{composition2}}

Without loss of generality, we can assume that the domain $\mathcal{X} = [0,1]^k$. For a vector $w\in \mathbb{R}^k$, denote $w(i)$ as the $i^{th}$ entry of $w$.



(1) We first prove that $\mathbf{QM}$ is reshaping.

For any $f\in \ell^{\infty}_S(\X)$, $\mathbf{QM} f \in \ell_Q^{\infty}(\X)$ by Theorem~\ref{QC}. Therefore, we only need to show that $\mathbf{QM} f \in \ell_M^{\infty}(\X)$. Let $g:=\mathbf{M}  f$. By Theorem~\ref{RA}, $g \in \ell_M^{\infty}(\X)$, so that for any $y\in \mathbb{R}$, the lower contour set $\mathcal{I}_g(y)$  satisfies:
\begin{equation}\label{eq:lm6}
	\textrm{ For any } x\in \mathcal{I}_g(y) \textrm{ and } x'\in \mathcal{X} \textrm{ such that } x'\leq x,  x'\in \mathcal{I}_g(y).
\end{equation}
Therefore, we need to prove that for any $x, y$ such that  $x\in\mathrm{conv}[\mathcal{I}_g(y)]$,  $x'\in \mathrm{conv}[\mathcal{I}_g(y)]$ for any $x' \in \X$ such that $x' \leq x$.

First, we show the following:
\begin{equation}\label{conlm6}
	\textrm{ If } x'=x-t e_i \textrm{ for some } t\geq 0, \textrm{ then \ } x'\in \mathrm{conv}[\mathcal{I}_g(y)],
\end{equation}
where $e_i$ is defined as the $i^{th}$ standard unit vector, $i=1,2,\ldots,k$. Without loss of generality, we can simply assume that $i=1$, so $x'$ and $x$ are the same for all entries except for the first one. By assumption that $\mathcal{X}=[0,1]^k$, we know that the first entry of $x'$, denoted as $x'(1)$, must be non-negative. Since $x\in \mathrm{conv}[\mathcal{I}_g(y)]$,
by Carath\'eodory's theorem, there exists a finite set of points $x_1,\ldots,x_{d_x}$ such that $d_x \leq k+1$, $x_j\in \mathcal{I}_g(y)$, $j=1,2,\ldots,d_x$, and $(\alpha_1,\ldots,\alpha_{d_x}) \in \Delta_{d_x-1}$, such that $\sum_{j=1}^{d_x} \alpha_j x_j =x$. Define $\widetilde{x}_j = (0,x_{j}(2),\ldots,x_{j}(k))$ as a vector which is constructed by replacing the first entry of $x_j$ with $0$. Therefore, $\widetilde{x}:=\sum_{j=1}^{d_x} \alpha_j \widetilde{x}_j  = (0,x(2),\ldots,x(k))$ is a vector such that $\widetilde{x}\leq x'\leq x$. Therefore, there must exist $x_1^*,\ldots,x_{d_x}^*$ such that $x_j^* = (x_{j}^*(1), x_{j}(2),\ldots,x_{j}(k) )$ with $x_{j}^*(1)\in [0,x_{j}(1)]$ such that $\sum_{j=1}^{d_x} \alpha_j x_{j}^*(1) = x'(1)\in [0,x(1)]$. By construction, $x_j^*\in \mathcal{X}$. Since $x_j^*\leq x_j\in \mathcal{I}_g(y)$, (\ref{eq:lm6}) implies that $x_j^*\in \mathcal{I}_g(y)$. It follows that $\sum_{j=1}^{d_x} \alpha_j x_j^* = x'$, and therefore $x'\in  \mathrm{conv}[\mathcal{I}_g(y)]$.

Now, for any $x'\in \mathcal{X}$ such that $x'\leq x$, denote $v:=x-x'\geq 0$. Since $x\in  \mathrm{conv}[\mathcal{I}_g(y)]$, it follows that $x-v(1) e_1\in  \mathrm{conv}[\mathcal{I}_g(y)]$, and then that $(x-v(1)e_1) - v(2)e_2\in  \mathrm{conv}[\mathcal{I}_g(y)]$, \ldots. Therefore, after applying (\ref{conlm6}) for $k$ times, $x' = x-v(1)e_1-v(2)e_2-\cdots-v(k)e_k \in  \mathrm{conv}[\mathcal{I}_g(y)]$.
By (\ref{def:Q}), $\mathbf{Q} g(x):=\min\{y \in \RR : x \in\mathrm{conv}[\mathcal{I}_g(y)]\}$.  Let  $y':=\mathbf{Q} g(x)$ so that $x\in \mathrm{conv}[\mathcal{I}_g(y')]\}$. Then, for any $x'\in \mathcal{X}$ such that $x'\leq x$,  $x'\in \mathrm{conv}[\mathcal{I}_g(y'))\}$. That implies $\mathbf{Q} g (x')\leq y' = \mathbf{Q} g(x)$. Therefore, we conclude that $\mathbf{Q} g = \mathbf{QM} f$ is nondecreasing.

(2) Next, we can show that $\mathbf{QM}$ satisfies the rest of the properties of Definition~\ref{def:SE_operators} using the same argument as in the proof of Theorem~\ref{composition}, replacing  $\mathbf{C}$ with $\mathbf{Q}$. We omit it for brevity.

(3) Finally, since $\mathbf{M}$ and $\mathbf{Q}$ are both $d_\infty$-distance contractions by Theorems \ref{RA} and~\ref{QC}, the composite map $\mathbf{QM}$ is also a $d_\infty$-distance contraction.
%
%
%
%
%
\qed



\subsection*{Proof of Theorem~\ref{cRR}}
We first show that if $f \in \ell_S^{\infty}(\X)$, then $\mathbf{R} f \in \ell_S^{\infty}(\X)$. This result is used in  parts (i) and (iii) to ensure that we apply the $\mathbf{C}$ and $\mathbf{Q}$ operators to lower semi-continuous functions. By  $f\in \ell_S^\infty(\X)$, for any sequence $x^1,...,x^n,...\in \X$ such that $\lim_{i\rightarrow \infty} x^i=x^0\in \X$, $f(x^0) \leq \liminf_{i\rightarrow \infty} f(x^i)$. Therefore, for any $\epsilon>0$, there exists $N$ large enough such that for any $i>N$, $f(x^i)>f(x^0)-\epsilon$. It follows that for any $i>N$, $\mathbf{R} f(x^i) > \mathbf{R} f(x^0) - \epsilon$. Hence, $\liminf_{i\rightarrow \infty} \mathbf{R} f(x^i)\geq \mathbf{R} f(x^0)$, i.e., $\mathbf{R} f \in \ell_S^{\infty}(\X)$.

We now proceed to prove each of the parts of the Lemma. 

Part (i): $\mathbf{C} \mathbf{R} f\in \ell_C^{\infty}(\X)$ by the definition of $\mathbf{C}$ applied to $\mathbf{R} f$ and $\mathbf{R} f \in  \ell_S^{\infty}(\X)$. Moreover, by statement (2) of Lemma~\ref{lemma:DLF-basic}, there exist $d\leq k+2$ points $x_1,\ldots,x_d \in \mathcal{X}$ and $\alpha_1>0,\ldots,\alpha_d>0$, $\sum_{i=1}^d \alpha_i=1$ such that $\mathbf{C} \mathbf{R} f(x) = \sum_{i=1}^d \alpha_i \mathbf{R} f(x_i)$, where $x=\sum_{i=1}^d \alpha_i x_i$. Therefore, $\mathbf{C} \mathbf{R} f\in \ell_R^{\infty}(\X)$ because $\mathbf{R} f(x_i)\in [\underline{f}, \bar{f}]$ for all $x_i\in \X$. It is easy to see that $\mathbf{C}\mathbf{R}$ also satisfies invariance and order preservation because it is a composition of two operators that satisfy these properties. Indeed, if $f\in \ell_{CR}^\infty(\X)$ then  $\mathbf{C}\mathbf{R} f= \mathbf{C} (\mathbf{R} f) = \mathbf{C} f= f$, and if $g\geq f$, $g,f \in \ell_S^{\infty}(\X)$, then $\mathbf{R} g \geq \mathbf{R} f$, $\mathbf{R}g, \mathbf{R}f \in \ell_S^{\infty}(\X)$, and $\mathbf{C}(\mathbf{R} g)\geq \mathbf{C}(\mathbf{R} f)$.
%
%
%
Hence, $\mathbf{C} \mathbf{R}$ is $\ell^\infty_{CR}$-enforcing with respect to $\ell_S^{\infty}(\X)$.  By Theorems \ref{RR} and~\ref{DFL}, both $\mathbf{C}$ and $\mathbf{R}$ are $d_\infty$-distance contractions. Therefore, the composite map $\mathbf{C} \mathbf{R}$ must be a $d_\infty$-distance contraction.

%

Part (ii):  $\mathbf{R} f\in \ell_R^{\infty}(\X)$ by the definition of $\mathbf{R}$. The $\mathbf{M}$ operator is the average of sorting operators, where each sorting operator does not change the range of the function. Therefore, $\mathbf{M}\mathbf{R} f \in  \ell_{MR}^{\infty}(\X)$. As in part (i), it is easy to see that $\mathbf{M}\mathbf{R}$ also satisfies invariance and order preservation because it is a composition of two operators that satisfy these properties.
%
%
Hence, $\mathbf{M}\mathbf{R}$ is $\ell^{\infty}_{MR}$-enforcing with respect to $\ell^{\infty}(\X)$. Since $\mathbf{M}$ and $\mathbf{R}$ are both $d_p$-distance contractions for any $p\geq 1$, it must be that $\mathbf{M}\mathbf{R}$ is $d_p$-distance contraction for any $p\geq 1$.

Part (iii):  let $f \in \ell_S^\infty(\X) \cap  \ell_R^\infty(\X)$. By definition $\mathbf{Q} f(x) = \min\{y \in \mathbb{R}: x\in \mathrm{conv}[\mathcal{I}_f(y)]\}$, so that $\bar{f} \in \mathcal{Q} _f(x):=\{y \in \mathbb{R}: x \in \mathrm{conv}[\mathcal{I}_f(y)]\}$ and $\mathbf{Q} f(x) \leq \bar{f}$. Moreover, for any $y<\underline{f}$, $\mathcal{I}_f(y)=\emptyset$, so that $y\notin \mathcal{Q} _f(x)$ and $\mathbf{Q} f(x)\geq \underline{f}$. Consequently, $\mathbf{Q} f \in \ell^\infty_R(\X)$. On the other hand, if $f \in \ell_S^\infty(\X)$  then $\mathbf{Q} f \in \ell_S^\infty(\X)$ by Theorem~\ref{QC}. Combining the previous results, $\mathbf{Q} f  \in \ell_S^\infty(\X) \cap \ell^\infty_R(\X)$. Since $\mathbf{R} f \in \ell_S^\infty(\X) \cap \ell^\infty_R(\X)$ for any $f \in \ell_S^\infty(\X)$, then $\mathbf{Q} \mathbf{R} f \in \ell_S^\infty(\X) \cap \ell^\infty_R(\X)$. As in part (i), it is easy to see that $\mathbf{Q}\mathbf{R}$ also satisfies invariance and order preservation because it is a composition of two operators that satisfy these properties.
%
%
Hence, $\mathbf{Q}\mathbf{R}$ is $\ell^{\infty}_{QR}$-enforcing with respect to $\ell_S^{\infty}(\X)$. By Theorems \ref{RR} and~\ref{QC}, both $\mathbf{R}$ and $\mathbf{Q}$ are $d_\infty$-distance contractions. Therefore, the composite map $\mathbf{Q} \mathbf{R}$ must be a $d_\infty$-distance contraction.\qed

\subsection*{Proof of Theorem~\ref{transformation}} We need to show the 3 properties of Definition~\ref{def:SE_operators}. (1) By definition, $\widetilde{f}:=\mathbf{O}(h\circ f) \in \ell_1^\infty(\X)$. Therefore, $\mathbf{O}_h(f)=h^{-1}\circ f\in \ell_{h,1}^\infty(\X) $ and reshaping  holds. (2) If $f\in \ell^\infty_{h,1}(\X)$, then $h\circ f\in \ell_1^\infty(\X)$ and $\mathbf{O}(h\circ f) = h\circ f$ by the invariance property of $\mathbf{O}$. Hence, $\mathbf{O}_h = h^{-1}\circ h\circ f =f$. (3) Since $h$ is a real-valued bijection, it must be that $h$ and $h^{-1}$ are both strictly increasing or strictly decreasing.  We can assume that they are strictly increasing without loss of generality. For any $f,g\in \ell_{h,0}^{\infty}(\X)$ such that $f\geq g$, 
$h\circ f\geq h\circ g$, $\mathbf{O}(h\circ f)\geq \mathbf{O}(h\circ g),$
and $h^{-1}\circ \mathbf{O}(h\circ f)\geq h^{-1}\mathbf{O}(h\circ g)$. Therefore, $\mathbf{O}_h$ is order-preserving.

To show contractivity, let $\widetilde{f}:=\mathbf{O}_h(f)$ and $\widetilde{g}:=\mathbf{O}_h(g)$. Then, 
$\rho_h(\widetilde{f},\widetilde{g}) = \rho(h\circ \widetilde{f}, h\circ \widetilde{g}) = \rho(\mathbf{O}(h\circ f), \mathbf{O}(h\circ g))$. Since $\mathbf{O}$ is a $\rho$-distance contraction, $\rho(\mathbf{O}(h\circ f), \mathbf{O}(h\circ g))\leq \rho(h\circ f,h\circ g) = \rho_h(f,g)$. Hence, $\mathbf{O}_h$ is a $\rho_h$-distance contraction. \qed

\subsection*{Proof of Corollary~\ref{CI}}


(1) We show that the event $\{f_l \leq f_0 \leq f_u\}$ implies the event $\{\mathbf{O} f_l \leq f_0 \leq \mathbf{O} f_u\}$ by the properties of the $\mathbf{O}$-operator. Indeed, by order preservation, $\{f_l \leq f_0 \leq f_u\} $  implies that $\{\mathbf{O} f_l \leq \mathbf{O} f_0 \leq \mathbf{O} f_u\}$, which is equivalent to $\{\mathbf{O} f_l \leq f_0 \leq \mathbf{O} f_u\}$ because $\mathbf{O} f_0  = f_0$ by invariance.


(2) The result follows from $\rho(\mathbf{O} f_0, \mathbf{O} f) \leq \rho(f_0, f)$ by $\rho$-distance contraction and  $\mathbf{O} f_0  = f_0$ by invariance.


(3) The result follows directly by $\rho$-distance contraction.
\qed

\bigskip

\bibliography{ref_app}
	
\end{document}